\DeclareSIUnit\electronvolt{e\hspace{-0.08em}V}
\newtheorem{lem}{Lemma}[section]
\newtheorem{thm}{Theorem}[section]
\newtheorem{prop}{Proposition}[section]
\newtheorem{cor}{Corollary}[section]
\newtheorem{defx}{Definition}[section]
\newtheorem{ex}{Example}[section]
\newcommand{\w}{\omega}
\newcommand{\J}{\mathcal{J}}
\newcommand{\R}{\mathbb{R}}
\newcommand{\Cx}{\mathbb{C}} 
\newcommand{\wre}{\omega_{\text{red}}}
\newcommand{\nr}{n_R}
\newcommand{\oo}{\mathcal{O}}
\newcommand{\tth}{\theta}
\newcommand{\ber}{\text{Ber}\,}
\newcommand{\tred}{\text{red}}
\newcommand{\msp}{\mathfrak{M}_{g;\nr}}
\newcommand{\stsh}{\mathcal{O}}
\newcommand{\parx}{\frac{\partial}{\partial x}}
\newcommand{\parxi}{\frac{\partial}{\partial \xi}}
\newcommand{\dis}{\mathcal{D}}
\newcommand{\F}{\mathcal{F}}
\newcommand{\Ll}{\mathcal{L}}
\newcommand{\Kk}{\mathcal{K}}
\newcommand{\ee}{\mathcal{E}}
\newcommand{\bz}{\bar{z}}
\newcommand{\bpk}{[\partial_{z_k} \, | \, \partial_{\tth_k}]}
\newcommand{\M}{\mathcal{M}}
\newcommand{\nn}{\text{Norm}}
\newcommand{\Pp}{\mathbb{P}}
\begin{document}
\bibliographystyle{hunsrt} 

\makeatletter\@addtoreset{chapter}{part}\makeatother




\phd 

%
\title{\bf The Super Mumford Form in the Presence of Ramond and Neveu-Schwarz Punctures}
\author{Daniel Joseph Diroff}
\director{Alexander A. Voronov}

\submissionmonth{July}
\submissionyear{2019}

\abstract{

We generalize the result of \cite{vormum} to give an expression for the super Mumford form $\mu$ on the moduli spaces of super Riemann surfaces with Ramond and Neveu-Schwarz punctures. In the Ramond case we take the number of punctures to be large compared to the genus. We consider for the case of Neveu-Schwarz punctures the super Mumford form over the component of the moduli space corresponding to an odd spin structure. The super Mumford form $\mu$ can be used to create a measure whose integral computes scattering amplitudes of superstring theory. We express $\mu$ in terms of local bases of $H^0(X, \w^j)$ for $\w$ the Berezinian line bundle of a family of super Riemann surfaces.}

\copyrightpage       

\acknowledgements{

I'd like to thank the organizers of the 2015 Supermoduli workshop at the Simons Center for Geometry and Physics and those who put in the effort to put those excellent lectures online. I am grateful to the speakers of the workshop R. Donagi, P. Deligne, E. Witten, E. D'Hoker and D. H. Phong whose efforts illuminated many interesting concepts. I'd especially like to thank E. Witten for his insight and several valuable comments. Most of all, I'd like to thank A. Voronov for introducing me to the subject, offering guidance and for the frequent helpful discussions. 

}


\beforepreface


\afterpreface


\include{chapters/intro_true}





\chapter{Introduction} \label{intro}
\label{intro}

Due to relatively recent computations done by E. D'Hoker and D. H. Phong \cite{DPhong} and new ideas pushed forward by E. Witten \cite{wit4}, the role of supergeometry in superstring perturbation theory has been revived from what it once was in the 1980s. However, the task of computing superstring scattering amplitudes have proved difficult due to many complications boiling down to the fact that the underlying supergeometry was not completely understood. 

Scattering amplitudes in superstring theory are expressed as Berezin integrals over various moduli spaces of super Riemann surfaces. One might hope that such integrals would be computable via expressing supermoduli space as a fiber bundle over a bosonic reduced space, allowing one to integrate in the odd directions fiberwise. In fact, this is exactly the technique utilized in the D'Hoker and Phong results. However, this assumption was only valid for low genus, as it was shown in a recent paper by R. Donagi and E. Witten \cite{donwit} that in general supermoduli space \emph{is not} a fiber bundle over its reduced space. This notion is significant in supergeometry and is known as \emph{splitness}.

Essentially, one says a supermanifold is split if it can be expressed as such a fiber bundle over a bosonic base. It is known that every $C^{\infty}$ supermanifold is indeed split \cite{man1}. Thus in principal the theory of smooth supermanifolds is contained in the theory of exterior algebra vector bundles over a smooth manifold. However, holomorphic methods have proved to be very useful in studying super Riemann surfaces and their moduli as holomorphic or complex supermanifolds need not be split. Thus holomorphic supergeometry is central in understanding computations of superstring scattering amplitudes.

In bosonic string theory, the $g$ loop contribution to the partition function can be written as the integral
$$ Z_g = \int_{\mathcal{M}_g} d\pi_g, $$
where $\mathcal{M}_g$ is the usual moduli stack of Riemann surfaces of genus $g$ and $d\pi_g$ is the so-called Polyakov measure. Suppose we have a universal family $\mathcal{C}_g$ over $\mathcal{M}_g$ and let $\pi: \mathcal{C}_g \to \mathcal{M}_g$ denote the projection. 

In a famous theorem due to Belavin and Knizhnik, the Polyakov measure was shown to be the modulus squared of a trivializing section of a holomorphic line bundle on $\mathcal{M}_g$,
$$ d\pi_g = \mu_g \wedge \overline{\mu}_g. $$
The form $\mu_g$ is called a Mumford form and it is a section exhibiting the Mumford isomorphism
$$ \left( \text{det }  \pi_*\Omega \right)^{13} \otimes \left( \text{det }  R^1\pi_*\Omega \right)^{-13} \cong \det \pi_*\Omega^2 \otimes \left( \det R^1\pi_*\Omega^2 \right)^{-1},$$
where $\Omega$ is the sheaf of relative differentials on $\mathcal{C}_g$. Here and henceforth, powers of vector bundles, sheaves and vector spaces stand for tensor powers.

In the super case, the object one integrates over in computations of superstring scattering amplitudes is slightly more complicated than simply $\mathfrak{M}_g$, see \cite{wit1}. Nevertheless there still is a relevant canonical super Mumford isomorphism,
$$ (\ber \pi_*\w )^5 \otimes (\ber R^1\pi_*\w )^{-5} \cong \ber \pi_*\w^3 \otimes \left( \ber R^1\pi_*\w^3 \right )^{-1}  $$
for $\w$ the relative Berezinian sheaf of a family of super Riemann surfaces of genus $g$. The trivializing section that exhibits the above isomorphism is called the super Mumford form. Such a form is useful in the super case in very much the same way as that of the bosonic Mumford form, as sections of $\ber \pi_*\w^3$ are super volume forms on $\mathfrak{M}_g$. In a paper by A. Voronov \cite{vormum}, an explicit formula of the super Mumford form was computed over the odd-spin component of $\mathfrak{M}_g$. 

In this paper we expand on those ideas and produce explicit formulas for the analogous super Mumford forms over the moduli spaces $\mathfrak{M}_{g;n_R}$ and $\mathfrak{M}_{g;n_{NS}}$ of genus $g \geq 2$ super Riemann surfaces with Ramond or Neveu-Schwarz punctures. In both cases we work under some assumptions regarding the local freeness of the sheaves $R^i\pi_*\w^j$. The specifics are given at the end of Section \ref{susy_prelim}. In the Ramond case we furthermore impose the condition that the number of Ramond punctures $\nr$ be strictly greater than $6g-6$. 

We then discuss how these formulae give rise to a physically relevant measure. By explicit formulas, we mean those written in terms of chosen sections of natural sheaves defined on the moduli spaces.

The main results (Theorem \ref{mainthm1} and Corollary \ref{maincor}) are found in Chapters \ref{r_case} and \ref{ns_case} where the explicit formulas of the relevant super Mumford forms are presented. A review of the basic theories of super mathematics are presented in Chapters \ref{salg} and \ref{sgeo}. Chapter \ref{susy_prelim} is devoted to presenting the neccessary theory of super Riemann surfaces needed for the main results of the paper.  Appendices appear after in Chapter \ref{appendices} containing a few technical lemmas used in the main arguments as well as a general proof of the super Mumford isomorphism. This work has been published and will appear in the Journal of Geometry and Physics \cite{dan_paper}. 

\chapter{Superalgebra} \label{salg}
\label{superalg}


\section{Super Linear Algebra}

We begin by defining the basic algebraic objects one works with in super geometry. For the remainder of the section we let $k$ be a field of characteristic not equal to $2$. We closely follow the Chapter 1 of \cite{qfst}.

\bigskip

\subsection{Super Vector Spaces}~

A \emph{super $k$-vector space} $V$ is a $\mathbb{Z}/2\mathbb{Z}=\mathbb{Z}_2$ graded vector space over $k$,
$$
V = V_0 \oplus V_1.
$$
Elements $v \in V_i$ are called \emph{homogeneous}. If $v \in V_0$ then it is called \emph{even} and if $v \in V_1$ it is called \emph{odd}. For any homogeneous $v$ we denote by $|v|$ its degree, also called its parity. Frequently we will use the notation $v = v_0 + v_1$ to denote the decomposition of an arbitrary element $v \in V$ into its even and odd parts.

A map between two super vector spaces $T: V \to W$ is a linear map that preserves the grading, $T(V_0) \subset W_0$ and $T(V_1) \subset W_1$. We can then see that super vector spaces over $k$ form an abelian category. We define the dimension $\text{dim}\, V$ to be the pair of integers
$$
\text{dim}\,  V = \text{dim} \, V_0 \, | \, \text{dim}\,  V_1
$$
and the superdimension $\text{sdim} V$ by the single integer
$$
\text{sdim}\, V = \text{dim} \, V_0 - \text{dim}\, V_1.
$$

We have a \emph{parity reversing functor} $\Pi$ taking a supervector space $V$ to $\Pi V$ defined by
$$
(\Pi V)_0 = V_1, \hspace{.5cm} (\Pi V)_1 = V_0.
$$
Super vector spaces admit tensor products defined in the obvious way
$$
(V \otimes W)_0 = (V_0 \otimes W_0) \oplus (V_1 \otimes W_1)
$$
$$
(V \otimes W)_1 = (V_0 \otimes W_1) \oplus (V_0 \otimes W_1).
$$
Important in super algebra is the \emph{sign rule} which is a specific choice of commutivity isomorphism different from the classical one
$$
c_{V,W}: V \otimes W \to W \otimes V
$$
$$
v \otimes w \to (-1)^{|v| |w| } w \otimes v.
$$

As is common in the subject, when giving definitions one frequently works with homogeneous elements and it is understood to extend by linearity.

We denote by Hom$(V,W)$ the set of all linear maps $T: V \to W$ that preserve the grading. $\text{Hom}(V,W)$ is best understood as a usual vector space, or a super vector space with trivial odd part. The category of super vector spaces admits an internal hom, denoted \underline{Hom}$(V,W)$ defined by simply considering all $k$-linear maps graded via
$$
\underline{\text{Hom}}(V,W)_0 = \text{Hom}(V,W)
$$
$$
\underline{\text{Hom}}(V,W)_1 = \text{Hom}(V,\Pi W) = \text{Hom}(\Pi V, W).
$$
In other words \emph{even} elements of $\underline{\text{Hom}}(V,W)$ (also called even maps) are linear maps that preserve the $\mathbb{Z}_2$-grading, while \emph{odd} elements (called odd maps) reverse the grading. The dual $V^*$ of a super vector space $V$ is defined then by $V^* := \underline{\text{Hom}}(V,k)$.

A linear map $T: V \to W$ after choosing homogeneous bases for $V$ and $W$, corresponds to a block matrix
$$
T \sim
\begin{pmatrix}
T_{00} & T_{01} \\
T_{10} & T_{11}
\end{pmatrix}
$$
The decomposition of $T$ into its even and odd parts $T = T_0 + T_1$ then corresponds to 
$$
\begin{pmatrix}
T_{00} & T_{01} \\
T_{10} & T_{11}
\end{pmatrix}
=
\begin{pmatrix}
T_{00} & 0 \\
0 & T_{11}
\end{pmatrix}
+
\begin{pmatrix}
0 & T_{01} \\
T_{10} & 0
\end{pmatrix}.
$$

\begin{ex} $k^{m|n}$
\end{ex}

The most basic example of a super vector space is simply 
$$
k^{m|n} := \bigoplus^m k \oplus \left ( \bigoplus^n \Pi k \right ).
$$
We caution that the notation $k^{m|n}$ is used in two distinct ways in this context and one must take care to avoid confusion. In what follows the use will be clear from context.

\bigskip

\section{Superalgebras and Modules over Them}

A \emph{superalgebra} $A$ is a super vector space together with a super vector space morphism $A \otimes A \to A$ called the product. For our purposes we shall always assume that the superalgebras we consider are associative and possess a unit. We say that $A$ is \emph{super commutative} (or simply commutative) if the product morphism commutes with the commutivity isomorphism $c_{A,A}$. Specifically this is the requirement
\begin{equation}
ab = (-1)^{|a| |b|} ba
\end{equation}
for homogeneous $a,b$.

The tensor product $A \otimes B$ of super superalgebras is again a superalgebra with product 
$$
(a \otimes b)(a' \otimes b') = (-1)^{|a'| |b|} aa' \otimes bb'.
$$
This is an example of the general rule of thumb in super mathematics that when two quantities are swapped a power of $-1$ to the product of their parities appears. This general philosophy can help one in keeping signs straight later on.

In any superalgebra $A$ we have the \emph{supercommutator} (or simply commutator/bracket) $[\cdot, \cdot]$
$$
[a,b] := ab - (-1)^{|a| |b|}ba
$$
so that $A$ is supercommutative if and only if the supercommutator is trivial. The \emph{anti-supercommutator} (or anti-commutator/anti-bracket) is
$$
\{a , b\} := ab + (-1)^{|a| |b|}ba.
$$
We denote by $[A,A]$ the sub-superalgebra of $A$ generated by all expressions of the form $[a,a]$ and similarly for $\{A,A\}$.

Suppose now $V$ is a super vector space. Let $T(V)$ denote the tensor algebra of $V$, $T(V) = \oplus_n (V^{\otimes n})$ with the usual algebra structure given by concatenation of tensors, then define the \emph{symmetric} $S(V)$ and \emph{exterior} $\bigwedge (V)$ algebras of $V$ as follows
$$
S(V) := T(V)/(v \otimes w - (-1)^{|v| |w|} w \otimes v)
$$
$$
\bigwedge (V) := T(V)/(v \otimes w + (-1)^{|v| |w|} w \otimes v)
$$
where the denominators in the above quotients are the ideals generated by all expressions of those forms for $v,w \in V$. Notice that because of the sign rule we get the peculiar formula $S(k^{0|n}) = \Pi \bigwedge(k^{n|0})$, or in more generality, $S(\Pi V) = \Pi \bigwedge (V)$.

The algebras $T(V), S(V)$ and $\bigwedge(V)$ are all naturally $\mathbb{Z}$-graded and hence can also be thought of as superalgebras. Sometimes the notation $T^{\bullet}(V) = T(V)$ is used to emphasise the grading.

\begin{ex} The Sheaf of Differential Forms $\Omega_M^{\bullet}$
\end{ex} On a manifold $M$ (real or complex) one can consider the sheaf of differential forms. This is a sheaf of $\mathbb{Z}$-graded algebras that are graded commutative. Reduction modulo two of the grading allows one to view $\Omega_M$ as a sheaf of supercommutative algebras.

\begin{ex} The Polynomial Superalgebra $A = k[x_1,\dots,x_m \, | \, \xi_1, \dots, \xi_n ]$.
\end{ex} The superalgebra $k[x_1, \dots, x_m \, | \, \xi_1, \dots, \xi_m]$ is the free superalgebra generated by the even quantities $x_i$ and odd quantities $\xi_i$. Concretely every element $f \in A$ can be written
$$
f = \sum_{I} f_I(x_1,\dots,x_m) \xi^I
$$
where $I \subset \{1, \dots, n \}$ is a multi-index and $\xi^I = \xi_{i_1}\dots\xi_{i_k}$ if $I = \{i_1 < \dots < i_k \}$. In fact a more "coordinate" free viewpoint would be to consider $S(V^{m|n})$ where $V^{m|n}$ is any $m| n$ dimensional super vector space. After choosing a basis of $V^{m|n}$, one can construct an isomorphism $S(V^{m|n}) \cong A$. This example will be ubiquitous is what follows.

\bigskip

\subsection{Modules over Supercommutative Algebras}~

Let $A$ be a superalgebra. $M$ is a \emph{supermodule} over $A$ if it is a module in the usual sense over $A$, is $\mathbb{Z}_2$-graded $M = M_0 \oplus M_1$ and if the module structure respects parity in the sense that $|am| = |a||m|$ for all $a\in A, m \in M$ homogeneous. We will typically refer to $M$ as simply a module over $A$ and omit the phrase "super". All of the usual notions/operations of modules have super analogues and one can write them down explicitly if they follow the philosophy of the sign rule. For instance the tensor product $M \otimes N$ of two modules over $A$ makes sense as a supermodule with $\mathbb{Z}_2$-grading identical to that of super vector spaces. 

For any positive integers $m,n$, we let $A^{m | n}$ denote the $A$ module
$$
A^{m | n} = \bigoplus_{i=1}^n A \oplus \bigoplus_{i=1}^m \left( \Pi A \right ). 
$$
We say an $A$ module $M$ is free if it is isomorphic to $A^{m|n}$ for some $m,n$. We will not be concerned with free modules of infinite rank. The pair $m |n$ is the \emph{rank} of the module $M$.

The dual of an $A$ module $M$ is $M^* = \underline{\text{Hom}}(M,A)$. Care must be taken when discussing the canonical evaluation maps
$$
M^* \otimes M \to A, \hspace{.5cm} M \otimes M^* \to A.
$$
According to the sign rule the first map is the usual one $f \otimes m \to f(m)$ while the second is $m \otimes f \to (-1)^{|f| |m| } f(m)$ is potentially different by a sign. Of course these are compatible with the commutivity isomorphism $c_{M^*, M}$. If $M$ is free of finite rank then one has the usual isomorphism
$$
M \otimes M^* \cong \underline{\text{End}}(M)
$$
$$
m \otimes f \mapsto (m' \mapsto mf(m')),
$$
however in view of the two possible orderings above this is arguable not the most natural isomorphism. In this light we will in this paper work systematically with evaluation map $M^* \otimes M \to A$ in this order, which does produce a sign in the isomorphism
\begin{equation} \label{endiso}
M^* \otimes M \cong \underline{\text{End}}(M)
\end{equation}
$$
(f \otimes m) \mapsto \left ( m' \mapsto (-1)^{|m||f|}mf(m') \right ).
$$

If $M$ is free with homogeneous basis $\{ e_1, \dots, e_m \, | \, \tth_1, \dots, \tth_n \}$ then we define the \emph{left dual} basis or simply the \emph{dual} basis $\{ e_1^*,\dots,e_m^* \, | \, \tth_1^*,\dots,\tth_m^* \}$ for $M^*$ by the usual relations
$$
e^*_j(e_i) = \delta_{ij}, \hspace{.5cm} \tth_j^*(\tth_j) = \delta_{ij}
$$
$$
e^*_j(\tth_i) = 0, \hspace{.5cm} \tth_j^*(e_j) = 0.
$$
We emphasize we are choosing the dual basis to be natural with respect to $M^* \otimes M \to A$ rather than $M \otimes M^* \to A$ which would lead to a notion of a \emph{right} dual basis and essentially amount to changing the $\tth_j^*$ by a sign.

Given a morphism $T: M \to N$ between two $A$ modules in classical algebra, one has a canonical dual map $T^*: N^* \to M^*$, which differs from the usual formula by following the sign rule,
\begin{equation}
T^*(f)(m) = (-1)^{|T| |f|}f(T(m)),
\end{equation}
if $T$ is an arbitrary (not necessarily even nor odd) map from $M \to N$. This will produce a slightly different form for the "transpose" of a matrix in superalgebra.

\bigskip

\section{The Berezinian and Related Constructions}

\subsection{The Supertranspose}~

Suppose $M = M^{p|q}$ and $N = N^{r|s}$ are free $A$-modules of ranks $p|q$ and $r|s$ respectively. Suppose we have an $A$-module map $T: M \to N$. Let $e_1 \dots e_{p+q}$ and $f_1, \dots, f_{r+s}$ be homogeneous bases of $M$ and $N$ respectively such that the first $p$ (resp. $r$) elements are even and the last $q$ (resp. $s$) are odd. To each such morphism $T$ and choices of bases we associate the $(r+s) \times (p+q)$ matrix $[T] = (T^i_j)$ defined by the equations
$$
Te_j = \sum_{i=1}^{r+s} f_i \, T_{j}^i.
$$

Note how the entries $T^i_j$ appear \emph{to the right} of the basis elements $f_i$. We aim now to relate the entries of the matrix $[T]$ with those of the matrix of its dual map $[T^*]$ with respect to the dual of the chosen bases. The relationship derived below will then be regarded as the \emph{supertranspose} and then will be thought off as an operation on super matrices. Indeed by definition the matrix $[T^*] = (T^{*i}_j)$ is defined by the equations
$$
T^*f^*_j = \sum_{i=1}^{p+q} e^*_i \, T^{*i}_j.
$$
This allows us to compute the expression $T^*f_j^*(e_i)$ in two ways. First see that
\begin{equation}
\begin{split}
T^*f_j^*(e_i) & =  \left( \sum_{k=1}^{p+q} e^*_k \, T^{*k}_j \right ) (e_i) \\
& = (-1)^{|T^{*i}_j| |e_i^*|} T^{*i}_j.
\end{split}
\end{equation}
On the other hand,
\begin{equation}
\begin{split}
T^*f_j^*(e_i) & =  (-1)^{|T| |f_j^*|}f_j^*(T(e_i)) \\
& = (-1)^{|T| |f_j^*|}f_j^* \left( \sum_{k=1}^{r+s} f_k \, T_{i}^k \right ) \\
& = (-1)^{|T| |f_j^*|} T^{j}_i.
\end{split}
\end{equation}
In the end, we conclude
\begin{equation} \label{stranspose}
T^{*i}_j = (-1)^{|T^{*i}_j| |e_i^*| + |T| |f_j^*|} T^j_i.
\end{equation}
Thus, if we write $[T]$ in block matrix form
$$
[T] =  
\begin{pmatrix}
A & B \\
C & D
\end{pmatrix}.
$$
then the matrix of $[T^*]$ is easy to identify according to (\ref{stranspose}). If $T$ is an even $|T| = 0$, then
$$
[T^*] =  
\begin{pmatrix}
A^t & C^t \\
-B^t & D^t
\end{pmatrix},
$$
with $A^t$ denoting the usual transpose of $A$. If $T$ is odd $|T| = 1$,
$$
[T^*] =  
\begin{pmatrix}
A^t & -C^t \\
B^t & D^t
\end{pmatrix}.
$$
This motivates the general definition of the \emph{supertranspose} of a general supermatrix $X$ denoted by $X^{st}$, writing $X$ is block form as above,
$$
X^{st} :=  
\begin{pmatrix}
A & B \\
C & D
\end{pmatrix}^{st}
=
\begin{pmatrix}
A_0^t & C_1^t \\
-B_1^t & D_0^t
\end{pmatrix}
+
\begin{pmatrix}
A_1^t & -C_0^t \\
B_0^t & D_1^t
\end{pmatrix}.
$$
Where the block matrices above were decomposed into their even and odd parts. We then have $[T^*] = [T]^{st}$.

\medskip

\subsection{The Supertrace}~

The evaluation map $M^* \otimes M \to A$ gives rise to an $A$-linear map $\underline{\text{Hom}}(M,M) = \underline{\text{End}}(M) \to A$ via the identification (\ref{endiso}) above. This map is called the \emph{supertrace}. In terms of the matrix
$$
X = 
\begin{pmatrix}
A & B \\
C & D
\end{pmatrix},
$$
this is simply
$$
\text{str}\, X := \text{tr}\,A - \text{tr}\,D.
$$
The supertrace posseses several nice properties such as 

\begin{enumerate}
\item $\text{str}\, X^{st} = \text{str}\,X$ \\ 
\item $\text{str} \, X X' = (-1)^{|X| |X'|}\text{str} \, X' X $.
\end{enumerate}

Others can be found, with proof, in \cite{man1}, however most should not concern us here.

\medskip

\subsection{The Berezinian of an Automorphism}~

\medskip

In supermathematics, the Berezinian is an analog of the classical determinant, it plays a vital role in what follows. We let $GL(p|q, A)$ denote the group of all automorphisms of the free $A$-module $A^{p|q}$, then $GL(p|q, A)$ is naturally identified with all invertible $(p+q) \times (p+q)$ matrices $X$ which we write in the standard block form,
$$
X =
\begin{pmatrix}
A & B \\
C & D
\end{pmatrix}.
$$
We then define the Berezinian of $X$ to be
\begin{equation}
\ber X = 
\ber 
\begin{pmatrix}
A & B \\
C & D
\end{pmatrix}
:= \text{det}\, (A - BD^{-1}C) (\text{det}\,D)^{-1}.
\end{equation}
Note that this definition makes sense as the matrix $X$ is invertible if and only if both $A$ and $D$ are. Clearly $\ber T \in A^{\times}_0$ is an even invertible element of $A$ and in fact the Berezinian gives a group homomorphism 
$$ 
GL(p|q, A) \to GL(1|0, A) = A_0^{\times}
$$ 
$$
\ber XX' = \ber X \, \ber X'.
$$
The proof of this fact can be found in many places including \cite{man1}, the argument is somewhat tedious and we will not show it here. However taking for granted the multiplicative property of the Berezinian, one can motivate the definition (7) by defining for $X$ strictly upper or lower triangluar, $\ber X = \det A (\det D)^{-1}$, and notice the trivial factorization
\begin{equation}
\begin{pmatrix}
A & B \\
C & D
\end{pmatrix}
=
\begin{pmatrix}
I & BD^{-1} \\
0 & I
\end{pmatrix}
\begin{pmatrix}
A - BD^{-1}C & 0 \\
0 & D
\end{pmatrix}
\begin{pmatrix}
I & 0 \\
D^{-1}C & I
\end{pmatrix},
\end{equation}
we remark again that $X$ invertible implies both $A$ and $D$ are invertible. In fact, an alternative factorization to $(8)$ exists
\begin{equation}
\begin{pmatrix}
A & B \\
C & D
\end{pmatrix}
=
\begin{pmatrix}
I & 0 \\
CA^{-1} & I
\end{pmatrix}
\begin{pmatrix}
A & 0 \\
0 & D-CA^{-1}B
\end{pmatrix}
\begin{pmatrix}
I & A^{-1}B \\
0 & I
\end{pmatrix}.
\end{equation}
This then yields an alternative calculation for the berezininan
$$
\ber X
=
\ber 
\begin{pmatrix}
A & B \\
C & D
\end{pmatrix}
= \text{det}\, A (\text{det}\,(D - CA^{-1}B))^{-1}.
$$
The analogy with the classical determinant can be seen as one has
\begin{enumerate}
\item $\ber T^{st} = \ber T$ \\
\item $\ber \text{exp} \,(T) = \text{exp}\, (\text{str}\, T).$
\end{enumerate}

In general for any even invertible automorphism $T$ of a free $A$-module $M$, we define $\ber T$ to be $\ber [T]$ where the matrix $[T]$ is expressed in any basis. The fact that the Berezinian is a group homomorphism implies that this is well defined. 

\medskip

\subsection{The Berezinian of a Free Module}~

In classical algebra, given a free $A$-module $M$ of rank $n$ there is the notion of the \emph{determinant} of $M$, $\text{det}\, M$ given as the maximal exterior power $\bigwedge^n M$. In this context we have a useful interpretation of the determinant of an automorphism $T$, namely as the action of $T$ on $\text{det}\, M$. It is possible to find analogous statements regarding the Berezinian.

Returning to the super case, for $M$ free of rank $p|q$ with $q>0$, there \emph{is no} top exterior power of $M$, simply because odd generators commute (in the classical sense) in $\bigwedge ^{\bullet} M$ and thus, for example, given an odd generator $\theta$ of $M$, $\theta^k$ does not vanish for any $k$. Nevertheless a super analog of the determinant exists and we denote it $\ber M$.

We explain two approaches as explained in \cite{qfst}, the first is a concrete realization. To every homogeneous basis $e_1,\dots e_p \, | \, f_1, \dots f_q$ we obtain an element of $\ber M$ denoted
$$
[e_1, \dots, e_p \, | \, f_1, \dots, f_q],
$$
subject to the relations given by
$$
[Te_1, \dots, Te_p \, | \, Tf_1, \dots, Tf_q] = \ber T \, [e_1, \dots, e_p \, | \, f_1, \dots, f_q],
$$
for $T$ an automorphism of $M$. The $A$-module construted will be considered of rank $1|0$ if $q$ is even and $0|1$ if $q$ is odd. This realization is the one we will most frequently use. 

One should notice that applying the above procedure to a free $M$ of rank $p|0$ over an purely even algebra $A = A_0$ that this recovers the classical determinant.

Alternatively a basis independent definition was given in \cite{man1} and discussed in \cite{qfst}. The motivation for the following is that in the non-super case, for an ordinary free $A$-module $M$ or rank $n$, one can see by a Koszul complex a canonical isomorphism
$$
\text{Ext}^n_{S^{\bullet}(M^*)}(A, S^{\bullet}(M^*)) = \bigwedge^n M,
$$
where $A$ is given the structure of a $S^{\bullet}(M^*)$-module via augmentation. 

The above expression can be understood in the super setting. That is, for $M$ free of rank $p|q$ over a supercommutative algebra $A$ we set
$$
\ber M := \underline{\text{Ext}}^n_{S^{\bullet}(M^*)}(A, S^{\bullet}(M^*)),
$$
and for any even automorphism $T:M \to M$ we set
$$
\ber T = \text{ action of T on } \underline{\text{Ext}}^n_{S^{\bullet}(M^*)}(A, S^{\bullet}(M^*)).
$$

In greater generality, if $T: M \to N$ is an isomorphism between two $A$-modules then $T$ induces a map $\ber M \to \ber N$ also called $\ber T$. In the case that $M = N$, such a map $\ber T: \ber M \to \ber M$ can be naturally identified with an even invertible element of $A$, and this agrees with the definition of $\ber T$ above.


\chapter{Supergeometry} \label{sgeo}
\label{algsupergeo}




\section{Superspaces and Superschemes}

The various different super-geometric categories we will work in will concern objects which are specializations of the notion of a \emph{superspace}. We follow closely the notation and notions given in \cite{man1}, \cite{dan_r_beamer}.

\begin{defx}
A superspace is a locally ringed spaced $(X, \oo_X)$ where $\oo_X$ is a sheaf of super-commutative rings.
\end{defx}

A morphism of superspaces is the usual one; a morphism $f: X \to Y$ is a continuous map $|f|: X \to Y$ of underlying topological spaces along with a map of sheaves $\oo_Y \to f_*\oo_X$ such that for any point $x \in X$, the stalk morphsim $f_x: \oo_{Y, f(x)} \to \oo_{X,x}$ is local
$$
f_x(\mathfrak{m}_{f(x)}) \subset \mathfrak{m}_x.
$$

The structure sheaf of a superspace $X$ is $\mathbb{Z}/2$ graded $\oo_X = \oo_{X,0} \oplus \oo_{X,1}$ and contains a subsheaf $\mathcal{J}_X = \oo_{X,1} \oplus \oo_{X,1}^2$ generated by all odd elements. We sometimes omit the subscript $X$ and write $\mathcal{J} = \mathcal{J}_X$ when it is clear from context.

We have a natural morphism
$$
(X, \oo_X/\mathcal{J}) \to (X, \oo_X)
$$
corresponding to the projection $\oo_X \to \oo_X/\mathcal{J}$ and refer to the superspace $(X, \oo_X/\mathcal{J})$ as the \emph{odd-reduction} of $X$, $X_{\text{rd}}$. In most situations we will have equality of $X_{\text{rd}}$ and the usual reduction $X_{\tred}$ corresponding to the sheaf generated by all nilpotents $\mathcal{N}_X$.  In \cite{man1} the author discusses the distinctions between these two situations in more detail.

One also considers the \emph{split model of X} or the \emph{associated graded space of} $X$, namely the superspace
$$
\text{gr } X := (X, \text{gr } \oo_X )
$$
where
$$
\text{gr } \oo_X = \bigoplus_{n=0}^{\infty} \mathcal{J}^n/\mathcal{J}^{n+1} = \oo_X/\mathcal{J} \oplus \mathcal{J}/\mathcal{J}^2 \oplus \mathcal{J}^2/\mathcal{J}^3 \dots
$$

We have the following notions:

\begin{defx}
Let $X = (X, \oo_X)$ be a superspace and $\mathcal{J}$ denote the sheaf generated by all odd-elements of $\oo_X$. Then we say
\begin{enumerate}
    \item $X$ is locally split if $\ee := \mathcal{J}/\mathcal{J}^2$ is locally free as an $\oo_{X_{\text{rd}}} = \oo_X/\mathcal{J}$ module and $\bigwedge \ee \cong \text{gr}\,\, \oo_X$.
    \item $X$ is split if it is locally split and $X \cong \text{gr}\, X$ globally
    \item $X$ is projected if there exists a right inverse $p: X \to X_{\text{rd}}$ to the natural morphism $X_{\text{rd}} \to X$.
\end{enumerate}
\end{defx}

We will mostly be concerned with supermanifolds, but will sometimes find it useful to have a scheme-theoretic viewpoint in mind.

\begin{defx}
A superscheme $X$ is a superspace $(X, \oo_X)$ such that $(X, \oo_{X,0})$ is an ordinary scheme and $\oo_{X,1}$ is a coherent sheaf of $\oo_{X,0}$-modules.
\end{defx}

Superschemes can be covered by \emph{affine} superschemes $\text{Spec}(A)$ for supercommutative rings $A$. Much of classical algebraic geometry immediately generalizes to the supercase, therefore we do not pause here to elaborate. 

\begin{ex} (Affine Space $\mathbb{A}^{m|n}$) \label{aff_def}
\end{ex}
We discuss a natural assignment of an affine superscheme $\mathbb{A}_V$ given a super vector space $V = V_0 \oplus V_1$. Let $V^* = \underline{\text{Hom}}(V,\Cx)$ denote the internal Hom of all linear maps $V \to \Cx$. Let $S$ denote the graded ring $\underline{\text{Sym}}^{\bullet}(V^*)$, and consider this as a super algebra whose $\mathbb{Z}/2$-grading comes from reduction modulo $2$. One then has a canonical decomposition
$$
\underline{\text{Sym}}^{\bullet}(V^*) = \text{Sym}^{\bullet}(V^*_0) \otimes \bigwedge(V^*_1)
$$
where we take the exterior product in the usual classical sense. The affine superscheme associated to $V$ is then $\mathbb{A}_V = \text{Spec}(\underline{\text{Sym}}^{\bullet}(V^*))$ and in view of the decomposition above, it is immediately seen as split. In the special case $V = \Cx^{m|n} = \Cx^m \oplus (\Pi \Cx)^n$ we write $\mathbb{A}_V = \mathbb{A}_{\Cx}^{m|n} = \mathbb{C}^{m|n}$.

We remark that the seemingly harmless operation of parity change $\Pi$ on modules behaves non-trivially with respect to the functor $V \to \mathbb{A}_V$. Indeed topologically $\mathbb{A}_V$ is the same as the space associated to the classical vector space $V_0$, while for $\mathbb{A}_{\Pi V}$ it is that of $V_1$.

\section{Supermanifold Theory}

The most utilized notion for us is that of a \emph{supermanifold}.

\subsection{Basic Notions}~

Supergeometry can be thought of an extension of ordinary geometry where one adds extra "odd anti-commuting functions". For supermanifolds one will find essentially two different notions in the literature. We adopt the more algebro-geometric approach. In this paper we denote the sheaf of holomorphic (resp. smooth) functions on $\Cx^n$ ($\R^n$) by $\oo_{\Cx^n}$ ($\mathcal{C}^{\infty}_{\R^n}$). 
\begin{defx}
A complex (resp. smooth) supermanifold of dimension $m | n$ is a locally ringed space $(X , \stsh)$ with
\begin{enumerate}
    \item $X$ a second countable Hausdorff topological space,
    \item $\stsh$ a sheaf of supercommutative $\Cx$ (resp. $\R)$ algebras,
    \item $\stsh$ is locally isomorphic to $\mathcal{O}_{\mathbb{C}^m} \otimes \bigwedge(\xi_1, \dots, \xi_n)$ \\
    (resp. $\mathcal{C}^{\infty}_{\mathbb{R}^m} \otimes \bigwedge(\xi_1, \dots, \xi_n)$).
\end{enumerate}
\end{defx}

That is, as a superspace, a supermanifold $(X, \oo_X)$ is locally split. When it is clear from the context, $X$ will sometimes refer to the supermanifold $(X, \stsh)$ and $|X|$ will denote the underlying topological space. 

The majority of our analysis will concern complex supermanifolds. By $\mathbb{C}^{m|n}$ we mean the supermanifold $(\Cx^m,\stsh_{\Cx^{m|n}})$ whose structure sheaf is globally given by $\mathcal{O}_{\mathbb{C}^m} \otimes \bigwedge(\xi_1, \dots, \xi_n)$ and whose $\mathbb{Z}_2$ grading is determined by reduction modulo two of the standard $\mathbb{Z}$ grading of the exterior algebra. The $\xi_j$'s are referred to as the \emph{odd generators} or \emph{odd coordinates}. $\R^{m|n}$ has the analogous definition. If $U \subset \Cx^{m | n}$, then we call $U^{m | n} = (U, \stsh_{\Cx^{m|n}}{\big |}_U)$ a (an open) superdomain of $\Cx^{m|n}$.

Given any manifold (real or complex) $M$, and a vector bundle $F$ with sheaf of sections $\mathcal{F}$, the space $(M, \bigwedge^{\bullet}_{\stsh_M}\mathcal{F})$ is a supermanifold of dimension $\text{dim}\,  M \, | \, \text{rank}\,  \mathcal{F}$, where $\mathcal{F}$ is taken to be odd and the $\mathbb{Z}_2$ grading is the obvious one. Supermanifolds constructed in this way are split.

Let $X = (X, \oo_X)$ be a complex supermanifold and $x_1, \dots, x_m$ denote local coordinates in an open set $U$. Possibly shrinking this coordinate chart we find an isomorphism $\stsh|_U \cong \mathcal{O}_{\mathbb{C}^m}|_U \otimes_{\mathbb{C}} \bigwedge(\xi_1, \dots, \xi_n)$, then the collection $(x_1, \dots, x_m | \xi_1, \dots, \xi_n)$ are referred to as local coordinates on $(X,\stsh_X)$. Therefore, locally every super function $f$ is a Grassmann polynomial in the $\xi's$ with coefficients holomorphic functions of $x_1, \dots, x_m$, 
$$ f(x_1, \dots, x_m | \xi_1, \dots, \xi_n) = \sum_{ I \subset \{ 1,\dots,n \} } f_I(x_1, \dots, x_m)\xi_I
$$
where $\xi_I = \xi_{i_1} \cdots \xi_{i_k}$ if $I = \{i_1 < \dots < i_k \}$. The vertical bar in the argument of $f$ simply reminds the reader of the even and odd variables.

The subsheaf $\J$ of ideals generated by the odd part $\stsh_1$ is equivalently the sheaf generated by all nilpotent functions on a supermanifold. Thus $X_{\tred}$ is a classical manifold and $X_{\text{rd}} = X_{\tred}$.

A morphism $\phi : (X, \stsh_X) \to (Y, \stsh_Y)$ of supermanifolds is defined simply as above, as a  morphism of locally ringed spaces, hence it is given by a pair $(|\phi|, \phi^*)$ where $|\phi| : X \to Y$ is a continuous map and $\phi^* : \stsh_Y \to |\phi|_*\stsh_X$ is a map of sheaves of supercommutative algebras. Every such morphism $\phi$ induces a morphism of ordinary manifolds $\phi_{\tred}: X_{\tred} \to Y_{\tred}$.

\bigskip

\subsection{Construction of Supermanifolds by Gluing}~

As it is the case in classical geometry, sometimes it is useful to think of/construct geometric objects by gluing pieces of local models together. This approach is useful in the super setting as well and will allow us to construct more supermanifolds than simply those that are split. Locally supermanifolds can be thought of as a gluing of superdomains. We follow closely the ideas outlined in \cite{var}.

Indeed, let $\{ U_j = U_j^{m|n} \}$ be a collection of superdomains and $\{ W_j \}$ be a collection of open subdomains. Write $\stsh_j = \stsh_{U_j}$. Suppose we have then isomorphisms
$$ f_{ij}: (W_j , \stsh_j|_{W_j}) \to (W_i, \stsh_i|_{W_i}) $$
we can then construct a supermanifold $(X, \stsh)$ by setting the topological space to be
$$ X = \left ( \bigsqcup | U_j | \right ) / \sim $$
with the quotient topology given by the usual equivalence relation, namely $ p_j \sim q_i$ if $q_i = |f_{ij}|(p_j)$. The sheaf $\stsh$ is given as follows: for $V \subset X$ an open subset, we let $\widetilde{V}$ denote the pre-image of $V$ under the equivalence class projection, then 
\begin{equation}
    \begin{split}
        \stsh(V) & = \{ (s_j) \,\, | \,\, \text{for all}\,\, i,j, \,\, s_j = f_{ij}^*s_i  \} \\
        & \subset \prod_{j} \stsh_j(\widetilde{V} \cap |U_j|).
    \end{split}
\end{equation}
It is a straightforward tedious exercise to verify this does infact give a supermanifold $(X,\stsh)$. We then say that $(X, \stsh)$, or more generally any supermanifold $Y$ that is isomorphic to such a constructed $X$, is glued together by the data $\{ U_j, W_j,  \{f_{ij} \} \}$. We remark that if $X$ is glued together by the data $\{ U_j, W_j,  \{f_{ij} \} \}$, then $X_{\tred}$ is glued together by $\{ U_j, W_j,  \{(f_{ij})_{\tred} \} \}$.

Conversely given a supermanifold $(X,\stsh)$ we may cover it with open sets $\{ V_j \}$ such that for each $j$ we have an isomorphism of locally ringed spaces $\varphi_j: (V_j , \stsh|_{V_j}) \to (U_j, \stsh_{\Cx^{m|n}}|_{U_j})$. Then trivially $X$ is glued together by the data $\{ U_j, U_i \cap U_j, \{ (\varphi_i \circ \varphi^{-1}_j)|_{U_i \cap U_j} \} \} $.

\begin{ex} \label{proj_def}
Projective Superspace $\mathbb{P}^{m|n}$
\end{ex}

We present 3 different approaches to the analog of projective space is supergeometry. The first approach is to construct superprojective space by way of gluing as above. For each $i = 0,1 \dots m$ let $U_i = \Cx^{m|n}$ and $W_i = U_i - \{0\}$, with (global) coordinates denoted by $x_{0/j}, \dots, \widehat{x}_{j/j}, \dots, x_{m/j} \, | \, \xi_{1/j} \dots \xi_{n/j} $. Using the notation above we define the gluing functions $f_{ij}$ via
$$ f_{ij}: (W_j, \stsh_j|_{U_{ij}}) \to (W_i, \stsh_i|_{U_{ij}}) $$

$$ 
|f_{ij}|: (x_{0/j}, \dots, \widehat{x}_{j/j} \dots, x_{m/j}) \to \frac{1}{x_{i/j}} (x_{0/j}, \dots, \widehat{x}_{j/j} \dots, x_{m/j})
$$
$$
f_{ij}^*(x_{k/i}) = \frac{1}{x_{i/j}}x_{k/j}
$$
$$
f_{ij}^*(\xi_{k/i}) = \frac{1}{x_{i/j}}\xi_{k/j}.
$$
The resulting object we call \emph{projective superspace} $\mathbb{P}^{m|n}$. Clearly we have that $|\mathbb{P}^{m|n}| = \mathbb{P}^m$ as one should expect.

For the second approach we discuss the algebro-geometric description of the \emph{projectivization} $\mathbb{P}(V)$ of any
super vector space $V = V_0 \oplus V_1$. We let $V^* = \underline{\text{Hom}}(V,\Cx)$ denote the internal dual of $V$, i.e. all linear maps $V \to \Cx$ which need not preserve parity. Let $S^{\bullet} = \underline{\text{Sym}}^{\bullet}(V^*)$ be the internal symmetric algebra on the dual and consider the set $\text{Proj}(S)$ of all homogeneous prime ideal of $S$ which do not contain the irrelevent ideal $S^+ = S^{\geq 1}$. The graded algebra $S$ is also naturally $\mathbb{Z}/2$-graded, $S = S_0 \oplus S_1$. The decomposition $V^* = V^*_0 \oplus V^*_1$ then gives the decomposition for $S$
$$
S = \underline{\text{Sym}}^{\bullet}(V^*) = \text{Sym}^{\bullet}(V^*_0) \otimes \bigwedge(V^*_1)
$$
where we use the exterior algebra notation on the right hand side in the usual sense. Then it is immediate that in fact as a set 
$$
\text{Proj}(S) = \text{Proj}(S_0) = \text{Proj}(\text{Sym}^{\bullet}(V_0^*)).
$$
Let $S_{\tred} = \text{Sym}^{\bullet}(V_0^*)$, then $S$ is naturally an $S_{\tred}$ module and so we can consider the corresponding sheaf $\widetilde{S}$ on $\text{Proj}(S_{\tred})$. We take this to be the structure sheaf $\oo = \oo_{\mathbb{P}(V)}$ on superprojective space $\text{Proj}(S)$,
$$
\mathbb{P}(V) = \text{Proj}(S) := (\text{Proj}(S_{\tred}), \widetilde{S}) = (\text{Proj}(S_{\tred}), \widetilde{\text{Sym}^{\bullet}(V^*_0) \otimes \bigwedge(V^*_1)})
$$

It is then immediate that $\mathbb{P}(V)$ is split as it is visibly modelled on $\mathbb{P}(V_0)$ equipped with the locally free sheaf associated to the module $\text{Sym}^{\bullet}(V^*_0) \otimes V_1^*$.

Lastly we discuss a more functorial approach to $\mathbb{P}(V)$, as a set as all rank $1 | 0$ linear subspaces of $V$. One then defines the topology and sheaf in the obvious manner. We can describe $\mathbb{P}(V)$ in terms of its functor of points. It is the space representing the functor which associates to any $\Cx$-superscheme $S$, the set of locally free quotients
\begin{equation}
    \oo_S \otimes V \longrightarrow \mathcal{Q} \longrightarrow 0
\end{equation}
or co-rank $1|0$ (i.e. rank $m-1|n$). On $\mathbb{P}(V)$ one has the tautological line bundle $\oo_{\mathbb{P}(V)}(-1)$ defined by $\ker \left ( \oo_{\mathbb{P}(V)} \otimes V \to \mathcal{Q} \right )$, where the morphism $ \oo_{\mathbb{P}(V)} \otimes V \to \mathcal{Q}$ arises from the identity map $\text{id}:\mathbb{P}(V) \to \mathbb{P}(V)$.

Let us identify the reduction $\mathbb{P}(V)_{\tred}$ in this light and verify indeed it is $\mathbb{P}(V_0)$. For any supermanifold $X$, its reduction satisfies the following universal property: namely $X_{\tred}$ is the unique ordinary manifold with embedding and $X_{\tred} \hookrightarrow X$, such that for any other classical manifold $S = S_{\tred}$ with a morphism $\alpha: S \to X$ there exists a unique map $\alpha': S \to X_{\tred}$ making the following diagram commute
$$
\begin{tikzcd}
S \arrow[rd, "\alpha'"', dotted] \arrow[r, "\alpha"] & X                   \\
                                                     & X_{\tred} \arrow[u]
\end{tikzcd}.
$$
See that for $V = V_0 \oplus V_1$ decomposed into its even and odd part, that $\mathbb{P}(V)_{\tred} = \mathbb{P}(V_0)$. Indeed, define an embedding $\mathbb{P}(V_0) \hookrightarrow \mathbb{P}(V)$ as follows: an $S$-point of $\mathbb{P}(V_0)$ is a surjection
$$
\oo_{S} \otimes V_0 \longrightarrow \mathcal{Q} \longrightarrow 0
$$
where $\mathcal{Q}$ is locally free of rank $m-1|0$. This maps to the $S$-point of $\mathbb{P}(V)$
$$
\oo_{S} \otimes V \longrightarrow \mathcal{Q} \oplus \left( \oo_S \otimes V_1 \right) \longrightarrow 0.
$$
It is then immediately seen that the universal property holds after noting that for an ordinary manifold $S$, a sequence
$$
\oo_{S} \otimes V \longrightarrow \mathcal{Q} \longrightarrow 0
$$
with $\mathcal{Q}$ locally free of rank $m-1|n$ canonically decomposes into two sequences of locally free sheaves of either purely even or odd rank. The odd part gives a surjection
$$
\oo_{S} \otimes V_1 \longrightarrow \mathcal{Q}_1 \longrightarrow 0
$$
of locally free sheaves of the same rank $0|n$ which must be an isomorphism, hence $\mathbb{P}(V)_{\tred} = \mathbb{P}(V_0)$.

The projective space $\mathbb{P}(V)$ is canonically projected, i.e. there is a morphism $\mathbb{P}(V) \to \mathbb{P}(V)_{\tred} = \mathbb{P}(V_0)$. This is constructed as follows: given an $S$-point of $\mathbb{P}(V)$, we utilize the surjective map $\oo_S \otimes V \to \oo_S \otimes V_0$ and form the pushout
$$
\begin{tikzcd}
\oo_S \otimes V \arrow[d] \arrow[r] & \mathcal{Q} \arrow[r] \arrow[d] & 0 \\
\oo_S \otimes V_0 \arrow[r]         & \mathcal{Q}' \arrow[r]          & 0
\end{tikzcd}.
$$
The bottom row is then the corresponding $S$-point of the reduction.

\bigskip

\subsection{Super Vector Bundles}~

Vector bundles in supergeometry are most easily understood as generalizations of their algebro-geometric counterparts in the classical situation. In algebraic geometry a vector bundle on a space carries the same data as a locally free sheaf and this idea is what we adopt here. 

Let $X$ be a supermanifold. A locally free sheaf $\F$ of rank $r |  s$ on $X$ is a sheaf of $\stsh_X$-modules that is locally isomorphic to the trivial bundle $\stsh_X^{r|s} = \stsh_X^r \oplus (\Pi \stsh_X)^s$. That is, there exists an open cover of $\{ U_i \}$ of $X$ and sheaf isomorphisms
\begin{equation} \label{vectbundle}
    \begin{split}
         \F|_{U_i} & \overset{\sim}{\rightarrow} \left(\stsh_X|_{U_i}\right)^{r | s} \\
         & = \left(\stsh_X|_{U_i}\right)^{\oplus r} \oplus \left(\Pi \stsh_X|_{U_i}\right)^{\oplus s},
    \end{split}
\end{equation}
these are called the (a set of) \emph{local trivializations} of $\F$. We will use this notion as a complete replacement for that of a super vector bundle. We remark that it is dangerous to call $\F$ an \emph{even (odd) vector bundle} if it is locally free of rank $r | 0  \,\, (\text{resp.} \,\, 0|r )$ for some $r$. Indeed if $\stsh_X$ has nontrivial odd part then a locally free sheaf of rank $r|0$ \emph{does not} have the property that all of its sections are even. By an \emph{invertible sheaf} (or line bundle) we mean a sheaf locally free of rank $1 |0 $ or $0 | 1$.

Given a vector bundle $\F$ of rank $r | s$ with trivializations $\{ \psi_{i} \}$ for an open cover $\{ U_i \}$ establishing the isomorphisms above in (\ref{vectbundle}), the compositions $g_{ij} := \psi_{i} \circ \psi_{j}^{-1}$ give $\stsh_X$-linear automorphisms of $(\stsh_X|_{U_i \cap U_j})^{r|s}$. These are called (a set of) \emph{transition functions} for the vector bundle $\F$. Just as in the usual case the vector bundle $\F$ is equivalent to the data of its transition functions. This fact will prove useful in computations. Specializing to the case that $\F$ is an invertible sheaf of rank $1 | 0$ gives that the $g_{ij}$ form a 1-\v{C}ech cocycle of the sheaf $\stsh_{X,0}^*$ of even invertible superfunctions and we have the familiar result Pic($X) \cong H^1(X, \stsh_{X,0}^*)$, where $\text{Pic}(X)$ is the group of isomorphism classes of all rank $1 | 0$ line bundles on $X$.

Letting $\J \F$ denote the sub $\stsh_X$-module generated by $\J$, we always have the exact sequence
$$ 0 \rightarrow \J \F \rightarrow \F \rightarrow \F_{\tred} \rightarrow 0 $$
defining the quotient $\F_{\tred}$ as a \emph{super} vector bundle on $X_{\tred}$. In terms of transition functions, if the cocycle $g_{ij}$ describes $\F$ then reducing modulo $\J$, the cocycle $g_{ij} \,\, \text{mod} \,\, \J$ computes the transition functions of the vector bundle $\F_{\tred}$.

\bigskip

\subsection{The Tangent and Cotangent Sheaves}~

\smallskip

\subsubsection{The Tangent Sheaf}~

A super $\Cx$-derivation of $\stsh_X$ is a map $D : \stsh_X \to \stsh_X$ of sheaves of super $\Cx$-algebras such that (for homogeneous $D$)
$$ D(ab) = D(a)b + (-1)^{|a||D|}aD(b). $$
The tangent sheaf of $X$ is the sheaf of $\stsh_X$-modules $\text{Der}(\stsh_X)$ of all super $\Cx$-derivations. We denote it by $\mathcal{T}_X$ and call sections of it vector fields. If $x_1, \dots, x_m | \xi_1, \dots, \xi_n$ are coordinates on $\Cx^{m|n}$ then we have the \emph{odd coordinate vector fields} determined by the conditions
$$
\frac{\partial}{\partial \xi_j} \in (\mathcal{T}_{\Cx^{m|n}})_1
$$
$$
\frac{\partial}{\partial \xi_j}(x_k) = 0, \,\,\,\,\, \frac{\partial}{\partial \xi_j}(\xi_k) = \delta_{ij}.
$$
We have the following result from \cite{var}.

\begin{prop}
The tangent sheaf $\mathcal{T}_{\Cx^{m|n}}$ is free of rank $m|n$ with homogeneous generators the even and odd coordinate vector fields $\frac{\partial}{\partial x_1}, \dots, \frac{\partial}{\partial x_m} \, | \, \frac{\partial}{\partial \xi_1}, \dots, \frac{\partial}{\partial \xi_n}.$
\end{prop}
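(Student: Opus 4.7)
The plan is to prove that an arbitrary super $\Cx$-derivation $D$ of $\stsh_{\Cx^{m|n}}$ can be written uniquely as an $\stsh_{\Cx^{m|n}}$-linear combination of the coordinate vector fields, and that these coordinate vector fields are themselves linearly independent. I will work locally on a superdomain $U \subset \Cx^{m|n}$ and use the fact that every $f \in \stsh_{\Cx^{m|n}}(U)$ admits a unique expansion $f = \sum_{I} f_I(x_1,\dots,x_m)\,\xi^I$ as in the excerpt.

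First I would verify that the even coordinate vector fields $\partial/\partial x_i$ (defined in the usual complex-analytic way on each $f_I$ coefficient and extended $\bigwedge(\xi_1,\dots,\xi_n)$-linearly) and the odd coordinate vector fields $\partial/\partial \xi_j$ (defined in the excerpt) satisfy the super Leibniz rule. For $\partial/\partial x_i$ this reduces to the classical Leibniz rule on holomorphic functions; for $\partial/\partial \xi_j$ one uses that an odd derivation applied to $\xi^I$ collects the sign $(-1)^k$ when $\xi_j = \xi_{i_{k+1}}$ is the entry being removed, which is exactly the Koszul sign dictated by graded-commutativity.

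Next, given any homogeneous $D \in \mathcal{T}_{\Cx^{m|n}}(U)$, set $f_i := D(x_i) \in \stsh_{\Cx^{m|n}}(U)$ and $g_j := D(\xi_j) \in \stsh_{\Cx^{m|n}}(U)$ and form the difference
\[
D' := D - \sum_{i=1}^m f_i \,\frac{\partial}{\partial x_i} - \sum_{j=1}^n g_j \,\frac{\partial}{\partial \xi_j}.
\]
By construction $D'(x_i) = 0$ and $D'(\xi_j) = 0$ for all $i,j$. The key step is to conclude that $D' = 0$ identically. Applying $D'$ to a monomial $\xi^I = \xi_{i_1}\cdots \xi_{i_k}$ and using the super Leibniz rule inductively gives $D'(\xi^I) = 0$. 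Hence on an arbitrary section $f = \sum_I f_I(x)\,\xi^I$ we compute
\[
D'(f) = \sum_I D'(f_I(x))\,\xi^I,
\]
so it suffices to show $D'$ kills every holomorphic function of $x_1,\dots,x_m$.

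This last assertion is the step I expect to be the main obstacle, since it is the only part that is not purely algebraic bookkeeping: a super $\Cx$-derivation restricted to the subsheaf $\oo_{\Cx^m} \subset \stsh_{\Cx^{m|n}}$ is an ordinary $\Cx$-derivation of $\oo_{\Cx^m}$, and one must invoke the classical fact that such derivations are determined by their values on the coordinate functions $x_i$. The cleanest way I know is to use the holomorphic Hadamard/Taylor lemma: for any $p \in U$ and any $h \in \oo_{\Cx^m,p}$, one has $h = h(p) + \sum_i (x_i - x_i(p))\,h_i$ with $h_i \in \oo_{\Cx^m,p}$, and iterating this expansion together with the fact that $D'$ annihilates constants (being $\Cx$-linear) and annihilates the $x_i - x_i(p)$ shows $D'(h) \in \bigcap_{N}\mathfrak{m}_p^N = 0$. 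Thus $D' = 0$, establishing the spanning property.

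Finally, linear independence is immediate: if $\sum_i f_i\,\partial/\partial x_i + \sum_j g_j\,\partial/\partial \xi_j = 0$, then applying this operator to $x_k$ yields $f_k = 0$ and applying it to $\xi_k$ yields $g_k = 0$. Extending the argument from a homogeneous $D$ to an arbitrary $D = D_0 + D_1$ is done componentwise; the parities of the resulting coefficients $f_i, g_j$ force the decomposition $\mathcal{T}_{\Cx^{m|n}} \cong \stsh_{\Cx^{m|n}}^{\oplus m} \oplus (\Pi \stsh_{\Cx^{m|n}})^{\oplus n}$, i.e.\ freeness of rank $m|n$.
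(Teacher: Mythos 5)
Your argument is correct, but note that the paper does not prove this proposition at all: it is quoted from the reference \cite{var} and used as a black box, so there is no internal proof to compare against. Your proof is the standard one and all the essential steps are in place: subtracting off $\sum_i D(x_i)\,\partial/\partial x_i + \sum_j D(\xi_j)\,\partial/\partial \xi_j$, killing the Grassmann monomials by the super Leibniz rule, and then disposing of the genuinely analytic part via the Hadamard expansion $h = h(p) + \sum_i (x_i - x_i(p))h_i$ together with $\bigcap_N \mathfrak{m}_p^N = 0$ in the local ring of convergent power series. You correctly identify this last step as the only non-formal one. One small imprecision worth tightening: the restriction of $D'$ to the coefficient subsheaf $\oo_{\Cx^m}$ is not an endo-derivation of $\oo_{\Cx^m}$ but a derivation with values in the $\oo_{\Cx^m}$-module $\stsh_{\Cx^{m|n}}$, since $D'(h_i)$ may have nonzero odd or nilpotent components; the Hadamard iteration still closes because it places $D'(h)$ in arbitrarily high powers of the ideal generated by the $x_i - x_i(p)$ inside the full stalk, and one then concludes coefficient-by-coefficient in the expansion in the $\xi$'s. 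With that reading, the uniqueness (linear independence) and the parity bookkeeping giving $\mathcal{T}_{\Cx^{m|n}} \cong \stsh^{\oplus m} \oplus (\Pi\stsh)^{\oplus n}$ are immediate, as you say.
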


This implies that the tangent sheaf of any $m  | n$ dimensional supermanifold is locally free of rank $m | n$. 

Analogous to the case in classical differential geometry, given any morphism $\psi: X \to Y$ of supermanifolds one can define the differential or pushforward $\psi_*$ of $\psi$ to be the the $\stsh_X$-module map
$$
\psi_*: \mathcal{T}_X \to \psi^*\mathcal{T}_Y
$$
$$
\psi_*(V) = V \circ \psi^*
$$
thinking of $V$ as a $\Cx$-derivation of $\stsh_X$.

Let us now compute a \v{C}ech-cocycle that corresponds to the tangent sheaf. For simplicity let us assume a supermanifold $X$ is glued together by the data of two superdomains $U_1^{m|n}$ and $U_2^{m|n}$ with coordinates $x_1, \dots, x_m \, | \, \xi_1, \dots, \xi_n$ and $y_1, \dots, y_m \, | \, \zeta_1, \dots, \zeta_n$ respectively and an isomorphism $\varphi: W_1^{m|n} \overset{\sim}{\rightarrow} W_2^{m|n}$ between two sub superdomains. Computing the pullbacks of the generators of the sheaf on $W_2^{m|n}$ (abbreviating and writing simply $f(x | \xi)$ for $f(x_1, \dots, x_m \, | \, \xi_1, \dots, \xi_n)$),
$$
\varphi^*y_k = f_k(x | \xi)
$$
$$
\varphi^*\zeta_j = \eta_j(x | \xi)
$$
we produce $m$ even functions $f_k$ and $n$ odd functions $\eta_j$ which are commonly referred to as the \emph{coordinate transformations}. One can then form the \emph{Jacobian} matrix of this coordinate transformation

$$ 
\frac{\partial(y,\zeta)}{\partial(x,\xi)} =
\begin{pmatrix}
    \displaystyle{\frac{\partial f_i(x | \xi)}{\partial x_j}}      & \displaystyle{\frac{\partial f_i(x | \xi)}{\partial \xi_j}}   \\
  \displaystyle{\frac{\partial \eta_i(x | \xi)}{\partial x_j}}       & \displaystyle{\frac{\partial \eta_i(x | \xi)}{\partial \xi_j}}  \\
\end{pmatrix}.
$$

Now on $W_1$ the tangent sheaf, by the above proposition, is freely generated by $\frac{\partial}{\partial x_k}$ and $\frac{\partial}{\partial \xi_j}$ and similarly for $W_2$. 

As $\varphi: W_1 \to W_2$ is an isomorphism note two facts; the pushforward map $\varphi_*:\mathcal{T}_{W_1} \to \varphi^*\mathcal{T}_{W_2}$ is in fact an isomorphism of $\stsh_{W_1}$-modules and furthermore the pullback sheaf $\varphi^*\mathcal{T}_{W_2}$ has
$$
\frac{\partial}{\partial y_1}, \dots, \frac{\partial}{\partial y_n} \, \bigg | \, \frac{\partial}{\partial \zeta_1}, \dots, \frac{\partial}{\partial \zeta_n}
$$
as an $\stsh_X$-module basis.

By definition an arbitrary vector field $V$ on $X$ is given by vector fields $V_1, V_2$ with each $V_i$ a vector field on (a subset of) of $W_i$ that is compatible with the gluing isomorphism $\varphi$, in the sense that $V_2 = \varphi_*V_1$. We say that in this case $V_2$ is \emph{$V_1$ expressed in the coordinates $y  |  \zeta$.} 

In fact the map $\varphi_*$ expressed as a matrix $g$ with respect to the coordinate vector bases is exactly a cocycle that defines $\mathcal{T}_X$. One can easily then compute
\begin{equation}
    \begin{split}
        \varphi_*\frac{\partial}{\partial x_j} & = \sum_{k=1}^m \frac{\partial}{\partial x_j} ( \varphi^*y_k)\frac{\partial}{\partial y_j} + \sum_{k=1}^n \frac{\partial}{\partial x_j} ( \varphi^*\zeta_k)\frac{\partial}{\partial \zeta_k} \\
        & = \sum_{k=1}^m \frac{\partial f_k(x | \xi)}{\partial x_j} \frac{\partial}{\partial y_j} + \sum_{k=1}^n \frac{\partial \eta_k(x | \xi)}{\partial x_j} \frac{\partial}{\partial \zeta_j}
    \end{split}   
\end{equation}
and similarly
\begin{equation}
    \begin{split}
         \varphi_*\frac{\partial}{\partial \xi_j} & = \sum_{k=1}^m \frac{\partial f_k(x | \xi)}{\partial \xi_j} \frac{\partial}{\partial y_j} + \sum_{k=1}^n \frac{\partial \eta_k(x | \xi)}{\partial \xi_j} \frac{\partial}{\partial \zeta_j}
    \end{split}   
\end{equation}
which shows that the matrix of $\varphi^*$ is precisely the Jacobian $\frac{\partial(y,\zeta)}{\partial(x,\xi)}$. The result persists when one deals with an arbitrary supermanifold not necessarily built from two open domains.

On a supermanifold one would hope that the sheaf $\mathcal{T}_{X_{\tred}}$ of usual vector fields on the underlying manifold have some relationship with the sheaf of super vector fields $\mathcal{T}_{X}$. Indeed we have the following.

\begin{prop} \label{tangentsheaf}
For any supermanifold $X$ (real or complex) we have a canonical isomorphism of $\stsh_{\tred}$-modules
$$
(\mathcal{T}_X)_{\tred} = (\mathcal{T}_X)_{\tred,0} \oplus (\mathcal{T}_X)_{\tred, 1} \cong \mathcal{T}_{X_{\tred}} \oplus (\J/\J^2)^*
$$
where $(\J/\J^2)^* = \mathcal{H}om_{\stsh_{\tred}}(\J/\J^2, \stsh_{\tred}).$
\end{prop}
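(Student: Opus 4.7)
The plan is to construct the isomorphism in a coordinate-free way and then verify on a coordinate chart that it is an isomorphism, whence the general result follows from the local freeness of all sheaves in question.

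First, I would split the claim using the $\mathbb{Z}/2$ decomposition and construct each summand map canonically. For the even summand, given an even derivation $V \in (\mathcal{T}_X)_0$, a quick Leibniz check shows $V(\J) \subset \J$, so $V$ descends to an ordinary derivation $\bar V : \stsh_{\tred} \to \stsh_{\tred}$, giving an $\stsh_{\tred}$-linear map $\rho_0 : (\mathcal{T}_X)_0 \to \mathcal{T}_{X_{\tred}}$. If $V = fW$ with $f \in \J$, then $V(\stsh_X) \subset \J$, so $\rho_0$ factors through $(\mathcal{T}_X)_{\tred,0}$.

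For the odd summand, let $V \in (\mathcal{T}_X)_1$. Then by Leibniz $V(\J^2) \subset \J$, so $V$ descends to a map $\bar V : \J/\J^2 \to \stsh_X/\J = \stsh_{\tred}$. A short computation shows that for $f \in \stsh_X$ and $a \in \J$ one has $V(fa) - fV(a) = V(f)\,a \in \stsh_{X,1}\cdot \J \subset \J$, hence $\bar V$ is $\stsh_{\tred}$-linear, yielding $\rho_1 : (\mathcal{T}_X)_1 \to (\J/\J^2)^*$. Again $V \in \J\cdot\mathcal{T}_X$ implies $V(\stsh_X)\subset\J$, so $\rho_1$ descends to $(\mathcal{T}_X)_{\tred,1}$.

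Next I would verify both $\rho_0$ and $\rho_1$ are isomorphisms by trivializing on a coordinate patch with coordinates $(x_1,\dots,x_m \,|\, \xi_1,\dots,\xi_n)$. By the proposition cited before the statement, a general even vector field has the form $\sum f_i \,\partial_{x_i} + \sum g_j \,\partial_{\xi_j}$ with $g_j \in \J$; modulo $\J\cdot\mathcal{T}_X$, only $\sum \bar f_i\,\partial_{x_i}$ survives, and $\rho_0$ sends this to $\sum \bar f_i\,\partial_{x_i}$ viewed as a classical vector field on $X_{\tred}$. Similarly, an odd vector field $\sum h_i\,\partial_{x_i} + \sum k_j\,\partial_{\xi_j}$ reduces to $\sum \bar k_j\,\partial_{\xi_j}$, and $\rho_1$ sends it to the element of $(\J/\J^2)^*$ acting by $\xi_l \mapsto \bar k_l$; since $\J/\J^2$ is locally free of rank $0|n$ on $\stsh_{\tred}$ with basis $\{[\xi_1],\dots,[\xi_n]\}$, this is manifestly an isomorphism of stalks.

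The main subtle point — and the only place I expect any friction — is verifying that $\rho_0$ and $\rho_1$ are globally well-defined, i.e.\ that the local descriptions above assemble consistently under the Jacobian transition matrices computed in equations~(13) and~(14). This boils down to observing that the Jacobian $\tfrac{\partial(y,\zeta)}{\partial(x,\xi)}$, when reduced modulo $\J$, becomes block-diagonal: the even-even block reduces to the classical Jacobian governing $\mathcal{T}_{X_{\tred}}$, while the odd-odd block $[\partial \eta_k/\partial \xi_j]\bmod \J$ is precisely the cocycle governing $(\J/\J^2)^*$ (being the inverse transpose of the cocycle for $\J/\J^2$, whose generators $[\xi_j]$ transform by $[\partial \eta_k/\partial \xi_j]\bmod\J$). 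The off-diagonal blocks lie in $\J$ and therefore vanish after reduction, so no twisting mixes the two summands; this is what makes the direct sum decomposition canonical. With these observations, $\rho = \rho_0 \oplus \rho_1$ is a globally defined isomorphism of $\stsh_{\tred}$-modules.
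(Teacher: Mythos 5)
Your proposal is correct and follows essentially the same route as the paper: both construct the canonical maps $(\mathcal{T}_X)_0 \to \mathcal{T}_{X_{\tred}}$ (from even fields stabilizing $\J$) and $(\mathcal{T}_X)_1 \to (\J/\J^2)^*$ (from the degree-reducing property $V(\J^k)\subset\J^{k-1}$), observe that $\J\mathcal{T}_X$ lies in the kernel, and finish with a local-coordinate check. Your extra verification that the Jacobian becomes block-diagonal modulo $\J$ is a harmless elaboration of the step the paper dismisses as trivial, since the maps are already defined globally and coordinate-freely.
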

\begin{proof}
Let $k = \mathbb{C}$ or $\mathbb{R}$. We will construct two maps (of sheaves of $k$-vector spaces) $(\mathcal{T}_X)_0 \to \mathcal{T}_{X_{\tred}}$ and $(\mathcal{T}_X)_1 \to (\J/\J^2)^*$ to then obtain a map on the direct sum. The key idea is that given any derivation $V$ of $\stsh_X$, $V$ at worst "reduces degree", in the sense that 
$$
V(\J^k) \subset \J^{k-1}.
$$
for any $k$. This is a trivial consequence of the derivation property. Any vector field $V$ therefore, thought of as a derivation on $\stsh$ induces a map
$$
\bar{\bar{V}}: \J/\J^2 \to \stsh_X/\J
$$
$$
\bar{\bar{V}}(f + \J^2) = V(f) + \J.
$$
However for even vector fields $V = V_0$ we have that $V_0$ stabilizes $\J$ and hence $\bar{\bar{V_0}} = 0$. Lastly it is immediate that $\bar{\bar{V}}$ respects the $\stsh_{\tred}$ module structure on $\J/\J^2$ and thus we obtain a map
$$
(\mathcal{T}_X)_1 \to (\J/\J^2)^*.
$$
On the other hand, as every even vector field $V$ stabilizes $\J$, we get an induced derivation $\bar{V}: \stsh/\J \to \stsh/\J$ defining a map
$$
(\mathcal{T}_X)_0 \to \mathcal{T}_{X_{\tred}}.
$$
In total we get a well-defined map of sheaves of $k$-vector spaces
$$
\mathcal{T}_X \to \mathcal{T}_{X_{\tred}} \oplus (\J/\J^2)^*
$$
$$
V = V_0 + V_1 \to \bar{V_0} + \bar{\bar{V_1}}.
$$
The kernel of the above map can be identified as those derivations $V$ whose image lies completely in $\J$, and hence contains the submodule $\J\mathcal{T}_X$. This gives our desired canonical map of $\stsh_{\tred}$-modules
$$
(\mathcal{T}_X)_{\tred} \to \mathcal{T}_{X_{\tred}} \oplus (\J/\J^2)^*.
$$
It is trivial now to show that in local coordinates this map is an isomorphism.

\end{proof}

\subsubsection{The Lie Bracket}~

On a supermanifold one also has the \emph{super Lie bracket} $[V,W]$ of vector fields defined by (for homogeneous $V,W$)
$$
[V,W] := VW - (-1)^{|V||W|}WV.
$$
One can easily check that this defines a derivation of parity $|V||W|$. Furthermore we have the useful identity
$$
[fV,W] = f[V,W] - (-1)^{(|f|+|V|)|W|}W(f)V,
$$
for homogeneous $f \in \stsh$.

\bigskip

\subsubsection{The Cotangent Sheaf}~

The dual of the tangent sheaf is denoted $\Omega_X^1 := \mathcal{H}om(\mathcal{T}_X, \stsh)$ is called the \emph{cotangent sheaf}. This is also locally free of the same dimension with the (local) basis given by the coordinate one-forms $dx_1, \dots, dx_m \, | \, d\xi_1, \dots, d\xi_n$. For any $k$ let $\Omega^k_X = \bigwedge^k \Omega^1_X$ be the $k$th exterior power of the cotangent bundle and let $\bigwedge^{\bullet} \Omega^1_X = \oplus_k \Omega_X^k$. Sections of $\bigwedge^{\bullet} \Omega^1_X$ are called (super) differential forms, while homogeneous sections of degree $k$ are called (super) differential $k$-forms.

A similar calculation to the above will show that the transition functions corresponding to the cotangent bundle is exactly the inverses of the Jacobians of the coordinate transformations. This also is immediate when one recalls the transition functions of the dual of a vector bundle $\mathcal{F}^*$ are the inverses of the original those corresponding to $\mathcal{F}$.

Following the convention in \cite{qfst} we have the canonical pairing
$$
\langle \cdot , \cdot \rangle : \mathcal{T}_X \otimes \Omega^1_X \to \stsh_X,
$$
which implies the sign rule $\langle aV, b\omega \rangle = (-1)^{|V||b|} ab \langle V , \omega \rangle$. This pairing is used to define the super exterior derivative (or simply the exterior derivative) $d: \stsh_X \to \Omega^1_X$ from the equation
$$
\langle V , df \rangle = V(f).
$$
As in the classical case, $d$ extends uniquely to as square zero derivation of the sheaf of differential forms $\bigwedge^{\bullet} \Omega_X^1$,
\begin{enumerate}
    \item $d^2 = 0$
    \item $d(\alpha \wedge \beta) = d\alpha \wedge \beta + (-1)^{k} \alpha \wedge d\beta, \,\,\,\,\,\,\,$ for $\alpha \in \Omega^k_X,$
\end{enumerate}
and so for example DeRham Cohomology makes sense on a supermanifold.

Differential forms however play a very different role in supergeometry than that in classical geometry. In the super setting the exterior algebra of a superalgebra (or sheaf of superalgebras) is the quotient of the tensor algebra by the relations $x \otimes y \sim - y \otimes x$ if one of $x$ or $y$ are even and $\xi \otimes \zeta \sim \zeta \otimes \xi$ if both $\xi$ and $\zeta$ are odd. Hence in particular, if $X$ is a supermanifold of dimension $m  |  n$, $n > 0$ and $\xi$ a local odd coordiante, then the $k$-fold wedge product $(d\xi)^k = d\xi \wedge \dots \wedge d\xi$ is non-zero for each $k$. Thus, \emph{there is no top exterior power of} $\Omega^1_X.$ This shows that the identification of objects analogues to volume forms from classical geometry, i.e. objects which it makes sense to integrate over a manifold, is not the naive guess of super differential forms one might make.

\bigskip

\subsection{The Berezinian Sheaf and the Berezin Integral}~

For vector bundles on ordinary manifolds, one had several common constructions available that were essentially formal consequences of operations with vector spaces such as direct sum, tensor product, determinant etc. In supergeometry one has a new construction available known as the Berezinian. 

Suppose $\F$ is vector bundle of rank $r | s$ and let $\{ g_{ij} \}$ be a cocycle corresponding to $\F$. We then define the invertible sheaf $\ber \F$ to be the bundle corresponding to the transition functions $\{ \ber g_{ij} \}$. We enforce that $\ber \F$ is locally free of rank $1 | 0$ if $s$ is even and $0 | 1$ if $s$ is odd. If $e_1, \dots, e_r \, | \, \theta_1, \dots, \theta_s$ is a collection of local generators trivializing $\F$ then the symbol $[e_1 \dots e_r \, | \, \theta_1 \dots \theta_s]$ denotes a trivializing section of $\ber \F$. Note that in the classical situation, this definition specializes to the definition of the determinant of a vector bundle.

Similar to the classical situation, any exact sequence of super vector bundles on a supermanifold $X$
$$
\cdots \rightarrow \F_{i-1} \rightarrow \F_i \rightarrow \F_{i+1} \rightarrow \cdots
$$
induces a canonical isomorphism
$$
\otimes_i \, \left ( \ber \F_i \right )^{(-1)^i} = \stsh_X.
$$

If $X$ is a supermanifold then we define $\ber X := \ber \Omega_X^1$ and simply call this \emph{the Berezinian of} $X$. We emphasize that for two local coordinate systems $x | \xi$ and $y | \zeta$ the gluing law reads
$$ [dy_1,\dots dy_m | d\zeta_1, \dots d\zeta_n ] = \ber \frac{\partial(y,\zeta)}{\partial (x,\xi)}[dx_1,\dots dx_m | d\xi_1, \dots d\xi_n ].
$$

\begin{ex}
The Berezinian of Projective Superspace $\mathbb{P}^{m|n}$.
\end{ex}

Let us compute the Berezinian for any projectivization $\mathbb{P}(V)$ for $V$ a super vector space of dimension $m+1 |n$. The argument hinges on the identification of the cotangent sheaf $\Omega^1_{\mathbb{P}(V)}$. We can describe it as follows: there is a canonical exterior derivative $d: \oo_{\mathbb{P}(V)} \to \Omega^1_{\mathbb{P}(V)}$ which extends uniquely as a first order differential operator
\begin{equation} \label{d_ext}
    d: \oo_{\mathbb{P}(V)} \otimes V \to \Omega^1_{\mathbb{P}(V)} \otimes V
\end{equation}
by the simple formula $d(f \otimes v) = df \otimes v$ extended by linearity. The fact that the vector bundles involved are twisted by a constant sheaf will give that this is well defined. 

On $\mathbb{P}(V)$ we have the natural sequence (as discussed in the functorial approach to $\mathbb{P}(V)$ in Example (\ref{proj_def}))
\begin{equation} \label{proj_nat_seq}
    0 \rightarrow \stsh_{{\mathbb{P}(V)}}(-1) \rightarrow \stsh_{\mathbb{P}(V)} \otimes V \rightarrow \mathcal{Q} \rightarrow 0.
\end{equation}
Now pre and post composing $d$ in (\ref{d_ext}) with the above short exact sequence gives us a morphism
$$
\stsh_{{\mathbb{P}(V)}}(-1) \longrightarrow \stsh_{\mathbb{P}(V)} \otimes V \overset{d}{\longrightarrow} \Omega^1_{\mathbb{P}(V)} \otimes V \longrightarrow \Omega^1_{\mathbb{P}(V)} \otimes \mathcal{Q}.
$$
The morphsim we obtain $\oo_{{\mathbb{P}(V)}}(-1) \to \Omega^1_{\mathbb{P}(V)} \otimes \mathcal{Q}$ can be seen to be a morphism of $\oo_{{\mathbb{P}(V)}}$-modules and furthermore, in local coordinates, to be an isomoprhism. The details of this computation can be found in Manin's book \cite{man1}.

Using this fact, dualizing and twisting by $\oo_{\mathbb{P}(V)}(1)$ we arrive from (\ref{proj_nat_seq})
$$
0 \longrightarrow \Omega^1_{\mathbb{P}(V)} \longrightarrow \stsh_{\mathbb{P}(V)}(-1) \otimes V^* \longrightarrow \stsh_{\mathbb{P}(V)} \longrightarrow 0.
$$
This gives
$$
\ber \mathbb{P}(V) = \ber \Omega^1_{\mathbb{P}(V)} = \ber \left ( \stsh_{\mathbb{P}(V)}(-1) \otimes V^* \right ) \cong \oo_{\mathbb{P}(V)}(n-m-1).
$$

\bigskip

\subsubsection{The Berezin Integral}~

We now move on to the question of integration on a supermanifold. We define first the \emph{Berezin integral} on the smooth split supermanifold $\mathbb{R}^{m|n}$. As noted above any function on $\mathbb{R}^{m |n}$ is written (uniquely as) a Grassmann polynomial in odd coordinates $\xi_1, \dots, \xi_n$ with coefficients smooth functions of ordinary coordinates $x_1, \dots, x_n$ in the following way,
$$
f(x_1, \dots, x_m | \xi_1, \dots, \xi_n) = \sum_{ I \subset \{ 1,\dots,n \} } f_I(x_1, \dots, x_m)\xi_I
$$
where $\xi_I = \xi_{i_1} \cdots \xi_{i_k}$ if $I = \{i_1 < \dots < i_k \}$. We then define the \emph{Berezin integral of} $f$ \emph{over} $\mathbb{R}^{m|n}$ to be
$$
\int_{\mathbb{R}^{m|n}} f(x|\xi) \, [dx_1 \cdots dx_m \, | \, d\xi_1 \cdots d\xi_n ] := \int_{\mathbb{R}^{m}} \frac{\partial^n}{\partial \xi_n \cdots \partial \xi_1} f(x|\xi) \, dx_1\dots dx_m,
$$
that is, we simply integrate in the usual way the highest nonzero coefficient of $f$ in its expansion in the $\xi$'s. Of course one deals with the convergence of the integral in the usual ways, we will not pause here to comment. This definition can be extended in the obvious way to integrals over arbitrary sub superdomains $U^{m|n}$ of $\mathbb{R}^{m|n}$. As a shorthand, it is common to write $[dx_1 \cdots dx_m \, | \, d\xi_1 \cdots d\xi_n ] = [dx \, | \, d\xi ].$

This notation above suggests that the Berezin integral is an integral of not a superfunction but rather a section of $\ber \mathbb{R}^{m|n}$. This viewpoint is the correct one as one has the following super analog of the change of variables theorem.
\begin{prop}
If $\varphi: \mathbb{R}^{m|n} \to \mathbb{R}^{m|n}$ is an isomorphism of smooth supermanifolds with $(x | \xi)$, and $(y | \zeta)$ denoting coordinates on the source and target we have
\end{prop}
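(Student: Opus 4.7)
The plan is to decompose the isomorphism $\varphi$ into three elementary types whose change-of-variables formulas can be verified directly, and then to combine them via the multiplicativity of $\ber$ and the super chain rule. Write $\varphi^* y_i = f_i(x|\xi)$ and $\varphi^* \zeta_j = \eta_j(x|\xi)$, and set $f_i^{(0)}(x) = f_i(x|0)$ together with $A_{jk}(x) = (\partial \eta_j/\partial \xi_k)|_{\xi=0}$. Mimicking the block-triangular factorization of the Jacobian recorded above, I factor $\varphi$ as a composition of three maps: (i) a block-diagonal piece $\varphi_D$ given by $y = f^{(0)}(x)$, $\zeta = A(x)\xi$; (ii) a strictly lower-unipotent piece $\varphi_L$ given by $y = x$ and $\zeta = \xi$ plus odd corrections of Grassmann degree at least three in $\xi$; and (iii) a strictly upper-unipotent piece $\varphi_U$ given by $y = x$ plus even corrections of Grassmann degree at least two in $\xi$, $\zeta = \xi$. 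Existence of such a factorization is guaranteed by the invertibility of $A(x)$ and $\partial f^{(0)}/\partial x$, which follows from $\varphi$ being an isomorphism (the Berezinian is invertible on the reduction).

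For $\varphi_D$ the Jacobian is block-diagonal, hence $\ber \partial(y,\zeta)/\partial(x,\xi) = \det(\partial f^{(0)}/\partial x) \cdot (\det A(x))^{-1}$. A direct Grassmann computation yields $\varphi_D^*(\zeta_1 \cdots \zeta_n) = (\det A(x)) \, \xi_1 \cdots \xi_n$, so the factor $(\det A)^{-1}$ in the Berezinian precisely cancels the $\det A$ from the odd pullback, after which the remaining integral reduces to the classical change-of-variables formula applied to the top Grassmann coefficient. For $\varphi_U$ and $\varphi_L$ the Jacobian is unipotent and $\ber = 1$, so the assertion becomes invariance of the integral under these maps; here Taylor-expanding $f$ in the nilpotent correction and then applying the odd derivative $\partial^n/\partial \xi_n \cdots \partial \xi_1$ leaves behind only total $x$-derivatives (in the $\varphi_U$ case, whose corrections supply at least two extra $\xi$'s) or monomials of Grassmann degree strictly less than $n$ (in the $\varphi_L$ case, whose corrections supply at least three), both of which vanish upon even integration under the usual decay hypotheses.

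The main obstacle lies in this unipotent step: one must control the signs arising from the super Leibniz rule and verify that the mixed cross-terms genuinely assemble into exact $x$-derivatives rather than spurious residual Grassmann monomials. This is a careful exercise in the anti-commutation rules together with the identity $\int \partial g/\partial \xi_j \, [dx \,|\, d\xi] = 0$, which itself follows immediately from the definition of the Berezin integral in terms of highest coefficients. Once the three elementary cases are established, multiplicativity of $\ber$ and the super chain rule applied to $\varphi = \varphi_U \circ \varphi_D \circ \varphi_L$ yield the general formula: each pullback contributes a factor of the corresponding elementary Berezinian, and composition of the three elementary identities gives the change-of-variables theorem for an arbitrary isomorphism $\varphi$.
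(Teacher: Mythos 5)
The paper does not actually prove this proposition; it states the result and defers to the reference \cite{var}, so there is no in-text argument to compare against. Your overall strategy --- factor $\varphi$ into elementary transformations, verify the formula for each, and conclude by multiplicativity of $\ber$ and the chain rule --- is the standard route taken in that literature, and the block-diagonal case is handled correctly: the $(\det A)^{-1}$ from the Berezinian cancels the $\det A$ produced by pulling back the top Grassmann monomial, reducing to the classical theorem.

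However, there is a genuine error in the unipotent step. You assert that for $\varphi_U$ and $\varphi_L$ ``the Jacobian is unipotent and $\ber = 1$, so the assertion becomes invariance of the integral.'' This is false: a supermatrix of the form $I + N$ with $N$ nilpotent need not have Berezinian $1$. Concretely, for $\varphi_U$ on $\mathbb{R}^{1|2}$ given by $y = x + g(x)\xi_1\xi_2$, $\zeta_j = \xi_j$, the upper-left block of the Jacobian is $1 + g'(x)\xi_1\xi_2$, so $\ber \tfrac{\partial(y,\zeta)}{\partial(x,\xi)} = 1 + g'(x)\xi_1\xi_2 \neq 1$. Moreover, the Taylor-expansion terms alone are \emph{not} total derivatives: taking $f = f_0(y)$, one finds $\varphi_U^* f = f_0(x) + g(x)f_0'(x)\xi_1\xi_2$, and $g f_0'$ is not exact; only after multiplying by the Berezinian does the top coefficient become $g f_0' + g' f_0 = (g f_0)'$, which integrates to zero. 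If you literally set $\ber = 1$ you would conclude $\int_{\mathbb{R}^{1|2}} f_0(y)\,[dy\,|\,d\zeta] = \int g f_0'\,dx$, contradicting the fact that the left side vanishes. The same issue arises for $\varphi_L$, where $\det(\partial \zeta/\partial \xi) = 1 + (\text{nilpotent})$ and its inverse is exactly what cancels the spurious top-degree terms created by the odd corrections. The repair is standard but must be made explicit: retain the Berezinian factor $1 + (\text{nilpotent})$ in the unipotent cases and show that its expansion supplies precisely the missing $g'f$ terms that complete the Taylor contributions $g f'$ into divergences $(gf)'$; the identity $\int \partial g/\partial \xi_j\,[dx\,|\,d\xi] = 0$ then disposes of the lower-degree Grassmann remainders as you indicate.
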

\vspace{-.35cm}
$$
\int_{\mathbb{R}^{m|n}} f(y | \zeta) \, [dy \, | \, d\zeta] = \int_{\mathbb{R}^{m|n}} \varphi^*f(y | \zeta) \, \text{\ber} \, \frac{\partial (y, \zeta)}{\partial (x, \xi)} [dx \, | \, d\xi].
$$

The proof of the above proposition is non-trivial and can be found in \cite{var}. One can think of the Berezinian sheaf as exactly those objects which gives a coordinate independent definition of the Berezin integral.

Now if $X$ is an oriented \emph{smooth} supermanifold, we define the integral of a global section $\sigma \in \Gamma(X, \ber X)$ in the usual way, by finding a partition of unity and reducing the the definition given above.

Let $(X,\stsh)$ be now a complex supermanifold of dimension $m | n$. In this context we will usually write $\ber X = \omega$. On $X$ one can construct the sheaf of \emph{smooth superfunctions}, denoted by $\ee$. Loosely $\mathcal{E}$ is defined by the condition that for each local trivialization of $\stsh$ as $\stsh_{\Cx^m} \otimes \bigwedge (\zeta_1, \dots, \zeta_n)$ we take $\ee$ to be trivialized on the same neighborhood as $\mathcal{C}^{\infty}_{\R^{2m}} \otimes \bigwedge (\alpha_1, \dots, \alpha_n, \beta_1, \dots, \beta_n) \otimes \Cx $ and then glued together via the transition functions for $\stsh$. Intuitively one thinks that the relationships
$$
x_k = \text{Re} \, z_k, \hspace{.5cm} y_k = \text{Im} \, z_k, \hspace{.5cm} \alpha_k = \text{Re} \, \zeta_k, \hspace{.5cm} \beta_k = \text{Im} \, \zeta_k
$$
are enforced in this construction. This has been made precise in a short paper of Haske and Wells \cite{smoothfromcomplex}, however we do not discuss this further. We will think of $z_k, \bar{z}_k, \zeta_k, \bar{\zeta}_k$ as generators of $\ee$ analogous as to what is common in complex analysis, so that locally every smooth superfunction $f$ is of the form (abbreviating the indices and using the usual multi-index notation)
$$
f(z, \bar{z} | \zeta, \bar{\zeta}) =\sum_{I,J} f_{IJ}(z, \bz)\zeta_I \bar{\zeta}_J
$$
for ordinary ($\Cx$-valued) smooth functions $f_{IJ}$. The sheaf $\ee$ then naturally has complex conjugation. 

Once the sheaf $\ee$ of smooth superfunctions on a complex supermanifold is established one has the notion of a smooth section of a complex super vector bundle $\mathcal{F}$, namely sections of the tensor product $\mathcal{F} \otimes_{\stsh} \mathcal{E}$ (note that $\mathcal{E}$ is an $\stsh$-module). Furthermore we denote by $\overline{\mathcal{F}}$, the complex conjugate vector bundle of $\mathcal{F}$. We often write for the sheaf of smooth sections of $\mathcal{F}$ as $\mathcal{F}_{\ee}$. Of particular interest is the \emph{smooth} Berezinian sheaf $\omega \otimes \overline{\w} \otimes \ee =: |\omega|^2$, as its sections yield natural objects that can be integrated over the \emph{entire} complex supermanifold $X$. Thus when comparing to the classical setting, sections of $\omega$ correspond to "holomorphic top-forms" or forms of type $(n,0)$ on a complex $n$-manifold, while sections of $|\omega|^2$ correspond to genuine top forms or forms of type $(n,n)$.

Here one can also consider super Dolbeault cohomology. Let $\Omega_X^{(p,q)}$ denote the sheaf of differential forms of type $(p,q)$, namely those forms $\sigma$ that can be written in local coordinates 

$$
\sigma = \sum_{I = J \cup K} f_{I}(X) dX_J \wedge d\overline{X}_K
$$
with $f_I$ smooth and where we use above the usual multi-index notation and let $X$ denote any of the local coordinantes $(x_1, \dots, x_m | \xi_1, \dots, \xi_n)$. One then has the usual Cauchy-Riemann operator $\bar{\partial}$ satisfying the usual Leibniz rule and squaring to zero when viewed as an odd derivation of the algebra $\Omega_X^{(\bullet,q)}$. More generally for each holomorphic super vector bundle $\mathcal{F}$ (locally free $\stsh_X$-module) we let $\Omega_X^{(p,q)}(\mathcal{F})$ denote the sheaf of smooth $(p,q)$ forms with values in $\mathcal{F}$ and then we obtain a super Dolbeault resolution 
 $$
0 \to \mathcal{F} \overset{\bar{\partial}}{\rightarrow} \Omega_X^{(0,1)}(\mathcal{F}) \overset{\bar{\partial}}{\rightarrow} \Omega_X^{(0,2)}(\mathcal{F}) \overset{\bar{\partial}}{\rightarrow} \dots
$$
which is an acyclic resolution just as in the classical case (as each $\Omega_X^{(p,q)}(\mathcal{F})$ is fine). Hence the cohomology computes the sheaf cohomology $H^q(X, \mathcal{F})$.

\subsection{Super GAGA}

We pause briefly to discuss the GAGA principle in the super setting. When convenient we often pass freely between the algebraic and holomorphic categories. This is done in an effort to increase clarity and sometimes be more inline with the related literature. 

As the goal of this work is largely applied: it is aimed at providing formulas for objects related to measures arising from string theory and not too much in developing foundations, it is natural to use whatever language or techniques are convenient and available. 

Super GAGA has been studied in \cite{super_gaga} and is mentioned in various places in the lectures of \cite{vids}. For example, in \cite{super_gaga} it is shown that for algebraic sub supervarieties $X \subset \mathbb{P}^{m|n}$ there is a natural analytification functor $X \mapsto X^h$, as in classical GAGA, and for which the natural maps
$$
H^q(X, \mathcal{F}) \to H^q(X^h, \mathcal{F}^h)
$$
are isomorphisms for $\mathcal{F}$ coherent.

For the majority of our discussions, we work in the holomorphic category.
\chapter{Super Riemann Surfaces and Other Preliminaries} \label{susy_prelim}
\label{SUSYprelim}

\section{Basic Notions}\mbox{}

\subsection{Definitions and Elementary Structure Theory}

We briefly review some basic definitions and notions to setup notation. \emph{Super Riemann surfaces} are a certain class of complex supermanifolds of dimension $1 | 1$, which carry an additional piece of structure. These play the role of superstring worldsheets and their theory very closely parallels that of classical Riemann surfaces.

We are interested in the moduli of these objects and thus have the following notion of a family.

\begin{defx}
A family of super Riemann surfaces is a family of complex supermanifolds $\pi: X \to S$ of relative dimension $1 | 1$ equipped with a maximally non-integrable distribution $\mathcal{D}$ of rank $0 | 1$, i.e. an odd subbundle of the relative tangent bundle $\mathcal{T}_{X/S}$ such that the Lie bracket induces the isomorphism
$$ [\cdot,\cdot] : \mathcal{D}^2 \xrightarrow{\sim} \mathcal{T}_{X/S}/\mathcal{D}.
$$
\end{defx}

The main complication in the study of families of super Riemann surfaces is the presence of odd moduli. Essentially these should be thought of "odd parameters" on which the super Riemann surface depends on. In fact, if only local properties of the moduli space is of interest, it was shown in \cite{locmod} that it suffices to study the slightly more general objects consisting of super Riemann surfaces with an enlarged structure sheaf. Specifically we now require that the structure sheaf is locally modeled on
$$ \stsh_{\mathbb{C}}[\xi] \otimes \Lambda(\tau_1, \dots, \tau_L) $$
for some $L$. Here the $\tau$'s are the additional odd parameters or "odd moduli". One could also add additional even moduli but it turns out not to change the analysis. It is custom (and admittedly somewhat confusing) to suppress this from explanation and work in the situation of a single super Riemann surface with ``odd parameters which it depends on".

Let us analyze some local structure.

\begin{lem}
Locally we can find relative coordinates $x|\xi$ such that the distribution $\mathcal{D}$ is generated by the odd vector field
$$ 
D_\xi = \frac{\partial}{\partial \xi} + \xi \frac{\partial}{\partial x}.
$$
Such coordinates are called superconformal.
\end{lem}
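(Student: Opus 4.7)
The plan is to reduce $D$ to the stated normal form by an explicit construction of the coordinate change, extracting the required invertibility from the maximal non-integrability hypothesis at the one place where it is needed.

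Start with any local relative coordinates $(y,\eta)$ and any local odd generator $D$ of $\mathcal{D}$, say $D = a\,\partial_y + b\,\partial_\eta$ with $a$ odd and $b$ even. The rank hypothesis on $\mathcal{D}$ forces $b$ to be invertible at any chosen reduced point (otherwise $D$ would vanish modulo $\mathcal{J}$ when evaluated at that point, contradicting rank $0|1$), so on a small neighborhood one may divide by $b$ and assume $D = a\,\partial_y + \partial_\eta$. A short calculation, using $a^{2}=0$ and the commutation of $\partial_y$ with $\partial_\eta$, gives
\[
D^{2} \;=\; \tfrac12[D,D] \;=\; c\,\partial_{y}, \qquad c := \partial_\eta a + a\,\partial_{y}a,
\]
and the maximal non-integrability of $\mathcal{D}$ is exactly the statement that the even function $c$ is invertible on the chosen open set.

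Now look for an even $x$ and odd $\xi$ with $D(\xi)=1$ and $D(x)=\xi$; once found, $(x,\xi)$ will be a coordinate system on which both $D$ and $\partial_\xi + \xi\,\partial_x$ act identically, hence will agree as vector fields, proving the lemma (the original generator is recovered by rescaling by the invertible $b$). The natural attempt is $\xi := \eta$, so that $D(\xi)=1$ automatically, and $x = P(y) + Q(y)\eta$ with $P$ even and $Q$ odd. Writing $a = a_{0}(y) + a_{1}(y)\,\eta$, imposing $D(x)=\eta$, and matching the $\eta^{0}$ and $\eta^{1}$ parts (using $a_{0}^{2}=0$) reduces the problem to
\[
Q \;=\; -\,a_{0}\,P', \qquad (a_{1} - a_{0}\,\partial_{y}a_{0})\,P'(y) \;=\; 1.
\]
Since $a_{0}\in\mathcal{J}$, the element $a_{0}\,\partial_{y}a_{0}\in\mathcal{J}^{2}$ is nilpotent; the $\eta^{0}$-part of $c$ is $a_{1} + a_{0}\,\partial_{y}a_{0}$, and its invertibility (from non-integrability) is equivalent to that of $a_{1} - a_{0}\,\partial_{y}a_{0}$. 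Hence the ODE for $P$ admits a local solution by ordinary quadrature, coefficient-wise in the odd moduli expansion, and $Q$ is then determined algebraically. Finally, $(x,\xi)=(P+Q\eta,\eta)$ is an admissible coordinate system since the relative super-Jacobian
\[
\begin{pmatrix} P' + Q'\eta & Q \\ 0 & 1 \end{pmatrix}
\]
has Berezinian $P'+Q'\eta$, invertible because $P'$ is.

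The main obstacle is identifying the coordinate ansatz so that the compatibility equation collapses to a single scalar ODE whose solvability is equivalent to the superconformal hypothesis; everything else (the initial rescaling, the computation of $D^{2}$, the matching of $\eta$-coefficients, and the verification that the super-Jacobian is invertible) is a routine exercise in supercommutative algebra.
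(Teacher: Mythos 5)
Your proof is correct, but the coordinate change you construct is genuinely different from the one in the paper. Both arguments begin the same way: extract invertibility of the $\partial_\eta$-coefficient from the rank condition, normalize the generator to $D = a\,\partial_y + \partial_\eta$, and use maximal non-integrability to make one more coefficient invertible. From there the paper keeps the \emph{even} coordinate fixed and modifies only the \emph{odd} one, setting $\xi' = f(x\,|\,\xi)/h(x)$ where $h$ is a local square root of the $\xi$-linear part $f_1$ of the normalized coefficient; the verification is then a direct pushforward computation showing $V = (\text{unit})\,D_{\xi'}$. You do the opposite: you keep the odd coordinate $\eta$ and modify only the even one, solving $P' = c_0^{-1}$ by quadrature, where $D^2 = c\,\partial_y$ encodes non-integrability as invertibility of $c$ (your $c_0 = a_1 + a_0\,\partial_y a_0$ differs from the paper's $a_1$ only by a nilpotent, so the two invertibility conditions coincide; the sign discrepancy in your $Q$ is a harmless convention choice that you correctly neutralize). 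Your route buys two things: the characterization $D(\xi)=1$, $D(x)=\xi$ makes the final identification $D = \partial_\xi + \xi\,\partial_x$ immediate via the principle that a derivation equals $D(x)\,\partial_x + D(\xi)\,\partial_\xi$ in any coordinate system, with no pushforward computation, and you normalize the generator to be \emph{exactly} $D_\xi$ rather than a unit multiple of it. The paper's route buys a more elementary analytic input (a square root of an invertible even function rather than an antiderivative) and a coordinate change that leaves the reduced even coordinate untouched. Both are valid local constructions of superconformal coordinates.
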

\begin{proof}
The distribution $\mathcal{D}$ is locally free of rank $0 | 1$ hence, around a point $p \in X$ we can trivialize $\dis$ in a coordinate chart $x | \xi$  so that it is generated by the single odd vector field
$$ V = a(x|\xi) \parx + b(x|\xi) \parxi $$
for $a$ odd and $b$ even. Expanding in powers of $\xi$ we write $a(x|\xi) = a_0(x) + a_1(x)\xi$, $b(x|\xi) = b_0(x) + b_1(x)\xi$ with $a_1, b_0$ even and $a_0, b_1$ odd. Since $V$ generates $\dis$, we can map it to a generator of $\dis_{\tred}$ , $V|_{X_{\tred}} = V $ mod $\J = b_0(x) \partial_{\xi}$, and thus we see that $b_0$ is non-zero in the local chart and we conclude that $b$ is invertible. A quick computation will show that (here primes denotes the $x$ derivative)
\begin{equation} \nonumber
\begin{split}
\frac{1}{2}[V,V] & = V^2 \\
& = \left (a(x|\xi)\parx + b(x|\xi) \parxi \right )\left (a(x|\xi)\parx + b(x|\xi) \parxi \right ) \\
& = \Big( a(x|\xi)a'(x|\xi) +a_1(x)b(x|\xi) \Big )\parx + \Big ( a(x|\xi)b'(x|\xi) + b(x|\xi)b_1(x|\xi) \Big ) \parxi. 
\end{split}
\end{equation}
The condition that $\dis^2 \cong \mathcal{T}_X/\dis$ via the Lie bracket implies that $V^2$ must generate this quotient. This quotient is locally free of rank $1 | 0$ and same is true for its reduction. Hence, as above, we conclude that $V^2_{\tred} = a_1(x)b_0(x) \partial_x$ mod $\J \dis^2$ cannot vanish which gives $a_1 \neq 0$. As $b$ is invertible we can define $f(x|\xi) = f_0(x) + f_1(x)\xi = b^{-1}a$ and assume $V$ is the generator $f(x|\xi)\partial_x + \partial_\xi $. Noting that $f_1(x)$ does not vanish, possibly shrinking the coordinate neighborhood, we can find a local holomorphic square root $h(x)$, and then the coordinate transformation
$$ x' = x, \,\,\, \xi' = \frac{1}{h(x)} f(x | \xi) $$
transforms $V$ to
\begin{equation}
\begin{split}
V & = f(x|\xi)\parx + \parxi \\
& = h(x)\xi' \left( \frac{\partial}{\partial x'} + \frac{\partial \xi'}{\partial x}\frac{\partial}{\partial \xi'} \right )+ h(x) \frac{\partial}{\partial \xi'} \\
& = h(x)\xi' \frac{\partial}{\partial x'} + \left ( h(x) \xi' \frac{\partial \xi'}{\partial x} + h(x) \right ) \frac{\partial}{\partial \xi'} \\
& = \left ( h(x) \xi' \frac{\partial \xi'}{\partial x} + h(x) \right ) D_{\xi'}
\end{split}
\end{equation}
showing that $\dis $ is generated by $D_{\xi'}$ in these new coordinates.

\end{proof}

We say a change of coordinates $y | \zeta$ is \emph{superconformal} if $\mathcal{D}_{\zeta}$ and $\mathcal{D}_{\theta}$ are $\stsh_X$-multiples of each other. 

Throughout this paper we will sometimes refer to a family of super Riemann surfaces as a \emph{family of SUSY curves} or simply by a \emph{SUSY family}. It is well known \cite{modSRS} that if the base $S$ is reduced, we essentially get a classical object, namely a family of spin curves.

\begin{prop} \label{jspin}
Let $\pi: X \to S$ be a family of super Riemann surfaces over a reduced base $S$. Let $\J \subset \stsh_X$ denote the sheaf of ideals generated by all odd elements. Then
\begin{enumerate}
    \item $\mathcal{J}$ is a locally free $\stsh_{\tred}$ module of rank $0 | 1$,
    \item $\J^* = \mathcal{H}om_{\stsh_{\tred}}(\J, \stsh_{\tred}) \cong \mathcal{D}_{\tred}$
    \item $\Pi \mathcal{J}$ becomes a relative spin structure on the family $X_{\tred} \to S$, i.e.
$$ (\Pi \mathcal{J}^{\otimes 2}) = \J^{\otimes 2} \cong \Omega_{X_{\tred}/S}^1, $$
\end{enumerate}
where $\Pi$ is the parity reversing functor.
\end{prop}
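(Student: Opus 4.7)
The crucial consequence of $S$ being reduced is that in any local superconformal chart $x|\xi$ over a (necessarily affine) open of $S$, the odd part of $\stsh_X$ is generated over the even part by the single odd coordinate $\xi$ alone, since $\stsh_S$ is purely even; hence $\J=\xi\stsh_X$ locally, and so $\J^2=0$. First I would use this to prove (1): since $\J$ is annihilated by $\J$, its $\stsh_X$-module structure automatically descends to an $\stsh_{X_{\tred}}$-module structure, locally free of rank $0|1$ on the generator $\xi$, with transitions given by the odd part of the superconformal coordinate-change Jacobians reduced mod $\J$.

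Next, for (2), I would apply the construction from the proof of \cref{tangentsheaf} to the relative tangent sheaf. Every odd relative vector field $V$ yields a map $\bar{\bar V}:\J/\J^2\to\stsh_{X_{\tred}}$ defined by $\bar{\bar V}(j)=V(j)\bmod\J$; this assignment factors through $(\mathcal{T}_{X/S})_{\tred,1}$ and identifies it with $(\J/\J^2)^*=\J^*$, using $\J^2=0$. Restricting to $\dis\subset\mathcal{T}_{X/S}$ yields a natural $\stsh_{X_{\tred}}$-linear map $\dis_{\tred}\to\J^*$, and to verify it is an isomorphism I would check it on local generators: $D_\xi\mapsto(\xi\mapsto D_\xi(\xi)=1)$, which is exactly the basis element of $\J^*$ dual to $\xi$.

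For (3), I would invoke the defining SUSY condition $[\cdot,\cdot]:\dis^{\otimes 2}\xrightarrow{\sim}\mathcal{T}_{X/S}/\dis$ and reduce modulo $\J$, yielding $(\dis_{\tred})^{\otimes 2}\cong(\mathcal{T}_{X/S}/\dis)_{\tred}$. By the relative analogue of \cref{tangentsheaf}, together with (2), the image of $\dis$ inside $(\mathcal{T}_{X/S})_{\tred}$ fills out its odd summand $\J^*$; consequently the quotient $(\mathcal{T}_{X/S}/\dis)_{\tred}$ is simply the classical relative tangent sheaf $\mathcal{T}_{X_{\tred}/S}$. Thus $(\J^*)^{\otimes 2}\cong\mathcal{T}_{X_{\tred}/S}$, and dualizing over $\stsh_{X_{\tred}}$ gives $\J^{\otimes 2}\cong\Omega^1_{X_{\tred}/S}$. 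The identification $(\Pi\J)^{\otimes 2}=\J^{\otimes 2}$ is automatic since the sign rule forces $|\xi\otimes\xi|=0$, making both sides classical rank $1$ line bundles on $X_{\tred}$.

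The main obstacle will be the careful bookkeeping of parities, together with establishing the relative version of \cref{tangentsheaf}: I must verify that $\dis\hookrightarrow\mathcal{T}_{X/S}$ descends to $\dis_{\tred}\hookrightarrow(\mathcal{T}_{X/S})_{\tred}$ with image exactly the odd summand $\J^*$, so that taking the quotient by $\dis$ and then reducing isolates the classical part $\mathcal{T}_{X_{\tred}/S}$. This is a local computation using the explicit generator $D_\xi=\parxi+\xi\parx$, but it is the linchpin that converts the maximally non-integrable distribution condition into the classical spin structure relation.
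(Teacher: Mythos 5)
Your proposal is correct and follows essentially the same route as the paper: local splitness of $\stsh_X$ over a reduced base for (1), the decomposition of $(\mathcal{T}_X)_{\tred}$ into $\mathcal{T}_{X_{\tred}}\oplus\J^*$ from Proposition \ref{tangentsheaf}, and the parity decomposition of the reduced sequence $0\to\dis\to\mathcal{T}_{X/S}\to\dis^{\otimes 2}\to 0$ to obtain $\dis_{\tred}\cong\J^*$ and $\dis_{\tred}^{\otimes 2}\cong\mathcal{T}_{X_{\tred}}$, followed by dualizing. The only cosmetic difference is that you verify $\dis_{\tred}\cong\J^*$ by an explicit computation on the generator $D_\xi$, where the paper reads it off from the parity splitting of the exact sequence; both amount to the same identification.
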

\begin{proof}
Locally the structure sheaf $\stsh_X$ is isomorphic to a restriction of $\stsh_{S} \otimes \stsh_{\mathbb{C}^{1|1}}$. Thus the decomposition (3) above implies that $\stsh_{\tred}$ is locally a restriction of $\stsh_{S \times \mathbb{C}}$ and $\J$ is locally a restriction of $\Pi \stsh_{S \times \mathbb{C}}$. This gives the first assertion.

Let $\J^* = \mathcal{H}om_{\stsh_{\tred}}(\J, \stsh_{\tred})$ denote the dual line bundle of $\J$ over $X_{\tred}$ and $\mathcal{T}_X$ and $\mathcal{T}_{X_{\tred}}$ denote the relative tangent sheaves over $X$ and $X_{\tred}$ respectively. By proposition 3.2 above we have the canonical decomposition
$$ (\mathcal{T}_X)_{\tred} = (\mathcal{T}_X)_{\tred,0} \oplus (\mathcal{T}_X)_{\tred, 1} = \mathcal{T}_{X_{\tred}} \oplus \J^* $$
of $\stsh_{\tred}$ modules.

On a super Riemann surface we have the exact sequence
$$ 0 \longrightarrow \mathcal{D} \longrightarrow \mathcal{T}_X \longrightarrow \mathcal{D}^{\otimes 2} \longrightarrow 0 $$
which can be reduced and yields an exact sequence of line bundles on $X_{\tred}$.
$$ 0 \longrightarrow \mathcal{D}_{\tred} \longrightarrow (\mathcal{T}_X)_{\tred} \longrightarrow \mathcal{D}^{\otimes 2}_{\tred} \longrightarrow 0. $$
Decomposing the above exact sequence according to the $\mathbb{Z}_2$ grading in fact gives another identification of the even and odd sub $\stsh_{\tred}$ modules of $(\mathcal{T}_X)_{\tred}$, i.e. $ \dis_{\tred} \cong \J^*$ and $\dis^{\otimes 2}_{\tred} \cong \mathcal{T}_{X_{\tred}}$. These two facts yield the desired result, $(\Pi \J)^{\otimes 2} \cong \Omega^1_{X_{\tred}/S}$.

\end{proof}

In other words, we have that the reduction of supermoduli space $\mathfrak{M}_g$ of super Riemann surfaces of genus $g$ is the moduli space $\mathcal{SM}_g$ of genus $g$ Riemann surfaces equipped with a spin structure. 

We pause first to introduce some notation to make the following Corollary more precise. We denote by $\underline{\textbf{SSch}}, \underline{\textbf{Sch}}^{\tred}$ and $\underline{\textbf{Set}}$ the categories of superschemes, reduced schemes and sets respectively. One can easily seen that We have a natural adjunction between the functors $\tred: \underline{\textbf{SSch}} \to \underline{\textbf{Sch}}^{\tred}$ and $\beta: \underline{\textbf{Sch}}^{\tred} \to \underline{\textbf{SSch}}$ where $\beta$ simply views an ordinary reduced scheme as a superscheme in a trivial way. We denote by
$$
\epsilon: \beta \circ \tred \to 1_{\underline{\textbf{SSch}}}
$$
the counit of this adjunction.

\begin{cor}
The reduction of supermoduli space $\mathfrak{M}_g$ of super Riemann surfaces of genus $g$ is the moduli space $\mathcal{SM}_g$ of genus $g$ Riemann surfaces with a spin structure.
\end{cor}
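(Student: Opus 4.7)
The plan is to use the adjunction $\tred \dashv \beta$ together with the functor-of-points description of the two moduli objects. For any reduced scheme $S$, one has by definition $\mathfrak{M}_{g, \tred}(S) = \mathfrak{M}_g(\beta S)$, the set of isomorphism classes of families of SUSY curves $\pi: X \to \beta S$; so it suffices to produce a natural bijection between this set and the set of families of genus $g$ spin curves over $S$. Once established, this identification is compatible with base change along morphisms $S' \to S$ of reduced schemes and therefore descends to an isomorphism of the representing (super)stacks.

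The forward direction is essentially supplied by the preceding Proposition. Given $\pi: X \to \beta S$, form the reduced family $\pi_{\tred}: X_{\tred} \to S$ and equip it with the line bundle $\Pi \mathcal{J}_X$. By part (3) of the Proposition, $(\Pi \mathcal{J})^{\otimes 2} \cong \Omega^1_{X_{\tred}/S}$, so this is a relative spin structure. The assignment is natural in $S$ because both the reduction functor and the formation of $\mathcal{J}$ commute with pullback.

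For the reverse direction, given a family of spin curves $(C \to S, L)$ with $L^{\otimes 2} \cong \Omega^1_{C/S}$, I would build the associated split SUSY family $X \to \beta S$ by taking structure sheaf $\stsh_X := \underline{\mathrm{Sym}}_{\stsh_C}(\Pi L) = \stsh_C \oplus \Pi L$ (the symmetric algebra truncates because $\Pi L$ is odd of rank one). The superconformal distribution $\mathcal{D}$ is constructed locally: in a trivialization of $L$ with even coordinate $x$ and with $\xi$ a local odd generator of $\Pi L$, set $\mathcal{D} = \langle \partial_\xi + \xi \partial_x \rangle$; the spin isomorphism $L^{\otimes 2} \cong \Omega^1_{C/S}$ is precisely the cocycle condition needed to glue these local distributions into a global rank $0|1$ subbundle of $\mathcal{T}_{X/\beta S}$ satisfying the maximal non-integrability condition $\mathcal{D}^{\otimes 2} \xrightarrow{\sim} \mathcal{T}_{X/\beta S}/\mathcal{D}$.

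Finally, one checks that the two constructions are mutually inverse on isomorphism classes. Starting from $(C, L)$ and running the construction then reduction recovers $X_{\tred} = C$ and $\Pi \mathcal{J} = L$ by inspection. Conversely, starting from $X \to \beta S$, one uses that over a reduced base $\mathcal{J}^2 = 0$ (by item (1) of the Proposition, since $\mathcal{J}$ is locally free of rank $0|1$ over $\stsh_{\tred}$) so that $\stsh_X = \stsh_{X_\tred} \oplus \mathcal{J}$ canonically as an $\stsh_{X_\tred}$-algebra, reproducing the split model from $(X_\tred, \Pi \mathcal{J})$. I expect the main technical obstacle to be the gluing of the local superconformal distributions: one must check that the identification of $(\mathcal{T}_X)_{\tred, 1}$ with $\mathcal{J}^*$ supplied by Proposition 3.2, combined with the spin isomorphism, produces transition data for $\mathcal{D}$ consistent on triple overlaps, and that the resulting $\mathcal{D}$ is independent of the choices of local trivializations of $L$.
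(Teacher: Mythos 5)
Your proposal is correct and follows essentially the same route as the paper: identify the reduction with the functor of SUSY families over reduced bases via the adjunction $\tred \dashv \beta$, then invoke Proposition \ref{jspin} to pass to spin curves. You go further than the paper's one-line argument by explicitly constructing the inverse functor (the split model $\stsh_C \oplus \Pi L$ with the locally defined distribution $\langle \partial_\xi + \xi\partial_x\rangle$ glued via the spin isomorphism) and verifying the two assignments are mutually inverse — a step the paper leaves implicit in its citation of Proposition \ref{jspin} — so your version is, if anything, the more complete one.
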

\begin{proof}
The supermoduli space $\mathfrak{M}_g$ we view as the geometric object (precisely a superstack) which represents the functor (which we also denote by $\mathfrak{M}_g$)
$$
\mathfrak{M}_g: \underline{\textbf{SSch}} \longrightarrow \underline{\textbf{Set}}
$$
$$
S \longmapsto \{ \text{isomorphism classes of families of SUSY curves } X \to S \}.
$$
The reduction or reduced space $(\mathfrak{M}_g)_{\tred}$ is by definition the geometric object which represents the functor $\mathfrak{M}_g \circ \epsilon$, i.e. isomorphism classes of SUSY curves over a reduced base. By Proposition \ref{jspin} this is exactly $\mathcal{SM}_g$.

\end{proof}

It turns out that the moduli space $\mathfrak{M}_g$ is not a supermanifold but rather a much more general object known as a super algebraic stack. The specifics will not concern us as it does not do us too much harm in working with $\mathfrak{M}_g$ thinking intuitively that it is a supermanifold. We can remark however that the stackyness of supermoduli space is forced on us immediately as every SUSY curve has a canonical automorphism given simply by $f \mapsto (-1)^{|f|}f$.

We will also use the notion of a \emph{family of supercurves}, by which we mean simply a family $\pi: X \to S$ of complex supermanifolds of relative dimension $1|1$. Then a family of SUSY curves is a family of supercurves with the extra data of the odd distribution $\mathcal{D}$.

\subsection{The Berezinian of SUSY Curves}~

Let $\pi: X \to S$ denote a family of SUSY curves. Of fundamental importance to the theory is the exact seqeunce
\begin{equation} \label{fund_SES}
    0 \longrightarrow \mathcal{D} \longrightarrow \mathcal{T}_{X/S} \longrightarrow \mathcal{D}^{\otimes 2} \longrightarrow 0.
\end{equation}
The map $\mathcal{T}_{X/S} \to \mathcal{D}^{\otimes 2}$ is the composition $$\mathcal{T}_{X/S} \to \mathcal{T}_{X/S} / \mathcal{D} \overset{[\cdot,\cdot]^{-1}}{\rightarrow} \mathcal{D}^{\otimes 2}$$ which in local relative superconformal coordinates $x | \xi$ is the map determined by
$$
\parx \to D_{\xi} \otimes D_{\xi}, \hspace{.5cm} D_{\xi} \to 0.  
$$
Dualizing (\ref{fund_SES}) gives
\begin{equation} \label{dual_fund_SES}
    0 \longrightarrow \mathcal{D}^{\otimes(-2)} \longrightarrow \Omega^1_{X/S} \longrightarrow \mathcal{D}^{-1} \longrightarrow 0.
\end{equation}
The distinguished subsheaf $\mathcal{D}^{(\otimes -2)} = \mathcal{D}^{-2}$ corresponds to the dual of the quotient $\mathcal{T}_{X/S} / \mathcal{D}$, i.e. those relative one-forms that vanish identically on $\mathcal{D}$. Since $\mathcal{D}$ is locally generated by the odd vector field $\parxi + \xi \parx$, a quick calculation will show that $\mathcal{D}^{-2}$ is locally generated by $dx - \xi d\xi$.

Taking the Berezinians of (\ref{dual_fund_SES}) gives a canonical isomorphism
$$ \omega_{X/S} := \ber \Omega^1_{X/S} = \ber \dis^{-1} \otimes \ber \dis^{-2} $$

As $\dis^{-1}$ is of rank $0|1$ and $\dis^{-2}$ is of rank $1|0$, their Berezinians are canonically $\dis^{1}$ and $\dis^{-2}$ respectively, hence
$$ \omega_{X/S} \cong \dis^{-1}. $$

Thus, in local coordiantes, the relative Berezinian of $X$ over $S$ can be also be thought of as relative one-forms modulo $dx - \xi d\xi$. We will frequently denote relative Berezinian sheaf by $\w_{X/S}$, or simply by $\w$.

\subsubsection{Connection Between the Berezinian and One-Forms}~

In \cite{RSV} an interesting and useful connection was made between one-forms and sections of the Berezinian on a super Riemann surface. Combining the map $\Omega^1_{X/S} \to \mathcal{D}^{-1}$ of (\ref{dual_fund_SES}) and the isomorphism $\mathcal{D}^{-1} \cong \w$, we get a natural map taking holomorphic one-forms to sections of the Berezinian $\Omega^1_{X/S} \to \w$. In local coordinates $z | \tth$ this is
$$
f(z|\tth) dz + g(z|\tth) d\tth \mapsto ( g(z|\tth) + f(z|\tth) \tth) [dz \, | \, d\tth].
$$

This map cannot be an isomorphism as $\Omega^1_X$ has rank $1|1$ while $\w$ is of rank $0|1$, however in \cite{RSV} it was noticed that upon restriction to $d$-\emph{closed} one-forms, we do get an isomorphism (here $d$ is the usual exterior derivative). The inverse map we denote by $\alpha: \w \to Z^1_X := \{ \text{closed holomorphic one-forms} \}$. It is given in coordinates as, for $\sigma = f(z|\tth) [dz \, | \, d\tth]$,
\begin{equation} \label{alpha}
\alpha(\sigma) : = d\tth f(z|\tth) + \varpi D_{\tth}f(z|\tth),
\end{equation}
where $\varpi := dz - \tth d\tth$ is the local generator of $\mathcal{D}^{-2}$ and $D_{\tth}$ is the usual local generator of the distribution. Note that above we have followed the convention in \cite{RSV} and have written the coefficient functions \emph{to the right} of the forms $d\tth$ and $\varpi$. 

One can check that the local coordinate definition (\ref{alpha}) is well-defined and gives a genuine map $\alpha: \w \to Z^1_X$. A coordinate invariant description of $\alpha$ is described in \cite{wit1}, it is related to the notion of \emph{picture number} and \emph{picture changing operators} in string theory. We will not need these notions here and so omit further discussion.

The natural map $\Omega_{X/S}^1 \to \w$ corresponds to what one might consider the ``\emph{super exterior derivative}'' $d : \oo \to \w$, very analogous to the classical situation. Locally this maps $f = f(z|\tth)$ to 
$$
f \mapsto df = D_{\tth}(f) [dz \, | \, d\tth].
$$

\bigskip

\bigskip

\subsection{The Sheaf of Superconformal Vector Fields}~

On a super Riemann surface one also has the \emph{sheaf of superconformal vector fields} $\mathcal{W}$. These are vector fields that preserve the supersymmetry in the sense that $[\mathcal{W},\mathcal{D}] \subset \mathcal{D}$. We remark that the sheaf $\mathcal{W}$ is {\bf not} an $\stsh_X$-module but only a sheaf of $\mathbb{C}$ vector spaces. Nevertheless $\mathcal{W}$'s utility will be in the fact that it generates automorphisms of the the super Riemann surface and thus will help us identify tangent spaces to the Moduli spaces of interest.

Locally in superconformal coordinates $x | \xi$, a vector field $V$
$$
V = f(x, \xi) \parx + g(x, \xi) D_{\xi} 
$$
is in $\mathcal{W}$ if and only if $[V,D_{\xi}] = 0$ mod $\mathcal{D}$. A quick computation will give
\begin{equation}
\begin{split}
[V,D_{\xi}] & \equiv ( 2g(x,\xi) - (-1)^{|V|}D_{\xi}f(x,\xi) ) \parx \,\, \text{mod} \,\, \mathcal{D}
\end{split}
\end{equation}
which yields that the local form on a section of $\mathcal{W}$ is
$$
V = f(x, \xi) \parx + (-1)^{|V|}\frac{D_{\xi}f(x,\xi)}{2} D_{\xi}.
$$
The above local form of $V$ implies that the natural map (of sheaves of super $\mathbb{C}$ vector spaces)
$$ \mathcal{W} \to \mathcal{T}_X/ \mathcal{D} $$
is an isomorphism. We emphasize that the above map is {\bf not} a map between two vector bundles, but nevertheless it will be useful to us in computation of cohomology.

\bigskip

\section{Residues and Serre Duality on SUSY Curves}

\subsection{Theory of Residues - Basic Definitions}

\begin{defx}
A SUSY-disk, is a non-compact super Riemann surface $\Delta \subset \Cx^{1|1}$ whose underlying topological space is a classical disk in $\Cx$
$$
|\Delta| = \{ z \in \Cx \, : \, |z| < \epsilon \},
$$
with a choice of global superconformal coordinates $z  | \tth$ on $\Delta$. That is, we take $\partial_{\tth} + \tth \partial_z$ as the generator for the odd distribution $\mathcal{D}$.

\end{defx}

\begin{defx} (\textbf{Residue - Absolute Case}) Let $\Delta$ be a SUSY-disk with coordinates $z|\tth$. Then given a meromorphic section $\eta$ of $\w = \ber \Omega^1_{\Delta}$ we write
$$
\eta = \left( \sum_{k \geq -N}^{\infty} (\alpha_k + \beta_k \tth) z^k \right ) [dz \, | \, d\tth]
$$
for $\alpha_k, \beta_k \in \Cx$. Then the residue of $\eta$ at zero is
$$
\emph{res}_0(\eta) := \beta_{-1}.
$$
\end{defx}

It is easy to check that this is independent of superconformal change of coordinates on $\Delta$.

\begin{defx}
Let $\pi: X \to S$ be a SUSY family and $q: S_{\tred} \to X_{\tred}$ a section of the reduced family. A SUSY-tubular neighborhood of $q$ is an open set $U \subset X$ containing $\text{Im} \, q$ with the restricted superconformal structure. We say a SUSY-tubular neighborhood is trivial if there exists a SUSY-disk $\Delta$ and an isomorphism of families of super Riemann surfaces 
$$
\alpha: U \cong S \times_{\Cx} \Delta
$$
such that the following diagram commutes
$$
\begin{tikzcd}
U_{\tred} \arrow[rd, "\pi_{\tred}"] \arrow[rr, "\alpha_{\tred}"] &                                                                    & (S \times_{\Cx} \Delta)_{\tred} \arrow[ld, "(\text{pr}_1)_{\tred}"'] \\
                                                                 & S_{\tred} \arrow[lu, "q", bend left] \arrow[ru, "q'"', bend right] &                                                                     
\end{tikzcd}
$$
where $q'(s) = (s,0)$.
\end{defx}

Each such trivial SUSY-tubular neighborhood then has a canonical set of relative superconformal coordinates coming from the pullback of the standard ones $z  |  \tth$ on the SUSY-disk via $\alpha$. We will frequently abuse notation and still denote these pullbacks by $z | \tth$.

\begin{defx} (\textbf{Residue - Relative Case}) \label{res_rel_def} Let $\pi:X \to S$ be a SUSY family, $q: S_{\tred} \to X_{\tred}$ a section and $U$ a trivial SUSY-tubular neighborhood of q. Let $j_q = j: U \setminus \emph{Im}\, q \to X$ denote the inclusion.

Suppose $\eta$ is a local section of $\pi_*j^*\w$. Write
$$
\eta = \left( \sum_{k \geq -N}^{\infty} (\alpha_k + \beta_k \tth) z^k \right ) [dz \, | \, d\tth],
$$
for $\alpha_k, \beta_k$ local functions on $S$. Then we define the residue of $\eta$ at $q$
$$
\emph{res}_q(\eta) := \beta_{-1}.
$$
Thus the residue yields a morphism
$$
\emph{res}_q : \pi_*j^*\w \to \oo_S
$$
\end{defx}

Of course, one must go ahead and prove that the above definition is independent of all the choices made but this is done very much in similar fashion as in the classical case. Here however, it is very important we work on SUSY curves as only superconformal coordinate transformations preserve the residue.

\subsection{The Relative \v{C}ech Complex}~ \label{dist_rel_cech}

Let $\pi: X \to S$ be a family of SUSY curves and assume we are given sections $q_1, \dots, q_r: S_{\tred} \to X_{\tred}$, for some $r \geq 1$, of the reduced family $\pi_{\tred}: X_{\tred} \to S_{\tred}$ such that the following is true: For each $k=1, \dots, r$ we can find a trivial SUSY-tubular neighborhood $U_k$ of $q_k$
        $$
        U_k \cong S \times_{\Cx} \Delta_k.
        $$
        so that the $U_k$'s are pairwise disjoint.

For each $k=1,\dots,r$ choose superconformal coordinates $z_k|\tth_k$ on the corresponding SUSY-disks $\Delta_k$ and identify them with relative local superconformal coordinates on $U_k$. Identify $q_k$ with its image and let $j_k = j_{q_k}: U_k \setminus q_k \to X$ denote the inclusion. 

Under these assumptions we define a canonical relative \v{C}ech complex for which we will use for cohomology computations and give an explicit description of Serre duality.

Define the open set $U^0 = X \setminus (q_1 \cup \dots \cup q_r)$ and let $\mathcal{F}$ be quasi-coherent and flat over $S$. For any open $U \subset X$ with natural inclusion $j: U \to X$ we denote the ``push-pull" sheaf in the usual way
$$
j_*j^*\mathcal{F} = \mathcal{F}{\big|}_{U}.
$$
For each such $U$ we have a natural morphism of sheaves
$$
\mathcal{F} \to \mathcal{F}{\big|}_{U}.
$$
\begin{defx} (Relative \v{C}ech Complex) \label{def_rel_cech} Let $\mathcal{F}$ be a quasi-coherent sheaf on $X$, flat over $S$. In the notation used above the relative \v{C}ech complex $\check{C}^{\bullet}(X,\mathcal{F})$ is the complex
\begin{equation} \label{cech_com2}
    0 \longrightarrow \mathcal{F}{\big|}_{U^0} \oplus  \bigoplus_{k=1}^r \mathcal{F}{\big|}_{U_k} \overset{d}{\longrightarrow} \bigoplus_{k=1}^r \mathcal{F}{\big|}_{U_k \setminus q_k} \longrightarrow 0.
\end{equation}
where
\begin{equation*}
    \left ( d(s_0, s_1, \dots, s_r) \right )_k = s_0{\big |}_{U_k \setminus q_k} - s_k{\big |}_{U_k \setminus q_k}.
\end{equation*}
\end{defx}

This complex will prove useful for computations below thanks to the following proposition.

\begin{prop} The complex
$$
0 \longrightarrow \mathcal{F} \longrightarrow \check{C}^{\bullet}(X,\mathcal{F})
$$
is a $\pi_*$-acyclic resolution of $\mathcal{F}$. Hence, in particular we have the natural identification 
$$
R^i\pi_*\mathcal{F} = H^i(\pi_*\check{C}^{\bullet}(X,\mathcal{F})).
$$
\end{prop}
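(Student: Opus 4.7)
The plan is to verify two things in sequence: first that $0 \to \mathcal{F} \to \check{C}^{\bullet}(X,\mathcal{F})$ is an exact sequence of sheaves on $X$ (so it really is a resolution), and second that every term of $\check{C}^{\bullet}(X,\mathcal{F})$ is $\pi_{*}$-acyclic. Once both are established, the standard homological-algebra fact that $\pi_{*}$-acyclic resolutions compute the derived pushforward gives the identification $R^{i}\pi_{*}\mathcal{F} = H^{i}(\pi_{*}\check{C}^{\bullet}(X,\mathcal{F}))$.

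For exactness, observe that the complex is nothing other than the augmented \v{C}ech complex for the open cover $\{U^{0}, U_{1},\dots, U_{r}\}$ of $X$. The $U_{k}$'s are pairwise disjoint by hypothesis, so the only nonempty two-fold intersections are $U^{0}\cap U_{k}=U_{k}\setminus q_{k}$, and there are no nonempty triple intersections; this is precisely why the truncated two-term form of $\check{C}^{\bullet}$ appears. I would check exactness stalkwise: for $x$ in $U^{0}\setminus\bigcup U_{k}$ the sequence collapses to $\mathcal{F}_{x}\to \mathcal{F}_{x}\to 0$; for $x\in U_{k}\setminus q_{k}$ it is $\mathcal{F}_{x}\to \mathcal{F}_{x}\oplus \mathcal{F}_{x}\to \mathcal{F}_{x}$ with diagonal followed by difference; and for $x\in q_{k}$, the stalks of $\mathcal{F}|_{U^{0}}$ and of $\mathcal{F}|_{U_{k}\setminus q_{k}}$ both equal the ``punctured germ'' $\varinjlim_{V\ni x}\mathcal{F}(V\setminus q_{k})$, and the sequence is again exact by direct inspection.

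For the $\pi_{*}$-acyclicity, each term has the form $j_{*}j^{*}\mathcal{F}$ for an open inclusion $j:V\hookrightarrow X$ with $V\in\{U^{0}, U_{k}, U_{k}\setminus q_{k}\}$. In each case $V$ is relatively Stein (equivalently, affine, via super GAGA as invoked earlier in the paper) over $S$: the $U_{k}\cong S\times_{\mathbb{C}}\Delta_{k}$ are literal products with a SUSY disk, $U_{k}\setminus q_{k}$ is a product with a punctured SUSY disk, and $U^{0}$ is a family whose fibers are genus-$g$ super Riemann surfaces with $r\ge 1$ punctures, hence super Stein. Composition of derived functors gives $R^{i}\pi_{*}(j_{*}j^{*}\mathcal{F})=R^{i}(\pi|_{V})_{*}(\mathcal{F}|_{V})$, and a relative super Cartan B / Dolbeault argument, together with flatness of $\mathcal{F}$ over $S$, kills these for $i>0$.

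The routine parts are the stalk computations; the genuine work lies in the second step, namely establishing the relative vanishing $R^{i}(\pi|_{V})_{*}(\mathcal{F}|_{V})=0$ for $i>0$ in the super holomorphic setting. This is the main obstacle and is handled by passing to the algebraic category via super GAGA, where the relative affineness of each $V$ over $S$ gives the vanishing immediately for quasi-coherent $\mathcal{F}$.
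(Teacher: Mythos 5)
Your overall structure---exactness of the augmented \v{C}ech complex plus termwise $\pi_*$-acyclicity---is the right one, and your stalkwise verification of exactness is essentially correct (it is in fact more than the paper records: the paper takes the resolution property for granted and only proves acyclicity of the terms). The genuine gap is in how you propose to establish the acyclicity. You reduce to the vanishing of $R^i(\pi|_V)_*(\mathcal{F}|_V)$ for the non-proper maps $\pi|_V : V \to S$, and you say the ``main obstacle'' is then ``handled by passing to the algebraic category via super GAGA, where the relative affineness of each $V$ over $S$ gives the vanishing.'' This cannot work for two of the three kinds of opens: the tubular neighborhoods $U_k \cong S \times_{\mathbb{C}} \Delta_k$ and the punctured ones $U_k \setminus q_k$ are analytic disks and punctured disks; they are not Zariski-open in $X$ and admit no algebraic model, so GAGA (a statement about coherent sheaves on proper or projective objects) says nothing about them. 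Moreover, a general vanishing theorem for higher direct images along non-proper analytic morphisms with Stein fibers is not something you can simply cite, and the paper explicitly cautions that Grauert-type theorems fail in the super setting.

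The paper sidesteps all of this by never leaving the proper map $\pi : X \to S$. For each term $\mathcal{F}|_U = j_*j^*\mathcal{F}$ and each point $s \in S$, the restriction to the compact fiber $X_s$ is a sheaf supported on the Stein open $U \cap X_s \subset X_s$, so $H^i(X_s, \mathcal{F}|_{U \cap X_s}) = 0$ for $i \geq 1$ by Cartan B on the fiber; the cohomology-and-base-change theorem for the proper morphism $\pi$ then makes the base change map to $0$ trivially surjective, hence an isomorphism, and Nakayama gives $R^i\pi_*(\mathcal{F}|_U) = 0$. If you want to repair your argument, replace the ``relative affineness via GAGA'' step with this fiberwise argument on the compact fibers; the rest of your write-up (the identification of the complex as the \v{C}ech complex of the cover $\{U^0, U_1, \dots, U_r\}$ with no triple intersections, and the resulting stalk computation) can stand as is.
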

\begin{proof}
Let $U \in \{U^0, U_1, \dots, U_r, U_1 \setminus q_k, \dots, U_r \setminus q_r \}$ be one of the open sets which appear in the definition of $\check{C}^{\bullet}(X, \mathcal{F})$ and $j: U \to X$ the inclusion. Let $s \in S$ be a point in the base. We claim that for $i \geq 1$
\begin{equation} \label{claim1_acyclic}
    H^i(X_s, (\mathcal{F}{\big|}_{U}){\big|}_{X_s}) = H^i(X_s, \mathcal{F}{\big|}_{U \cap X_s}) = 0.
\end{equation}
Indeed $\mathcal{F}{\big|}_{U \cap X_s}$ has support in the open set $U \cap X_s \subset X_s$ which is a Stein manifold. Hence all higher sheaf cohomology groups vanish.

By (\ref{claim1_acyclic}) the natural base change map
\begin{equation}
    \left( R^i\pi_*\mathcal{F}{\big |}_U \right )_s \otimes_{\oo_{S,s}} k(s) \longrightarrow H^i(X_s, \mathcal{F}{\big|}_{U \cap X_s})
\end{equation}
is trivially surjective, hence an isomorphism by the cohomology and base change theorm. Thus by Nakayama, $R^i\pi_*\mathcal{F}{\big |}_U = 0$ for $i \geq 1$.
\end{proof}

\subsection{Serre Duality and the Trace Map}~

If $\mathcal{F} = \w = \ber \Omega^1_{X/S}$ is the relative Berezinian sheaf then by definition (\ref{res_rel_def}) and direct sum we obtain a map on the pushforward of relative 1-cochains
$$
\sum_{k=1}^r \text{res}_{q_k}: \pi_*\check{C}^1(X, \w) \to \oo_S,
$$
which clearly descends to a map on the quotient
$$
\sum_{k=1}^r \text{res}_{q_k}: R^1\pi_*\w = \pi_*\check{C}^1(X, \w)/d\pi_*\check{C}^0(X, \w) \to \oo_S.
$$
This is the \emph{trace map} of Serre duality $\text{tr}: R^1\pi_*\w \to \oo_S$ and induces a perfect pairing 
$$
R^i\pi_*(\mathcal{F}) \otimes R^{1-i}\pi_*(\mathcal{F}^{\vee} \otimes \w) \to \oo_S
$$
for $\mathcal{F}$ locally free. A more general and complete account of Serre duality in the super case is given in \cite{elmsg}, but for our purposes the previous description suffices.

\subsection{An Induced Long Exact Sequence Computation.}~ \label{main_comp}

  Suppose we have a family of SUSY-curves $\pi:X \to S$ for which we are given a global section $t$ of an invertible sheaf $\mathcal{F}$, flat over $S$, such that $t_{\tred}$ has simple zeros $\{q_1, \dots, q_r\}$. From this data we get a canonical short exact sequence
$$
0 \longrightarrow \oo_X \overset{t}{\longrightarrow} \mathcal{F} \longrightarrow \mathcal{F}|_{T} \longrightarrow 0
$$
where $T = \{ t=0 \}$, and an induced long exact sequence of sheaves of higher direct images
$$
0 \longrightarrow \pi_*\oo_X \overset{t}{\longrightarrow} \pi_*\mathcal{F} \longrightarrow \pi_*\mathcal{F}{\big |}_T \overset{\delta}{\longrightarrow} R^1\pi_*\oo_X \overset{t}{\longrightarrow} R^1\pi_*\mathcal{F} \longrightarrow 0.
$$

Decompose $T = \sum T_k$ into pairwise disjoint prime divisors so that $T_{red} = \sum q_k$. Then we can view each $q_k$ as a section of the reduced family $q_k: S_{\tred} \to X_{\tred}$. Assume furthermore we can find trivial SUSY tubular neighborhoods as in the assumptions of Section \ref{dist_rel_cech}.

We wish to analyze this long exact sequence in more detail within the context of the relative \v{C}ech complex in Definition (\ref{def_rel_cech}). We have the following diagram
$$
\begin{tikzcd}
\pi_*\oo_X \arrow[r, "t"] \arrow[d]                                             & \pi_*\mathcal{F} \arrow[r] \arrow[d]                                                                            & \pi_*\mathcal{F}{\big |}_T \arrow[d]                                                                                \\
\pi_*\oo_{U^0} \oplus \bigoplus_{k=1}^r \oo_{U_k} \arrow[r, "t"] \arrow[d, "d"] & \pi_*\mathcal{F}{\big |}_{U^0} \oplus \bigoplus_{k=1}^r \pi_*\mathcal{F}{\big |}_{U_k} \arrow[r] \arrow[d, "d"] & \pi_*\mathcal{F}{\big |}_{T \cap U^0} \oplus \bigoplus_{k=1}^r \pi_*\mathcal{F}{\big |}_{T \cap U_k} \arrow[d, "d"] \\
\bigoplus_{k=1}^r \oo_{U_k \setminus q_k} \arrow[r, "t"] \arrow[d]              & \bigoplus_{k=1}^r \pi_*\mathcal{F}{\big |}_{U_k \setminus q_k} \arrow[r] \arrow[d]                              & \bigoplus_{k=1}^r \pi_*\mathcal{F}{\big |}_{T \cap (U_k \setminus q_k)} \arrow[d]                                   \\
R^1\pi_*\oo_X \arrow[r, "t"]                                                    & R^1\pi_*\mathcal{F} \arrow[r]                                                                                   & 0                                                                                                                  
\end{tikzcd}
$$

The connecting homomorphism $\delta: \pi_*\mathcal{F}{\big |}_T \to R^1\pi_*\oo_X$ is computed in the usual snake-lemma way, by follwing the zig-zag pattern on the above diagram starting from the upper right to the lower left. Specifically let $f \in \pi_*\mathcal{F}{\big |}_T$ be a local section and interpret $f$ as the element
$$
f = (f^0 = 0, f_1, \dots, f_r)
$$
where $f_k \in \pi_*\mathcal{F}{\big |}_{T \cap U_k}$. Lift $f$ to an element $\Tilde{f} = (0, \Tilde{f}_1, \dots, \Tilde{f}_r )$ and apply the \v{C}ech differential
$$
d \Tilde{f} = (\Tilde{f}_1|_{U_1 \setminus q_1}, \dots, \Tilde{f}_r|_{U_r \setminus q_r}).
$$
Lifting then under the multiplication by $t$ map (division by $t$ as we are away from $\{ t = 0 \}$ and looking at the image in the quotient, we conclude that the connecting homomorphism $\delta$ sends $f \in \pi_*\mathcal{F}{\big |}_T$ to the cohomology class
$$
\delta(f) = \left[ \frac{d \Tilde{f}}{t} \right ] = \left [\frac{\Tilde{f}_1}{t}{\bigg |}_{U_1 \setminus q_1}, \dots, \frac{\Tilde{f}_r}{t}{\bigg |}_{U_r \setminus q_r}  \right ] \in R^1\pi_*\oo_X.
$$
The composition of this with the Serre-dual map $R^1\pi_*\oo_X \to (\pi_*\w)^{\vee}$ we denote by $\delta'$. This is explicitly
\begin{equation} \label{delta'_def}
    \begin{split}
        & \delta': \pi_*\mathcal{F}{\big |}_T \to (\pi_*\w)^{\vee} \\
        f \mapsto \bigg ( \eta & \mapsto \sum_{k=1}^r \text{res}_{q_k} \bigg( \eta \big|_{U_k} \frac{\Tilde{f}_k}{t}{\bigg |}_{U_k \setminus q_k} \bigg) \bigg ).
    \end{split}
\end{equation}

We will use this explicit description of $\delta'$ multiple times in the main arguments presented below. We also consider a slightly more general argument than that just given; namely when $\oo_X$ is replace by some other invertible sheaf $\mathcal{F}'$. The changes are minimal and we do not pause to comment further.

\section{Punctures}~

Scattering amplitudes of superstring theory are written as integrals over moduli spaces of slightly more general objects than strictly super Riemann surfaces. These are \emph{punctured super Riemann surfaces} which we discuss now.

There are two types of punctures one can consider in the theory of SUSY curves, known as Neveu-Schwarz and Ramond punctures. Neveu-Schwarz punctures are more familiar, while Ramond punctures are a bit exotic. We focus on Ramond punctures first.

\medskip

\subsection{Ramond Punctures}~ \label{ramondpunctures}

Suppose $\pi: X \to S$ of $1 | 1$ is a family of supercurves along with an odd distribution $\mathcal{D} \subset \mathcal{T}_{X/S}$ such that the Lie bracket
$$
\mathcal{D}^{\otimes 2} \xrightarrow{[\cdot, \cdot]} \mathcal{T}_{X/S}/\mathcal{D}
$$
fails to be an isomorphism along an effective relative divisor $\mathcal{F}$, in the sense that instead $[\cdot, \cdot]$ induces an isomorphism
$$
\mathcal{D}^{\otimes 2} \xrightarrow{[\cdot, \cdot]} \mathcal{T}_{X/S}/\mathcal{D} \otimes \stsh_X(-\mathcal{F}).
$$
In this case the family $\pi: X \to S$ is called \emph{a family of super Riemann surfaces with Ramond punctures} or a \emph{family of SUSY curves with Ramond punctures}. The divisor $\mathcal{F}$ is called the \emph{Ramond divisor}. If we write $\mathcal{F}$ as a sum of minimal divisors (irreducible divisors)
$$
\mathcal{F} = \sum_{k=1}^{n_R} \mathcal{F}_k,
$$
then each $\mathcal{F}_k$ is called a \emph{Ramond puncture}. One can think of a Ramond puncture as a ``puncture" in the distribution $\mathcal{D}$ itself.

Locally near a Ramond puncture $\mathcal{F}_k$ we can find a coordinate chart $z | \zeta$ so that $\mathcal{F}_k$ is given by $z = 0$ and that $\mathcal{D}$ is locally generated by $D^*_{\zeta} = \partial_\zeta + z \zeta \partial_z$. Such coordinates are also called \emph{superconformal}. The usual exact sequence now becomes
$$ 0 \longrightarrow \mathcal{D} \longrightarrow \mathcal{T}_X \longrightarrow \mathcal{D}^{2}(\mathcal{F}) \longrightarrow 0. $$
Dualizing and taking Berezinians we conclude $\omega = \ber X/S \cong \mathcal{D}^{-1}(-\mathcal{F})$. In fact, in this case $\omega$ remains a relative dualizing sheaf.

The reduction of the Ramond divisor is a sum of points on the Riemann surface $X_{\tred}$,
$$
\mathcal{F}_{\tred} = \sum_{k=1}^{n_R} q_k
$$
thus reducing the above exact sequence gives
$$ 0 \longrightarrow \mathcal{D}_{\tred} \longrightarrow (\mathcal{T}_X)_{\tred} \longrightarrow \mathcal{D}_{\tred}^{\otimes 2}(\sum^{n_R}_{k=1} q_k) \longrightarrow 0. $$
Note that the rank $0 | 1$ line bundle $\mathcal{D}_{\tred}$ on $X_{\tred}$ is concentrated in odd degree, i.e. its even part $\mathcal{D}_{\tred,0}$ is zero. Similarly the odd part of $\mathcal{D}_{\tred}^{\otimes 2}(\sum^{n_R}_{k=1} q_k) = 0$ is zero and hence the above exact sequence actually splits canonically
$$
(\mathcal{T}_X)_{\tred} = (\mathcal{T}_X)_{\tred,0} \oplus (\mathcal{T}_X)_{\tred,1} \cong \mathcal{D}_{\tred}^{\otimes 2}(\sum^{n_R}_{k=1} q_k) \oplus \mathcal{D}_{\tred}.
$$
Now by the Proposition \ref{tangentsheaf} we always have the identification $(\mathcal{T}_X)_{\tred,0} = \mathcal{T}_{X_{\tred}}$, thus
$$
\mathcal{D}_{\tred}^{\otimes 2}(\sum^{n_R}_{k=1} q_k) \cong \mathcal{T}_{X_{\tred}}.
$$
In particular taking degrees gives
$$
2 \, \text{deg }\mathcal{D}_{\tred} + n_R = 2-2g,
$$
where $g$ is the genus of the surface $X$. One concludes that the number of Ramond punctures $n_R$ must be even. 

We denote the moduli space of super Riemann surfaces with $n_{R}$ Ramond punctures by $\mathfrak{M}_{g;n_{R}}$.

\bigskip

\medskip

\subsection{Neveu-Schwarz Punctures}~ \label{nspuctures_sec}

Suppose $\pi: X \to S$ is a SUSY family. A \emph{Neveu-Schwarz} (NS) puncture is simply a section $s: S \to X$ of the map $\pi$. Such a section is locally of the form $ U \to U \times \Cx^{1|1}$, for $U \subset S$ open, and hence equivalent to a choice of an even and odd function on $U$. Hence it is common to say that an NS puncture is given in local superconformal coordinates by $z = z_0, \zeta = \zeta_0$ for some choice of even and odd functions $z_0, \zeta_0$ on the base $S$. 

Given an NS puncture $s$ we have a natural associated divisor using the distribution $\mathcal{D}$. Namely, we use $s$ to pullback $\mathcal{D}$ and then take its total space $s^*\mathcal{D}^{\text{tot}}$.
$$
\begin{tikzcd}
(s^*\mathcal{D})^{\text{tot}} \arrow[r] \arrow[d]
& \mathcal{D}^{\text{tot}} \arrow[d, "p"] \\
S \arrow[r, "s"] & X
\end{tikzcd}
$$
This gives a subvariety $(s^*\mathcal{D})^{\text{tot}} \to X$, which is of relative dimension $0 | 1$ over $S$. We will denote this subvariety associated to $s$ by div$(s)$. Given such a family $\pi:X \to S$ with $n_{NS}$ NS punctures $s_1, \dots, s_{n_{NS}}$, we denote by $N = \sum_{j=1}^{n_{NS}} \text{div}(s_j)$ the \emph{Neveu-Schwarz divisor}.

We denote the moduli space of super Riemann surfaces with $n_{NS}$ Neveu-Schwarz punctures by $\mathfrak{M}_{g;n_{NS}}$.

\bigskip

\section{The Moduli Spaces $\mathfrak{M}_{g,n_{NS},n_R}$} ~

We now turn our attention to the various moduli spaces of interest. By $\mathfrak{M}_{g,n_{NS},n_R}$ we mean the moduli space (stack) of super Riemann surfaces of genus $g$ with $n_{NS}$ Neveu-Schwarz and $n_{R}$ Ramond punctures. If one or both of $n_{NS}$ or $n_R$ is zero we will simply write $\mathfrak{M}_{g}, \mathfrak{M}_{g,n_{NS}}$ or $\mathfrak{M}_{g,n_R}$. These stacks are fine moduli spaces in their appropriate categories.

\medskip

\subsection{The Tangent Sheaf of $\mathfrak{M}_g$}~

A closed point in $\mathfrak{M}_g$ corresponds to a super Riemann surface $X_0$, and the tangent space
$$
T_{X_0}\mathfrak{M}_g = (T_{X_0}\mathfrak{M}_g)_0 \oplus (T_{X_0}\mathfrak{M}_g)_1
$$
splits as usual. One has the following characterization of the even an odd parts of $T_{X_0}\mathfrak{M}_g$: an even tangent vector to $\mathfrak{M}_g$ at $X_0$ is a map
$$
\text{Spec }\mathbb{C}[\varepsilon]/(\varepsilon^2) \to \mathfrak{M}_g
$$
where $\varepsilon$ is an even parameter and mapping the unique $\mathbb{C}$ point of $\text{Spec }\mathbb{C}[\varepsilon]/(\varepsilon^2)$ to $X_0$. By definition of a fine moduli space this is equivalent to a family over the base $\text{Spec }\mathbb{C}[\varepsilon]/(\varepsilon^2)$ induced by the universal curve $\mathcal{C}_g$
$$
\begin{tikzcd}
\text{Spec }\mathbb{C}[\varepsilon]/(\varepsilon^2) \times _{\mathfrak{M}_g} \mathcal{C}_g \arrow[r] \arrow[d]
& \mathcal{C}_g \arrow[d, "p"] \\
\text{Spec }\mathbb{C}[\varepsilon]/(\varepsilon^2) \arrow[r, "s"] & \mathfrak{M}_g
\end{tikzcd}
$$
whose special fiber is $X_0$. In other words an element of $(T_{X_0}\mathfrak{M}_g)_0$ is family $X \to \text{Spec }\mathbb{C}[\varepsilon]/(\varepsilon^2)$ whose fiber over the point $(\varepsilon)$ is $X_0$. We will call such a family an \emph{even first order deformation} of the single super Riemann surface $X_0$.

The argument is similar for the analysis of $(T_{X_0}\mathfrak{M}_g)_1$, namely an odd tangent vector at $X_0$ is a family $X \to \text{Spec }\mathbb{C}[\eta]$ with $\eta$ an odd parameter and whose fiber over $(\eta)$ is $X_0$. We will call these families \emph{odd first order deformations} of $X_0$.

Now the key observation is that families over the two special bases specified above with special fiber $X_0$ are parametrized by the cohomology group $H^1(X_0, \mathcal{W}_0)$, where $\mathcal{W}_0$ is the sheaf of superconformal vector fields on $X_0$. This observation follows from, e.g., the \v{C}ech description of $H^1(X_0, \mathcal{W}_0)$ as locally automorphisms of $X_0$ are described by infinitesimal automorphisms, i.e. vector fields preserving the given structure. In conclusion 
$$
T_{X_0}\mathfrak{M}_g = H^1(X_0, \mathcal{W}_0),
$$
or in global terms in turns out \cite{wit1},
$$
\mathcal{T}_{\mathfrak{M}_g} = R^1\pi_*\mathcal{W}
$$
where $\mathcal{W}$ is the sheaf of relative superconformal vector fields and $\pi: \mathcal{C}_g \to \mathfrak{M}_g$ is the projection from the universal curve.

This observation allows us to compute the dimension of $\mathfrak{M}_g$. To simplify notation denote by $X_0$ simply $X$ and $\mathcal{W}_0$ by $\mathcal{W}$. As mentioned above the sheaf $\mathcal{W}$ is canonically isomorphic (as a sheaf of $\mathbb{C}$-vector spaces) with the quotient sheaf $\mathcal{T}_{X} / \mathcal{D}$ and hence, by the SUSY structure, to $\mathcal{D}^{\otimes 2}$. As $X$ is a single SUSY curve we have the natural splitting of $\stsh = \stsh_{\tred} \oplus \J$ and also every invertible sheaf on $X$ splits in a similar way, in particular
$$
\mathcal{W} \cong \mathcal{D}^{\otimes 2} = \mathcal{D}^{\otimes 2}_{\tred} \oplus ( \J \otimes \mathcal{D}^{\otimes 2}_{\tred}). 
$$
By the proof of Proposition \ref{tangentsheaf} we saw that $\mathcal{D}_{\tred} \cong \J^{-1}$ and that $\Pi \mathcal{D}_{\tred}^{-1} = \Omega^{\otimes 1/2}_{X_{\tred}}$ was a square root of the canonical bundle $\Omega_{X_{\tred}}$ of the reduced space. Thus by classical Serre duality on the reduced space, we conclude that the even and odd parts of $H^1(X, \mathcal{W})$ are,
$$
H^1(X, \mathcal{W})_0 \cong H^1(X, \mathcal{D}^{\otimes 2}_{\tred}) = H^0(X, \Omega^{\otimes 2}_{X_{\tred}})^*
$$
$$
H^1(X, \mathcal{W})_1 \cong H^1(X, \mathcal{D}_{\tred}) = \Pi H^0(X, \Omega^{\otimes 3/2}_{X_{\tred}})^*.
$$
The Riemann Roch theorem on $X_{\tred}$ then immediately gives
$$
\text{dim } \mathfrak{M}_g = 3g - 3 \, | \, 2g-2.
$$
Adjusting the above arguments slightly allows one to compute the dimensions of the moduli spaces with punctures. One must now consider a subsheaf $\mathcal{W}'\subset \mathcal{W}$ of the sheaf of superconformal vector fields that yield infinitesimal automorphisms of punctured SUSY curves. The sheaf $\mathcal{W}'$ still has the important property that
$$ R^1\pi_*\mathcal{W}' = T_{\mathfrak{M}_{g, n_{NS}, n_{R}}}. $$
We analyze the sheaf $\mathcal{W}'$ separately for the cases of Neveu-Schwarz and Ramond punctures.

\medskip

\subsection{The Tangent Sheaf of $\mathfrak{M}_{g, n_{NS}}$}~

For a family of SUSY curves $\pi: X \to S$ with $n_{NS}$ punctures $s_1, \dots, s_{n_{NS}}$, sections of $\mathcal{W}'$ are defined to be vector fields preserving the distribution $\mathcal{D}$ in the sense that $[\mathcal{W}', \mathcal{D}] \subset \mathcal{D}$ which also vanish on the divisor $N := \sum_k \text{div}(s_k)$. In local coordinates one can verify the analogous isomorphism (of sheaves of $\mathbb{C}$-vector spaces)
$$
\mathcal{W}' \cong \mathcal{T}_{X/S} / \mathcal{D} \big ( - N \big ).
$$
In the presence of $NS$ punctures $\mathcal{D}$ is still maximally non-integrable. Using the other relationships derived for SUSY curves above, letting $N_{\tred} = \sum_k s_k$
$$
\mathcal{W}' \cong \mathcal{D}^{\otimes 2}(-N) = \mathcal{D}^{\otimes 2}_{\tred}\big (- N_{\tred} \big ) \oplus \mathcal{D}_{\tred} \big ( - N_{\tred} \big )
$$
thus,
$$
H^1(X, \mathcal{W}')_0 \cong H^0(X, \Omega^{\otimes 2}_{X_{\tred}}( N_{\tred} ) )^*
$$
$$
H^1(X, \mathcal{W}')_1 \cong \Pi H^0(X, \Omega^{\otimes 3/2}_{X_{\tred}}( N_{\tred} ) )^*.
$$
Then as $\text{deg}\, N_{\tred} = n_{NS}$ we have
$$
\text{dim } \mathfrak{M}_{g, n_{NS}} = 3g - 3 + n_{NS} \, | \, 2g-2 + n_{NS}.
$$
\medskip

\subsection{The Tangent Sheaf of $\mathfrak{M}_{g, n_{R}}$}~

The analysis in the presence of Ramond punctures is slightly more subtle. Suppose we have a single super Riemann surface with Ramond punctures, i.e. a family $\pi: X \to \text{Spec} (\mathbb{C})$ over a point of super Riemann surfaces with Ramond punctures. The maximally non-integrable condition is replaced by the isomorphism
$$
\mathcal{D}^{\otimes 2} \xrightarrow{[\cdot, \cdot]} \mathcal{T}_{X}/\mathcal{D} \otimes \stsh_X(-\mathcal{F})
$$
for an effective divisor $\mathcal{F}$ called the \emph{Ramond divisor}. We write as above $\mathcal{F} = \sum_k \mathcal{F}_k$, with each $\mathcal{F}_k$ a minimal/irreducible divisor, and $\mathcal{F}_{\tred} = \sum_k q_k$. For this specific case when dealing with a single curve the structure sheaf $\stsh = \stsh_{\tred} \oplus \J$ splits as usual and as in the proof of Proposition \ref{tangentsheaf} above we have that the odd part of the reduced tangent sheaf is the dual $\J^* = \mathcal{H}om_{\stsh_{\tred}}(\J, \stsh_{\tred})$,
$$
(\mathcal{T}_X)_{\tred, 1} \cong \J^*.
$$
On the other hand we know from above we know the even and odd parts of $(\mathcal{T}_X)_{\tred}$ thus
$$
\mathcal{D}_{\tred} \cong \J^*,  \hspace{.5cm} \mathcal{D}^{\otimes 2} _{\tred}(\sum_k q_k) \cong \mathcal{T}_{X_{\tred}}.
$$
The sheaf $\mathcal{W}'$ of infinitesimal automorphisms is simply $\mathcal{W}$, the sheaf superconformal vector fields defined by $[\mathcal{W}, \mathcal{D}] \subset \mathcal{D}$. A local coordinate computation will show that 
$$ 
\mathcal{W} \cong \mathcal{D}^{\otimes 2}
$$ 
again as sheaves of $\mathbb{C}$-vector spaces. Thus the splitting $\mathcal{D}^{\otimes 2} = \mathcal{D}_{\tred}^{\otimes 2} \oplus \mathcal{D}_{\tred}$ gives
$$
H^1(X, \mathcal{W})_0 \cong H^1(X, \mathcal{D}^{\otimes 2}_{\tred}) = H^0(X, \Omega^{\otimes 2}_{X_{\tred}}(-\sum_k q_k))^*
$$
$$
H^1(X, \mathcal{W})_1 \cong H^1(X, \mathcal{D}_{\tred}) = \Pi H^0(X, \Omega_{X_{\tred}} \otimes \mathcal{D}^{-1}_{\tred}  )^*.
$$
In view of our argument showing the the number of Ramond punctures $n_R$ must be even, we get that $\deg \mathcal{D}_{\tred} = 1 - g - n_{R}/2$ and hence Riemann Roch yields
$$
\text{dim } \mathfrak{M}_{g, n_{R}} = 3g - 3 + n_{R} \, | \, 2g-2 + n_{R}/2.
$$
In fact when both NS and Ramond punctures are considered we get
$$
\text{dim } \mathfrak{M}_{g, n_{NS}, n_R} = 3g - 3 + n_{NS} + n_R \, | \, 2g-2 + n_{NS} + n_{R}/2.
$$

\bigskip

\section{The Riemann Roch Theorem on $ 1 | 1$ Supercurves}~

The Riemann Roch Theorem has a nice generalization to the setting of $1 | 1$ supercurves, we discuss it now. If $\Ll$ is an invertible sheaf on a supercurve $X$, the cohomology groups $H^i(X, \Ll)$ are naturally super $\mathbb{C}$-vector spaces. We let $h^i(X,\Ll) = \text{dim }H^i(X,\Ll)$ denote the dimension. If it is clear from the context we will frequently not mention $X$ and simply write $H^i(\Ll)$ for $H^i(X,\Ll)$. Recall that for a super vector space of dimension $m | n$ the \emph{superdimension} $\text{sdim}\,V$ is defined to be $m-n$ and thus we use the notation $sh^i(\Ll)$ to denote $\text{sdim}\,H^i(\Ll)$.

\begin{defx}
Suppose $X$ is a $1 | 1$ complex supercurve. Let $\Ll$ be an invertible sheaf on $X$. We define the Euler Characteristic of $\Ll$ to be
$$
\chi(\Ll) = h^0(\Ll) - h^1(\Ll)
$$
and the super Euler Characteristic to be single integer
$$
s\chi(\Ll) = sh^0(\Ll) - sh^1(\Ll).
$$
\end{defx}

A single supercurve $X$ is split and hence every invertible sheaf $\Ll$ on $X$ splits as the direct sum as two line bundles on the reduced space,
$$
\Ll = \Ll_{\tred} \oplus \left ( \Ll_{\tred} \otimes \J \right )
$$
with $\stsh_X = \stsh_{X_{\tred}} \oplus \J$ as usual. This observation allows us to formulate the Riemann Roch Theorem on $X$ as simply two applications of the classical Riemann Roch Theorem. 

\begin{thm} \label{RieRoch1}
(Riemann-Roch for Supercurves) Suppose $X$ is a $1 | 1$ supercurve and $\Ll$ an invertible sheaf locally free of rank $1|0$ on $X$, then
$$
\chi(\Ll) = ( \deg \Ll - g + 1 \, | \, \deg \Ll + \deg \J - g + 1).
$$
Hence in particular the super Euler Characteristic
$$
s\chi(\Ll) = -\deg \J
$$
in independent of $\Ll$.
\end{thm}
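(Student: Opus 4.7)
The plan is to exploit the fact that a single $1|1$ supercurve is split (as noted in the statement's preamble), so the structure sheaf decomposes as $\stsh_X = \stsh_{X_{\tred}} \oplus \J$ where $\J$ is an invertible sheaf on $X_{\tred}$ concentrated in odd parity. Any rank $1|0$ invertible sheaf $\Ll$ then splits as
$$
\Ll = \Ll_{\tred} \oplus \left( \Ll_{\tred} \otimes \J \right),
$$
with the first summand even and the second odd. This reduces the entire computation to two applications of classical Riemann--Roch on the genus $g$ Riemann surface $X_{\tred}$.

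First I would verify that cohomology splits along parity into pieces computable on $X_{\tred}$: pushing forward along the inclusion $X_{\tred} \hookrightarrow X$ (or equivalently, forgetting parity) gives
$$
H^i(X,\Ll)_0 \cong H^i(X_{\tred}, \Ll_{\tred}), \qquad H^i(X,\Ll)_1 \cong H^i(X_{\tred}, \Ll_{\tred} \otimes \J),
$$
where the second identification carries an implicit parity reversal coming from $\J$ being odd. This is immediate from the fact that $\Ll$ is a direct sum of $\stsh_{X_{\tred}}$-modules on the split supermanifold $X$, together with additivity of cohomology.

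Next I would apply classical Riemann--Roch to each factor. Using $\deg \Ll = \deg \Ll_{\tred}$ (by definition of the degree of a rank $1|0$ invertible sheaf) and $\deg(\Ll_{\tred} \otimes \J) = \deg \Ll + \deg \J$, the even and odd pieces contribute
$$
h^0(\Ll_{\tred}) - h^1(\Ll_{\tred}) = \deg \Ll - g + 1,
$$
$$
h^0(\Ll_{\tred} \otimes \J) - h^1(\Ll_{\tred} \otimes \J) = \deg \Ll + \deg \J - g + 1,
$$
which assemble into the claimed formula for $\chi(\Ll)$ as a $\mathbb{Z}_2$-graded dimension. For the super Euler characteristic, I would write $sh^i(\Ll) = h^i(\Ll)_0 - h^i(\Ll)_1$ and take the alternating sum over $i$; the parity reversal on the odd piece flips a sign, so that
$$
s\chi(\Ll) = (\deg \Ll - g + 1) - (\deg \Ll + \deg \J - g + 1) = -\deg \J,
$$
which is visibly independent of $\Ll$.

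The only genuine subtlety is bookkeeping the parity reversal induced by $\J$ on the odd part of cohomology, since $\J$ is a rank $0|1$ line bundle on $X_{\tred}$ rather than an ordinary line bundle: one must be careful that $H^i(X,\Ll)_1$ is naturally $\Pi H^i(X_{\tred}, \Ll_{\tred} \otimes \Pi\J)$ for the purposes of the dimension count, but since we are only tracking even/odd dimensions this is harmless. Everything else is a direct translation of classical Riemann--Roch.
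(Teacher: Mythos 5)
Your proposal is correct and follows the same route the paper takes: it splits $\Ll = \Ll_{\tred} \oplus (\Ll_{\tred} \otimes \J)$ using the splitness of a single $1|1$ supercurve and applies classical Riemann--Roch to each summand, exactly as in the paper's (one-line) proof. Your extra care with the parity bookkeeping for the super Euler characteristic is sound and merely makes explicit what the paper leaves implicit.
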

\begin{proof}
Immediate based on the observed decomposition of $\J$ and the classical Riemann Roch formula.
\end{proof}

In the special case $X$ is a SUSY curve we have

\begin{cor} \label{RieRoch2}
(Riemann-Roch for SUSY Curves) Suppose $X$ is a SUSY curve and $\Ll$ and invertible sheaf on it locally free of rank $1|0$, then
$$
\chi(\Ll) = ( \deg \Ll - g + 1 \, | \, \deg \Ll ).
$$
and
$$
s\chi(\Ll) = 1-g
$$
\end{cor}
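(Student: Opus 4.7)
The plan is to deduce Corollary \ref{RieRoch2} as a direct specialization of Theorem \ref{RieRoch1}, with the only real content being the computation of $\deg \mathcal{J}$ for a single SUSY curve.

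First I would recall that on any $1|1$ supercurve we have the decomposition $\stsh_X = \stsh_{X_{\tred}} \oplus \J$, so that $\J$ is itself an invertible sheaf on the reduced curve $X_{\tred}$ whose degree appears in the statement of Theorem \ref{RieRoch1}. All that must be done to promote Theorem \ref{RieRoch1} into Corollary \ref{RieRoch2} is to pin down the integer $\deg \J$ when $X$ carries a SUSY structure.

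Next I would invoke Proposition \ref{jspin}, applied to the trivial family $\pi \colon X \to \spec \Cx$ over the reduced base $\spec \Cx$. That proposition gives the canonical isomorphism $\J^{\otimes 2} \cong \Omega^1_{X_{\tred}}$, exhibiting $\J$ (up to parity change) as a spin structure on the reduced Riemann surface. Since $\deg \Omega^1_{X_{\tred}} = 2g-2$, taking degrees in the isomorphism $\J^{\otimes 2} \cong \Omega^1_{X_{\tred}}$ yields $\deg \J = g-1$.

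Finally I would substitute $\deg \J = g-1$ into the two formulas of Theorem \ref{RieRoch1}. The odd component of $\chi(\Ll)$ collapses as $\deg \Ll + \deg \J - g + 1 = \deg \Ll + (g-1) - g + 1 = \deg \Ll$, producing $\chi(\Ll) = (\deg \Ll - g + 1 \,|\, \deg \Ll)$, and $s\chi(\Ll) = -\deg \J = 1-g$, as claimed. There is no real obstacle here: the only nontrivial input is the spin-structure identification from Proposition \ref{jspin}, which has already been proved; the rest is arithmetic.
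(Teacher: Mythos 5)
Your proposal is correct and matches the paper's own argument exactly: the paper's proof of Corollary \ref{RieRoch2} consists precisely of citing Proposition \ref{jspin} to conclude that $\J$ is a spin structure of degree $g-1$ and then substituting into Theorem \ref{RieRoch1}. Your write-up simply spells out the degree computation $\deg \J = \tfrac{1}{2}\deg \Omega^1_{X_{\tred}} = g-1$ and the resulting arithmetic in more detail.
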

\begin{proof}
By Proposition \ref{tangentsheaf} $\J$ is a spin structure on $X$ and hence has degree $g-1$.
\end{proof}

We remark that if we are working with a family of $1 | 1$ supercurves $\pi: X \to S$ the above two results still can be of use. Namely if $\Ll$ is an invertible sheaf on $X$ such that both $\pi_*\Ll$ and $R^1\pi_*\Ll$ are locally free over $S$ then the fiber of $R^i\pi_*\Ll$ at any point $s$ of $S$ is $H^i(X_s, \Ll|_{X_s})$. By the local freeness assumptions, the dimensions of these super vector spaces will not change as we vary $s$ and hence Theorem \ref{RieRoch1} and Corollary \ref{RieRoch2} compute the ranks of the vector bundles $R^i\pi_*\Ll$.

\bigskip

\section{The Super Mumford Isomorphism}~

We follow \cite{vormum} closely. Let $\pi: X \to S$ denote a family of $1 | 1$ supercurves. For any locally free sheaf $\mathcal{F}$ on $X$ we can consider the invertible sheaf $B(\mathcal{F})$ on $S$, called the \emph{Berezinian of cohomology of} $\mathcal{F}$. If each $R^i\pi_* \mathcal{F}$ is locally free, then $B(\mathcal{F})$ is given by
$$
B(\mathcal{F}) = \otimes_{i} \, (\ber R^i\pi_*\mathcal{F})^{(-1)^i}.
$$
Moreover, for every short exact sequence of locally free sheaves on $X$
$$
0 \longrightarrow \mathcal{F}' \longrightarrow \mathcal{F} \longrightarrow \mathcal{F}'' \longrightarrow 0,
$$
we get a canonical isomorphism
$$
B(\mathcal{F}') \otimes B(\mathcal{F}'') \cong B(\mathcal{F}).
$$
Hence, in particular any isomorphism $f: \mathcal{F} \to \mathcal{G}$ induces an isomorphism $B(f): B(\mathcal{F}) \to B(\mathcal{G})$. 

For $\w = \ber \Omega^1_{X/S}$ the relative Berezinian, we set for each $j$,
$$ \lambda_{j/2} = B(\w^{\otimes j}). $$
Serre duality gives the canonical identifications $\lambda_{j/2} \cong \lambda_{(1-j)/2}$. The super Mumford isomorphism(s) are the following canonical isomorphisms amongst the $\lambda_{j/2}$.

\begin{prop} \label{smumiso} (The Super Mumford Isomorphism)
For any family of $1 | 1$ supercurves $\pi: X \to S$ we have canonical isomorphisms
$$
\lambda_{j/2} \cong \lambda_{1/2}^{(-1)^{j-1}(2j-1)}.
$$
In particular,
$$
\lambda_{3/2} \cong \lambda_{1/2}^5.
$$
\end{prop}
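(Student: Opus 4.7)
The plan is to adapt the classical proof of the Mumford formula for families of Riemann surfaces to the super setting via super Grothendieck--Riemann--Roch (GRR). In the classical case, GRR yields $c_1(\det R\pi_*\Omega^{\otimes j}) = (6j^2-6j+1)\,c_1(\lambda_1)$, and the fact that $\Omega^1_{X/S}$ has rank $1$ is responsible for the quadratic dependence on $j$. In the super setting $\Omega^1_{X/S}$ has rank $1|1$ and superdimension $0$, which causes cancellations in the super Todd contribution and produces a polynomial that is linear in $j$, namely $(-1)^{j-1}(2j-1)$.

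First, I would apply super GRR to the flat proper morphism $\pi\colon X\to S$ and the invertible sheaf $\F = \w^{\otimes j}$, obtaining an identity in the super Picard group of $S$ of the form
\[
c_1(B(\w^{\otimes j})) \;=\; \pi_*\bigl(\mathrm{sch}(\w^{\otimes j})\cdot \mathrm{sTd}(\mathcal{T}_{X/S})\bigr)_{(1)},
\]
where $\mathrm{sch}$ is the super Chern character, $\mathrm{sTd}$ the super Todd class, and $(1)$ denotes the degree-one part. Writing $\alpha = c_1(\w)$, one has $\mathrm{sch}(\w^{\otimes j}) = e^{j\alpha}$, and the super Todd class of the rank-$1|1$ sheaf $\mathcal{T}_{X/S}$ expands in terms of $\alpha$ using that $\ber \mathcal{T}_{X/S} = \w^{-1}$. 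Because $\mathcal{T}_{X/S}$ has superdimension zero, the $\alpha^{2}$ term in the integrand drops out after pushforward, so extracting the degree-one piece yields a polynomial $p(j)$ that is at most linear in $j$. The slope and constant can be pinned down by the base case $j=1$ (where $p(1)=1$ trivially) together with the Serre duality symmetry $\lambda_{j/2}\cong\lambda_{(1-j)/2}$ recorded in the excerpt, which forces $p(j)=p(1-j)$; one checks directly that $(-1)^{j-1}(2j-1)$ is the unique such polynomial with $p(1)=1$.

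The main obstacle is upgrading this numerical equality of first Chern classes to a \emph{canonical} isomorphism of invertible sheaves on $S$, since $c_1$ may fail to be injective on the super Picard group (torsion line bundles would produce ambiguity up to $2$-torsion). To handle this I would exploit the multiplicativity $B(\F')\otimes B(\F'')\cong B(\F)$ for any short exact sequence $0\to\F'\to\F\to\F''\to 0$ of locally free sheaves on $X$, as stated earlier in the excerpt, in the spirit of the Knudsen--Mumford--Deligne determinantal functor. Concretely, one constructs a chain of canonical isomorphisms relating $\w^{\otimes j}$ to $\w^{\otimes 1}$ through auxiliary filtrations (for instance by twisting with skyscraper quotients supported on suitable relative divisors, after possibly a flat base change), and composes the resulting canonical identifications of Berezinians of cohomology to produce the asserted isomorphism. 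The sign $(-1)^{j-1}$ emerges tracking the parity shift in each Berezinian factor across the composition, and the exponent $2j-1$ accumulates from counting the successive quotients, reconstructing the polynomial $(-1)^{j-1}(2j-1)$ at the level of actual line bundles rather than merely Chern classes. The specialization to $j=3$ gives $\lambda_{3/2}\cong\lambda_{1/2}^{5}$.
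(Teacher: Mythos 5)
There are two genuine gaps here. The first is in your determination of the exponent. You assert that the super GRR pushforward yields a polynomial $p(j)$ of degree at most one in $j$, and that the Serre-duality symmetry $p(j)=p(1-j)$ together with $p(1)=1$ identifies $p(j)=(-1)^{j-1}(2j-1)$. But $(-1)^{j-1}(2j-1)$ is not a polynomial in $j$, and the unique at-most-linear polynomial satisfying $p(j)=p(1-j)$ and $p(1)=1$ is the constant $1$; as written, your normalization argument produces the wrong answer. The sign cannot be appended afterwards: it originates in the fact that $\w=\ber\Omega^1_{X/S}$ has rank $0|1$, so $\w^{\otimes j}$ alternates between ranks $0|1$ and $1|0$ and its super Chern character (and the parity of $B(\w^{\otimes j})$) carries an intrinsic factor of $(-1)^j$. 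Once that is tracked, Serre duality gives the \emph{antisymmetry} $q(j)=-q(1-j)$ for the sign-stripped exponent $q$, and then $q(1)=1$ does force $q(j)=2j-1$ --- but that is a different calculation from the one you wrote down.

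The second and more serious gap is that the passage from an equality of first Chern classes to a \emph{canonical} isomorphism --- which you correctly identify as the main obstacle, and which is where all the content of the proposition lives --- is only gestured at. Your proposed ``chain of canonical isomorphisms through auxiliary filtrations by skyscraper quotients on relative divisors'' is, in outline, the paper's entire proof (the paper uses no GRR at all: it runs the induction $\lambda_{j/2}^{(-1)^j}\cong\lambda_{1/2}^{-2}\lambda_{(j-1)/2}^{(-1)^{j-1}}$ directly), and that argument hinges on one lemma your sketch does not supply: for a relative divisor $D\subset X$ of dimension $0|1$ over $S$ and any rank $1|0$ line bundle $\Kk$ on $D$, there is a canonical isomorphism $B(\Kk)\otimes B(\oo_D)^{-1}\cong\oo_S$. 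This is what makes the boundary terms $B(\M|_D)$ and $B(\Ll\otimes\M|_D)$ coming from the sequences $0\to f^*\Kk\to\M\to\M|_D\to 0$ canonically identifiable with one another, and it is precisely the mechanism that collapses the classical quadratic (Deligne-pairing) contribution to the linear super answer. The paper proves it by showing the norm map $\nn_{D/S}\colon f_*\oo_{D,0}^*\to\oo_{S,0}^*$ is trivial: on a fiber $V\times\Cx^{0|1}$, multiplication by an even invertible $g=g^0+g^1\alpha$ has matrix $\begin{pmatrix} g^0 & 0\\ g^1 & g^0\end{pmatrix}$ in the basis $1,\alpha$, whose Berezinian is $1$. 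Without this statement (or an equivalent), the composite of your ``canonical identifications'' is not pinned down, and the Chern-class computation --- which moreover invokes a super GRR theorem as a black box --- controls the isomorphism only up to exactly the torsion ambiguity you flagged.
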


Proofs of the super Mumford isomorphisms can be found in \cite{vormum} and \cite{vids}. We have also included a proof in Appendix \ref{smum_proof}. We will denote by $\mu$ the trivializing section of $\lambda_{3/2} \lambda_{1/2}^{-5}$. Such an object is called the \emph{super Mumford form}. 

In the following we will consider two separate situations:
\begin{enumerate}
    \item $\pi: X \to S$ is a family of super Riemann surfaces of genus $g \geq 2$ with $\nr$ Ramond punctures such that:
\begin{enumerate}
    \item The sheaves $R^i\pi_*\w^j$ are locally free for $i=0,1$, $j=-2, -1, 0, 1$.
    \item $\nr > 6g-6$.
\end{enumerate}
    \item $\pi: X \to S$ is a family of super Riemann surfaces of genus $g \geq 2$ with $n_{\text{NS}}$ Neveu-Schwarz punctures such that:
\begin{enumerate}
    \item The sheaves $R^i\pi_*\w^j$ are locally free for $i=0,1$, $j=0, 1, 2, 3$.
    \item $\pi_*\w$ has rank $g | 1$.
\end{enumerate}
\end{enumerate}
In each case we produce a concrete proof of the corresponding super Mumford isomorphism $\lambda_{3/2} = \lambda_{1/2}^5$ and use it to produce the main results.

We make heavy use of the higher direct image sheaves of the relative Berezinain $R^i\pi_*\w^j$ and emphasize that in both situations we work under the \emph{assumption} that these sheaves are locally free. We pause to discuss this assumption more in depth at the end of Section \ref{somerierochcalc}. 

The conditions listed for the Neveu-Schwarz case speak to the fact that we work over the component of the moduli space $\mathfrak{M}_{g;n_{NS}}$ corresponding to an odd spin structure. Hence, fiberwise $\Pi \w_{\tred}$ gives an odd nondegenerate theta characteristic.

We begin with the Ramond case.
\chapter{The Ramond Puncture Case} \label{r_case}
\label{ramond}


Here we derive our first main result: an explicit formula for the super Mumford form $\mu$ on the moduli space of super Riemann surfaces with $\nr$ Ramond Punctures $\msp$. This can then be used to create a measure on $\msp$ whose integral computes scattering amplitudes of superstring theory. The following arguments were heavily inspired by the work done in \cite{belman}, \cite{vormum} and \cite{RSVphy}.

Throughout this section we let $\pi: X \to S$ denote a family of genus $g \geq 2$ SUSY curves with $\nr$ Ramond punctures with $R^i\pi_*\w^j$ locally free and $\nr > 6g-6$. We denote the Ramond divisor by $\mathcal{F}$.

\bigskip

\section{Some Riemann-Roch Calculations}~ \label{somerierochcalc}

In the special case that $S$ is a point the structure sheaf $\stsh_X$ admits the global decomposition $\stsh_X = \stsh_{X_{\tred}} \oplus \J$. This allows one to decompose any super holomorphic line bundle $\Ll$ into the direct sum of two ordinary holomorphic line bundles over $X_{\tred}$ as
\begin{equation} \label{decomp1}
\Ll = \Ll_{\text{red}} \oplus \left( \Ll_{\text{red}} \otimes \J \right ).
\end{equation}
Here $\J \subset \stsh_X$ again denotes the sheaf of ideals generated by the odd elements. The summands above are exactly the even and odd parts of $\Ll$. We are most interested in the case $\Ll = \w^{\otimes j} = \w^j$ with our goal being to identify the ranks of these bundles. By the local freeness assumption (and the cohomology and base change Theorem), to compute these ranks it suffices to assume that $S$ is a point. Thus by the decomposition (\ref{decomp1})
\begin{equation} \label{rank1}
\text{rank} \, R^i\pi_*\Ll = h^i(\Ll_{\text{red}}) \, | \,  h^i(\Ll_{\text{red}} \otimes \J)
\end{equation}
for $\Ll$ of rank $1 | 0 $ (and vice versa for $\Ll$ or rank $0 | 1 $).

In Section \ref{ramondpunctures} we saw that the Berezinian sheaf $\w$ was identified with $\mathcal{D}^{-1}(-\F)$. Furthermore, one can easily see \cite{wit1} that if $S$ is a point, $\mathcal{T}_{X_\tred} = \mathcal{D}^2(\mathcal{F})_\tred$ and thus the distribution $\mathcal{D}$ had degree $1-g-\nr/2$. Arguing in this fashion, i.e. utilizing the classical Riemann-Roch theorem on $X_{\tred}$, a slightly tweaked Proposition \ref{jspin} and the assumption that $\nr > 6g-6$, allows one to complete the tables below of the various ranks of the sheaves $R^i\pi_*\w^j$. The calculations involved are somewhat tedious and we omit them here.
$$
\begin{array}{c|c|c|c}
j & \text{rank } \pi_*\wre^j & \text{rank } \pi_*(\wre^j \otimes \J) & \text{rank } \pi_*\omega^j \\ \hline
-2 &\nr + 3 - 3g & 3\nr/2 + 2 - 2g & \nr + 3 - 3g \, | \, 3\nr/2 + 2 - 2g \\
-1 &\nr/2 + 2 - 2g & \nr + 1 - g & \nr + 1 - g \,|\, \nr/2 + 2 - 2g   \\
0 & 1 & \nr/2 & 1 \, | \, \nr/2 \\
1 & 0 & g & g \, | \, 0 \\
\end{array}
$$
$$
\begin{array}{c|c|c|c}
j & \text{rank } R^1\pi_*\wre^j & \text{rank } R^1\pi_*(\wre^j \otimes \J) & \text{rank } R^1\pi_*\omega^j \\ \hline
-2 & 0 & 0 & 0 \, | \, 0 \\ 
-1 & 0 & 0 & 0 \, | \, 0   \\ 
0 & g & 0 & g \, | \, 0 \\ 
1 & \nr/2 & 1 & 1 \, | \, \nr/2 \\ 
\end{array}
$$

\medskip

Let us now address the issue of local freeness of the sheaves $R^i\pi_*\omega^j$ on the base $S$. Specifically for our purposes in the Ramond case we are interested in the local freeness of those sheaves with $i=0, 1$ and $j=-2, -1, 0, 1$. Common cohomology and base change arguments give immediately that the sheaves $R^i\pi_*\omega^j$ with $i=0,1$, $j=-1, -2$ are indeed locally free. Unfortunately for the others, it seems that there is no elementary argument to guarantee their local freeness. In fact, similar issues have been discussed in the literature before. A result in \cite{bruzzo2} essentially shows that there is no super version of Grauert's classical theorem of algebraic geometry, which would yield the desired result. The author would like to thank E. Witten for his helpful and stimulating comments regarding this issue.

\bigskip

\section{The Super Mumford Isomorphism $\lambda_{3/2}\lambda_{1/2}^{-5} \cong \stsh_S$ Explicitly }~

Our goal now is to prove explicitly the super Mumford isomorphism $\lambda_{3/2}\lambda_{1/2}^{-5} \cong \mathcal{O}_S$, following every step carefully to identify the trivializing section $\mu$ corresponding to the image of $1_S$ under the above isomorphism. We will express $\mu$ in terms of bases chosen for the locally free sheaves $R^i\pi_*\omega^j$.

The isomorphism will follow from 3 short exact sequences of sheaves on $X$

\begin{equation}
\label{firstSES} 0 \longrightarrow \Pi\omega \overset{t}{\longrightarrow} \oo \longrightarrow \oo |_{T} \longrightarrow 0,
\end{equation}

\begin{equation}
\label{secondSES} 0 \longrightarrow \oo \overset{t}{\longrightarrow} \Pi\w^{-1} \longrightarrow (\Pi\w^{-1}) |_{T} \longrightarrow 0,
\end{equation}

\begin{equation}
\label{thirdSES} 0 \longrightarrow \Pi\omega^{-1} \overset{t}{\longrightarrow} \w^{-2} \longrightarrow \w^{-2} |_{T} \longrightarrow 0,
\end{equation}
where $t = \Pi t'$ for $t'$ is an odd global section of $\w^{-1}$, and $T$ is the divisor $\{ t = 0 \}$. From these one concludes utilizing Serre duality (noting $B(\Pi \mathcal{F}) = B^{-1}(\mathcal{F})$)

\begin{equation} \label{3eqns}
    B(\oo|_T)\cong \lambda^2_{1/2}, \hspace{.25in} B((\Pi \w^{-1})|_T) \cong \lambda^{-1}_1 \lambda^{-1}_{1/2}, \hspace{.25in} \text{and} \hspace{.25in} B(\w^{-2}|_T) \cong \lambda_{3/2} \lambda_1.
\end{equation}

An important lemma is shown in \cite{vormum} and \cite{vids}, a proof of which is also given in Proposition \ref{can_iso_1_new} of the Appendix, stating that given any invertible sheaves $\mathcal{L}$ and $\mathcal{K}$ of on $X$ and any effective relative divisor $D$ of dimension $0|1$ over the base, we have a canonical isomorphism $B(\mathcal{L}|_D) \cong B(\mathcal{K}|_D)$. Using this result we get that the left hand sides of the three equations in (\ref{3eqns}) are all canonically identified and thus,
$$ 
\lambda^2_{1/2} \cong \lambda^{-1}_1 \lambda^{-1}_{1/2}, \hspace{.1in} \lambda_{3/2} \lambda_1 \cong \lambda^2_{1/2} \,\,\, \Longrightarrow \,\,\, \lambda_{3/2} \cong \lambda^5_{1/2}, 
$$
giving the super Mumford isomorphism.

We will follow the above argument in detail to identify $\mu$ in terms of specified bases for the sheaves $R^i\pi_*\w^j$.

\bigskip

\section{Various Bases}~ \label{variousbases}

To simplify notation we let $r := \nr/2 - g + 1$. We choose a distinguished odd global section $t'$ of $\w^{-1}$ such that $t'_{\tred}$ vanishes to first order at points $q_1, \dots, q_r$. Set $\Pi t' = t$ and near each point $q_k$ choose local superconformal coordinates $ z_k | \tth_k $ centered at $q_k$ so that $t$ expands in these coordinates as
$$ t' \sim z_kf_k(z_k | \tth_k)\bpk. $$
We denote by $T$ the divisor $\{ t' = 0 \} = \{t = 0 \}$ and assume it is disjoint from the Ramond divisor $\mathcal{F}$ (which is an open condition).

\bigskip 

\section*{Local Basis for $\pi_*\w$:}~

The rank of $\pi_*\w$ is $g | 0$, thus we choose an (even) basis 
$$ 
\mathcal{B}_{\w} := \{ \varphi_1, \dots, \varphi_g \}.
$$
Near each $q_k$ we expand
$$ \varphi_j \sim (\varphi_j^{k,-} + \varphi_j^{k,+}\tth_k + O(z_k)) [dz_k \, | \, d\tth_k], $$
where $\varphi_j^{k,\pm}$ are even/odd functions from the base. 

\section*{Local Basis for $\pi_*\stsh$:}~

We take 
$$ 
\mathcal{B}_{\stsh} := \{1 \, | \, t'\varphi_1, \dots, t'\varphi_g, \xi_1, \dots \xi_{r-1} \},
$$
to be a basis in $\pi_*\oo$ where
$$ \xi_j \sim (\xi_j^{k,-} + \xi_j^{k,+}\tth_k + O(z_k)) $$
near $q_k$.

\subsection*{Local Basis for $\pi_*\w^{-1}$:}~

Let 
$$
\mathcal{B}_{\w^{-1}} := \{ t'^2\varphi_1, \dots, t'^2\varphi_g, t'\xi_1, \dots, t'\xi_{r-1}, \sigma_1, \dots \sigma_r \, | \, t', \tau_1 \dots, \tau_{r-g} \},
$$
be a basis for $\pi_*\w^{-1}$. In local coordinates near each $q_k$ expand
$$ \sigma_j \sim (\sigma_j^{k,-} + \sigma_j^{k,+}\tth_k + O(z_k)) \bpk, $$
$$ \tau_j \sim (\tau_j^{k,+} + \tau_j^{k,-}\tth_k + O(z_k)) \bpk. $$

\subsection*{Local Basis for $\pi_*\w^{-2}$:}~

Let 
\begin{align*}
\mathcal{B}_{\w^{-2}} : = \{t'^2, t'\tau_1&, \dots, t'\tau_{r-g}, \eta_1, \dots, \eta_r \, |
\\
& \, t'^3\varphi_1, \dots, t'^3\varphi_g, t'^2\xi_1, \dots, t'^2\xi_{r-1}, t'\sigma_1, \dots t'\sigma_r, \psi_1, \dots, \psi_r \}, 
\end{align*} 
be a basis for $\pi_*\w^{-2}$ and as above expand near each $q_k$
$$ \eta_j \sim (\eta_j^{k,+} + \eta_j^{k,-}\tth_k + O(z_k))\bpk^2, $$
$$ \psi_j \sim (\psi_j^{k,-} + \psi_j^{k,+}\tth_k + O(z_k))\bpk^2. $$

\subsection*{Local Basis for $\pi_*(\w^j|_T)$:}~

We have singled out a specific odd global section $t'$ of $\w^{-1}$, for which we defined a divisor $T = \{t' = 0\}$. We assume this is disjoint from the Ramond Divisor $\mathcal{F}$ and then take
$$
\mathcal{B}_{\w^j|_T} := \{ [\partial_{z_1}\,|\,\partial_{\tth_1}]^j, \dots, [\partial_{z_r}\,|\,\partial_{\tth_r}]^j \, | \, \theta_1[\partial_{z_1}\,|\,\partial_{\tth_1}]^j, \dots, \theta_r[\partial_{z_r}\,|\,\partial_{\tth_r}]^j \} $$
to be a basis in $\w^j|_T$ (for $j=0$ we denote by $1_k = \bpk^0$ the element which is the function $1$ near each $q_k$, similarly $1_k\tth_k$ will sometimes be shortened to $\tth_k$). To shorten notation we will sometimes use $\varpi_k$ for $\bpk$.

Now by Serre duality we identify $R^1\pi_*\w^j \cong (\pi_*\w^{1-j})^*$, and hence by taking dual bases we get local bases $\mathcal{B}_{\stsh}^*$ and $\mathcal{B}_{\w}^*$  for $R^1\pi_*\w$ and $R^1\pi_*\stsh$ respectively (recall from Section \ref{somerierochcalc} that both $R^1\pi_*\w^{-1}$ and $R^1\pi_*\w^{-2}$ vanish). 

The five (ordered) bases above give rise to generating elements of their respective Berezinian of cohomology, 
\begin{equation} \label{gens_1}
\begin{split}
d_{1/2} & := \ber \mathcal{B}_{\w} \otimes \ber \mathcal{B}^*_{\stsh} \in \lambda_{1/2}\\
d_0 & := \ber \mathcal{B}_{\stsh} \otimes \ber \mathcal{B}^*_{\w} \in \lambda_{0} \\
d_{-1/2} & := \ber \mathcal{B}_{\w^{-1}} \in \lambda_{-1/2} \\
d_{-1} & := \ber \mathcal{B}_{\w^{-2}} \in \lambda_{-1} \\
\delta_{j/2} & := \ber \mathcal{B}_{\w^j|_T} \in B(\w^j|_T).
\end{split}
\end{equation}

\smallskip

\section{The First Short Exact Sequence}~ \label{firstSESsec}

We now study in detail the first short exact sequence shown in \eqref{firstSES}. Considering the induced long exact sequence in cohomology and utilizing Serre duality we obtain (Note: $R^1\pi_*(\stsh|_T)$ will vanish as $T$ has relative dimension $0|1$),
\begin{equation} \label{ses1}
0 \longrightarrow \Pi(\pi_*\w) \overset{t}{\longrightarrow} \pi_*\oo \longrightarrow \pi_*(\oo|_{T}) \overset{\delta'}{\longrightarrow} \Pi(\pi_*(\oo))^* \overset{t^*}{\longrightarrow} (\pi_*(\w))^* \longrightarrow 0.
\end{equation}

We emphasize that the morphism $\delta'$ is that as was described in  Section \ref{main_comp} and by equation (\ref{delta'_def}). We will sometimes write $\delta'(s) = \langle \cdot , \delta(s) \rangle$ where $\delta: \pi_*(\oo|_T) \to \Pi R^1\pi_*\w$ is the usual connecting homomorphsim and $\langle \cdot, \cdot \rangle$ is the Serre duality pairing. 

Under the first map the (odd) basis $\Pi\{\varphi_1, \dots \varphi_g\}$ of $\Pi(\pi_*\w)$ maps to $\{t\varphi_1, \dots t\varphi_g\}$ (note the presence of $t$ and not $t'$) which is then completed to the chosen basis $\mathcal{B}_{\stsh} = \{ 1 \, | \, t\varphi_1, \dots, t\varphi_g, \xi_1, \dots, \xi_{r-1} \}$ of $\pi_*\oo$. The restriction map then sends the $t\varphi_j$'s to zero and the remaining basis elements to $\{ 1|_T \, | \, \xi_1|_T, \dots, \xi_{r-1}|_T \}$. In terms of our chosen (ordered) basis $\mathcal{B}_{\stsh|_T}$ for $\pi_*(\oo|_T)$, these are in components
\begin{align} \label{res1}
\begin{split}
1|_T = & \sum_{k=1}^{r} 1_k, \\
\xi_j|_T = & \sum_{k=1}^{r} \xi_j^{k,-}1_k + \sum_{k=1}^{r} \xi_j^{k,+}\tth_k1_k.
\end{split}
\end{align}

We will complete $\{ 1|_T \, | \, \xi_1|_T, \dots, \xi_{r-1}|_T \}$ to a basis by taking certain lifts under $\delta'$ of elements of $\Pi(\pi_*(\oo))^*$ and compare this with $\mathcal{B}_{\stsh|_T}$.

As described in (\ref{delta'_def})
\begin{equation} \label{res3}
\delta'(s)(h) = \sum_{k=1}^r \text{res}_{q_k} \left ( \frac{hs}{t} \right )
\end{equation}
for $h \in \pi_*\stsh$. We remark that the meaning of taking the residue at $q_k$ of the expression $hs/t$ means to take the local function defining $s$ near $q_k$ and compute the resulting residue.

Now, in view of the exact sequence (\ref{ses1}) we see that the image of the map $\pi_*(\stsh|_{T}) \to \Pi ( \pi_*(\stsh))^*$ sending $s \mapsto \delta'(s) = \langle \cdot, \delta(s) \rangle$ is the kernel of $t^*: \Pi(\pi_*(\oo))^* \to(\pi_*(\w))^*$ which is $\text{span}\, \Pi \{1^* \, | \, \xi_1^*, \dots, \xi_{r-1}^*\}$. Thus $\langle \cdot, \delta(s) \rangle$ expands as
\begin{equation} \label{exp1}
\langle \cdot, \delta(s) \rangle = \langle 1, \delta(s) \rangle 1^* + \sum_{k=1}^{r-1}\langle \xi_k, \delta(s) \rangle \xi_k^*.
\end{equation}

On the other hand, the kernel of $t^*$ is spanned by the set
$$
\{\langle \cdot, \delta(1_1) \rangle , \dots, \langle \cdot, \delta(1_r) \rangle \, | \, \langle \cdot, \delta(\tth_1) \rangle , \dots, \langle \cdot, \delta(\tth_r) \rangle \},
$$
and hence equation (\ref{exp1}) applied to each member of the (ordered) set above yield the various expressions
\begin{align*}
\langle 1, \delta(1_k) \rangle = \text{res}_{q_k} \left ( \frac{1}{t} \right ), \hspace{.5cm} \langle 1, \delta(\tth_k) \rangle = \text{res}_{q_k} \left ( \frac{\tth_k}{t} \right ),
\end{align*}
\begin{align*}
\langle \xi_j, \delta(1_k) \rangle = \text{res}_{q_k} \left ( \frac{\xi_j}{t} \right ), \hspace{.5cm} \langle \xi_j, \delta(\tth_k) \rangle = \text{res}_{q_k} \left ( \frac{\xi_j \tth_k}{t} \right ),
\end{align*}
which in view of the local expansions of Section \ref{variousbases}, can be encoded in the $(2r \times r)$ matrix
$$ 
A' = 
\begin{bmatrix}
  \text{res}_{q_1} 1/t & \text{res}_{q_1} \xi_1/t  & \dots & \text{res}_{q_1} \xi_{r-1}/t \\
  \vdots & \vdots & &  \vdots \\
  \text{res}_{q_r} 1/t & \text{res}_{q_r} \xi_1/t & \dots & \text{res}_{q_r} \xi_{r-1}/t \\
  \text{res}_{q_1} \tth_1/t & \text{res}_{q_1} \xi_1\tth_1/t  & \dots & \text{res}_{q_1} \xi_{r-1}\tth_1/t \\
  \vdots & \vdots & &  \vdots \\
  \text{res}_{q_r} \tth_r/t & \text{res}_{q_r} \xi_1\tth_r/t & \dots & \text{res}_{q_r} \xi_{r-1}\tth_r/t \\
\end{bmatrix}.
$$

Letting $A = (a_{ij})$ denote any $(r \times 2r)$ left inverse of $A'$ gives us lifts to $\pi_*(\stsh|_{T})$ of the elements $\{ 1^* \, | \, \xi_1^*, \dots, \xi_{r-1}^* \}$,
\begin{equation} \label{lift1}
\begin{split}
\widetilde{1^*} = & \sum_{k=1}^r a_{1,j}1_k + \sum_{k=1}^r a_{1,j+r}\tth_k1_k, \\
\widetilde{\xi_{j-1}^*} = & \sum_{k=1}^r a_{j,k}1_k + \sum_{k=1}^r a_{j,k+r}\tth_k1_k.
\end{split}
\end{equation}
Therefore combining (\ref{res1}) and (\ref{lift1}) we have that the bases 
$$
\{1|_T, \widetilde{\xi^*_1},\dots,\widetilde{\xi^*_{r-1}} \, | \, \xi_1|_T, \dots, \xi_{r-1}|_T, \widetilde{1^*} \}
$$
and
$$ \{ 1_1,\dots, 1_r \, | \, \tth_1, \dots, \tth_r \} $$
of $\pi_*\stsh|_T$ are related by the matrix 
\begin{equation} \label{m0_r}
M_0=
\begin{bmatrix}
    1       & a_{2,1} & \dots &  a_{r,1} & \xi_1^{1,-} & \dots & \xi_{r-1}^{1,-} & a_{1,1}\\
    1       & a_{2,2} & \dots & a_{r,2} & \xi_1^{2,-} &  \dots & \xi_{r-1}^{2,-} & a_{1,2} \\
    1       & a_{2,3} & \dots & a_{r,3} & \xi_1^{3,-} &  \dots & \xi_{r-1}^{3,-} & a_{1,3} \\
    \vdots & \vdots && \vdots & \vdots & & \vdots & \vdots \\
    1       & a_{2,r} & \dots & a_{r,r} & \xi_1^{r,-} &  \dots & \xi_{r-1}^{r,-} & a_{1,r} \\
    0 & a_{2,r+1} & \dots & a_{r,r+1} & \xi_1^{1,+} &  \dots & \xi_{r-1}^{1,+} & a_{1,r+1} \\
    0 & a_{2,r+2} & \dots & a_{r,r+2} & \xi_1^{2,+} &  \dots & \xi_{r-1}^{2,+} & a_{1,r+2} \\
    \vdots & \vdots & \dots & \vdots & \vdots & \dots & \vdots & \vdots \\
	0 & a_{2,2r-1} & \dots & a_{r,2r-1} & \xi_{1}^{r-1,+} &  \dots & \xi_{r-1}^{r-1,+} & a_{1,2r-1} \\
    0 & a_{2,2r} & \dots &  a_{r,2r} & \xi_1^{r,+} & \dots & \xi_{r-1}^{r,+} & a_{1,2r} &
\end{bmatrix}.
\end{equation}
That is, we get
$$ \ber \{ 1|_T, \tilde{\xi_k^*} \, | \, \xi_k|_T, \tilde{1^*} \} = \ber M_0 \, \delta_{0} $$
in $B(\stsh|_T) = \ber \pi_*(\stsh|_T)$. Therefore under the canonical isomorphism
$$ B(\oo|_T) \cong \lambda_{1/2} \lambda_{0} $$
we have the identification
\begin{equation}
\ber \{ 1|_T, \tilde{\xi_k^*} \, | \, \xi_k|_T, \tilde{1^*} \} = \ber M_0 \, \delta_{0} = d_{1/2}d_0. 
\end{equation}

\bigskip

\section{The Second Short Exact Sequence}~ \label{secondSESsec}

We move on to analyze the second short exact sequence \eqref{secondSES}. Our specified bases for the sheaves listed here are compatible with this short exact sequence, except for the third term. The induced long exact sequence after Serre duality reads

$$ 0 \longrightarrow \pi_*\oo \overset{t}{\longrightarrow} \pi_*(\Pi \w^{-1}) \longrightarrow \pi_*(\Pi \w^{-1}|_{T}) \overset{\delta'}{\longrightarrow} (\pi_*\w)^* \longrightarrow 0. $$

We aim to replicate the argument given in Section \ref{firstSESsec} to relate the two bases we have for $\pi_*(\Pi \w^{-1}|_{T})$. The key again is understanding the connecting homomorphism $\delta: \pi_*(\Pi \w^{-1}|_T) \to R^1\pi_*\stsh$ and its composition with the Serre duality map $\delta': \pi_*(\Pi \w^{-1}|_T) \overset{\delta}{\to} R^1\pi_*\stsh \to (\pi_*\w)^*$.

Following the same argument as given in Section \ref{firstSESsec} and slightly generalized arguments of Section \ref{main_comp} we again have

\begin{equation} \label{sd2}
\langle \cdot, \delta(s) \rangle = \delta'(s) = \sum_{k=1}^r \text{res}_{q_k} \left( (-) \frac{s}{t} \right ) .
\end{equation}
In terms of the basis $\mathcal{B}_{\w}^*$, each $\langle \cdot, \delta(s) \rangle $ expands as
\begin{equation} \label{exp2}
\langle \cdot, \delta(s) \rangle = \sum_{j=1}^g \langle \varphi_j, \delta(s) \rangle \varphi^*_j.
\end{equation}
Thus, we apply (\ref{exp2}) to each member of the basis $\mathcal{B}_{\w^{-1}|_T}$ and encode it in a $(2r \times g)$ matrix $B'$. To simplify notation let $\varpi_k = \bpk $,

$$ 
B'  = 
\begin{bmatrix}
    \text{res}_{q_1} \varphi_1 \varpi_1/t & \dots & \text{res}_{q_1} \varphi_g \varpi_1/t \\
    \vdots & & \vdots \\
    \text{res}_{q_r} \varphi_1 \varpi_r/t & \dots & \text{res}_{q_r} \varphi_g \varpi_r/t \\
    \text{res}_{q_1} \varphi_1 \tth_1\varpi_1/t & \dots & \text{res}_{q_1} \varphi_g \tth_1\varpi_1/t \\
    \vdots & & \vdots \\
    \text{res}_{q_r} \varphi_1 \tth_r\varpi_r/t & \dots & \text{res}_{q_r} \varphi_g \tth_r\varpi_r/t \\
\end{bmatrix}.
$$
Hence we can invert (non-uniquely) the relationships (\ref{exp2}) encoded in $B'$ by finding any left inverse to $B'$, call it $B = (b_{ij})$ which yields lifts to $\pi_*(\Pi \w^{-1}|_T)$ of the elements $\{ \varphi_1^*, \dots, \varphi_g^* \}$
\begin{equation}
\widetilde{\varphi_k^*} = \sum_{j=1}^r b_{k,j}\varpi_j + \sum_{j=1}^r b_{k,r+j}\tth_j \varpi_j.
\end{equation}

Now in $\pi_*(\Pi \w^{-1}|_T)$ the two bases $\Pi \mathcal{B}_{\w^{-1}|_T}$ and 
$$
\{ \tau_1|_T, \dots, \tau_{r-g}|_T, \widetilde{\varphi_1^*}, \dots, \widetilde{\varphi_g^*} \, | \, \sigma_1|_T, \dots, \sigma_r|_T \},
$$
are related by the matrix $M_{-1/2}$,
\begin{equation} \label{m-1/2_r}
M_{-1/2} = 
\begin{bmatrix}
    \tau_1^{1,-} & \dots & \tau_1^{r,-} & \tau_1^{1,+} & \dots & \tau_1^{r,+} \\
    \vdots && \vdots &\vdots && \vdots \\
    \tau_r^{1,-} & \dots & \tau_r^{r,-} & \tau_r^{1,+} & \dots & \tau_r^{r,+} \\
    b_{1,r+1} & \dots & b_{1,2r} & b_{1,1} & \dots & b_{1,r} \\
    \vdots && \vdots &\vdots && \vdots \\
    b_{g,r+1} & \dots & b_{g,2r} & b_{g,1} & \dots & b_{g,r} \\
    \sigma_1^{1,+} & \dots & \sigma_1^{r,+} & \sigma_1^{1,-} & \dots & \sigma_1^{r,-} \\
    \vdots && \vdots &\vdots && \vdots \\
    \sigma_r^{1,+} & \dots & \sigma_r^{r,+} & \sigma_r^{1,-} & \dots & \sigma_r^{r,-} \\
\end{bmatrix}.
\end{equation}
Therefore the relationship
$$ \ber \{ \tau_1|_T, \dots, \tau_{r-g}|_T, \widetilde{\varphi_1^*}, \dots, \widetilde{\varphi_g^*} \, | \, \sigma_1|_T, \dots, \sigma_r|_T \} = \ber M_{-1/2} \, \delta_{-1/2} $$
along with the canonical isomorphism $\lambda_0 \otimes B(\Pi \w^{-1}|_T) \cong \lambda^{-1}_{-1/2} $, gives 
\begin{equation}
d^{-1}_0 d^{-1}_{-1/2} = \ber M_{-1/2} \delta_{-1/2}.
\end{equation}

\bigskip

\section{The Third Short Exact Sequence}~ \label{thirdSESsec}

We analyze the final short exact sequence \eqref{thirdSES}. In this case as $R^1\pi_*(\w^{-1}) = R^1\pi_*(\w^{-2}) = 0$ and the induced long exact sequence is actually the short exact sequence

$$ 0 \longrightarrow \pi_*(\Pi \w^{-1}) \overset{t}{\longrightarrow} \pi_*\w^{-2} \longrightarrow \pi_*(\w^{-2}|_{T}) \longrightarrow  0. $$
This allows us to quickly identify a basis of $\pi_*(\w^{-2}|_{T})$ coming from the chosen basis $\mathcal{B}_{\w^{-2}}$, namely $\{ \eta_1|_T , \dots, \eta_r|_T \, | \, \psi_1|_T, \dots, \psi_r|_T \}$. This is related to the basis $\{\varpi_k^2 \, | \, \tth_k\varpi_k^2 \}$ by the matrix
\begin{equation} \label{m-1_r}
M_{-1} = 
\begin{bmatrix}
    \eta_1^{1,+} & \dots & \eta_1^{r,+} & \eta_1^{1,-}  & \dots &  \eta_1^{r,-} \\
    \vdots && \vdots & \vdots && \vdots \\
    \eta_r^{1,+} & \dots & \eta_r^{r,+} & \eta_r^{1,-}  & \dots &  \eta_r^{r,-} \\
    \psi_1^{1,-} & \dots & \psi_1^{r,-} & \psi_1^{1,+}  & \dots &  \psi_1^{r,+} \\
    \vdots && \vdots & \vdots && \vdots \\
    \psi_r^{1,-} & \dots & \psi_r^{r,-} & \psi_r^{1,+}  & \dots &  \psi_r^{r,+} \\
\end{bmatrix}.
\end{equation}
Therefore in $B(\w^{-2}|_T)$ we have
$$ \ber \{ \eta_1|_T , \dots, \eta_r|_T \, | \, \psi_1|_T, \dots, \psi_r|_T \} = \ber M_{-1} \, \delta_{-1}$$
and under the identification $\lambda_{-1/2}^{-1} \otimes B(\w^{-2}|_T) \cong \lambda_{-1}$, we get
\begin{equation}
d_{-1}d_{-1/2} = \ber M_{-1} \delta_{-1}. 
\end{equation}

\bigskip

\section{An Expression for $\mu$}~  \label{mu_expression}

The calculations done in Sections \ref{firstSESsec}, \ref{secondSESsec} and \ref{thirdSESsec} gave
\begin{equation*}
    \begin{split}
        d_{1/2}d_0 & = \ber M_0 \, \delta_{0}, \\
        d^{-1}_0 d^{-1}_{-1/2} & = \ber M_{-1/2} \, \delta_{-1/2}, \\
        d_{-1}d_{-1/2} & = \ber M_{-1} \, \delta_{-1}.
    \end{split}
\end{equation*} 

Serre duality yields $d_0 = d_{1/2}$, and by the argument in \cite{vormum} and \cite{vids} one identifies for each $j$,  $\delta_{j/2} = \delta_0$. These facts give by elementary algebra

$$
d_{-1} = \frac{\ber M_{-1}  \,\ber M_{-1/2}}{(\ber M_0)^2} \, d_{1/2}^{5}.
$$
Thus, we obtain an explicit expression for the trivializing section $\mu$:
\begin{thm} \label{mainthm1}
Suppose $\pi: X \to S$ is a family super Riemann surfaces of genus $g \geq 2$ with $\nr$ Ramond punctures such that:
\begin{enumerate}
    \item $\nr > 6g-6$.
    \item The sheaves $R^i\pi_*\w^j$ are locally free for $i=0,1$, $j=-2, -1, 0, 1$.
\end{enumerate}
Then the super Mumford form $\mu$ may be expressed via the sections chosen in (\ref{gens_1}) as
$$ \mu = d_{-1}d_{1/2}^{-5} \frac{(\ber M_0)^2}{\ber M_{-1}  \,\ber M_{-1/2}} \in \lambda_{-1} \lambda_{1/2}^{-5} \cong \oo_S, $$
where $M_0, M_{-1/2}$ and $M_{-1}$ are given by (\ref{m0_r}), (\ref{m-1/2_r}) and (\ref{m-1_r}) respectively.
\end{thm}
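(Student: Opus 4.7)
The plan is to assemble the three identities produced in Sections \ref{firstSESsec}, \ref{secondSESsec}, and \ref{thirdSESsec} into the claimed formula by elementary manipulation in the line bundles $\lambda_{j/2}$. Concretely, those sections established
\[
d_{1/2}\,d_0 = \ber M_0 \cdot \delta_0, \qquad d_0^{-1}\,d_{-1/2}^{-1} = \ber M_{-1/2}\cdot \delta_{-1/2}, \qquad d_{-1}\,d_{-1/2} = \ber M_{-1}\cdot \delta_{-1},
\]
each obtained by lifting a standard basis of $\pi_*(\omega^j|_T)$ through the connecting maps and restriction maps of the long exact sequences induced by \eqref{firstSES}, \eqref{secondSES}, \eqref{thirdSES}. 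I would open the proof simply by recalling these three identities.

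Next I would invoke two identifications to make the three equations comparable. First, Serre duality applied to $\w = \ber \Omega^1_{X/S}$ identifies $R^1\pi_*\stsh \cong (\pi_*\w)^*$ and $\pi_*\stsh \cong (R^1\pi_*\w)^*$, and under these identifications the generators $d_0$ and $d_{1/2}$ of $\lambda_0$ and $\lambda_{1/2}$ respectively agree, so $d_0 = d_{1/2}$. Second, I would appeal to the canonical isomorphism $B(\mathcal{L}|_D) \cong B(\mathcal{K}|_D)$ for any two invertible sheaves and any effective relative divisor $D$ of relative dimension $0|1$ (the content of Proposition \ref{can_iso_1_new} of the Appendix, used also in \cite{vormum}). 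Applied to the divisor $T$ and the sheaves $\stsh$, $\Pi\w^{-1}$, $\w^{-2}$, this canonically identifies $\delta_0 = \delta_{-1/2} = \delta_{-1}$ inside the line bundle $B(\stsh|_T)=B(\Pi\w^{-1}|_T)=B(\w^{-2}|_T)$.

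With these two identifications in hand, the rest is pure bookkeeping. From the first equation and $d_0 = d_{1/2}$ I get $d_{1/2}^2 = \ber M_0 \cdot \delta_0$; from the second, $d_{1/2}^{-2} = \ber M_{-1/2}\cdot \delta_0 \cdot d_{-1/2}$, so that
\[
d_{-1/2} = \frac{1}{\ber M_{-1/2}\cdot \delta_0 \cdot d_{1/2}^{2}};
\]
plugging this into the third equation and eliminating $\delta_0$ via $\delta_0 = d_{1/2}^2/\ber M_0$ yields
\[
d_{-1} = \frac{\ber M_{-1}\,\ber M_{-1/2}}{(\ber M_0)^{2}}\, d_{1/2}^{5}.
\]
Rearranging gives the trivializing section of $\lambda_{-1}\otimes \lambda_{1/2}^{-5}\cong \stsh_S$ as the expression stated in the theorem.

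The one point that will require care, and that I expect to be the main obstacle, is ensuring that every sign produced by the super conventions (supertranspose in the Berezinian, parity reversal $\Pi$ in passing from $\w^{-1}$ to $\Pi\w^{-1}$, and the sign in the Serre-duality pairing defining $\delta'$) is tracked consistently, so that the two canonical identifications $d_0=d_{1/2}$ and $\delta_{j/2}=\delta_0$ really do hold on the nose and not just up to a unit. Once the sign conventions from Chapter \ref{salg} are imposed uniformly, however, the formula follows directly from the three SES identities and the two canonical identifications, and no additional computation is needed beyond the algebra above.
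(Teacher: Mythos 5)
Your proposal follows exactly the paper's own argument: the three identities from Sections \ref{firstSESsec}, \ref{secondSESsec} and \ref{thirdSESsec}, the Serre-duality identification $d_0 = d_{1/2}$, the canonical identification $\delta_{j/2} = \delta_0$ via Proposition \ref{can_iso_1_new}, and then elementary algebra. The only flaw is a small slip in your middle step: the second identity gives $d_{1/2}^{-1}d_{-1/2}^{-1} = \ber M_{-1/2}\,\delta_0$, hence $d_{-1/2} = \left(\ber M_{-1/2}\,\delta_0\, d_{1/2}\right)^{-1}$ rather than with $d_{1/2}^{2}$ in the denominator; with this correction the substitution yields $d_{-1} = \frac{\ber M_{-1}\,\ber M_{-1/2}}{(\ber M_0)^{2}}\, d_{1/2}^{5}$ as stated, whereas the intermediate line as you wrote it would produce $d_{1/2}^{6}$.
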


\bigskip

\section{A Measure on $\msp$}~ \label{measure_mspr}

Here we follow an idea of E. Witten in \cite{witmeasure}. Thus far we have an explicit formula for the super Mumford form $\mu$, trivializing the line bundle $\lambda_{-1} \lambda_{1/2}^{-5}$ on the moduli space $\msp$. The significance of such a section is that the line bundle $\lambda_{-1}\lambda_{1/2}^{-5}$ is related to the Berezinian of $\msp$.

As discussed in Section \ref{ramondpunctures}, the tangent sheaf to $\msp$ is $R^1\pi_*\mathcal{W}$ where $\mathcal{W}$ is the sheaf of infinitesimal automorphisms, which is seen to be isomorphic (as sheaves of $\mathbb{C}$-vector spaces) to $\mathcal{D}^2$. Hence, by Serre duality and the isomorphism $\w^{-2}(-2\mathcal{F}) \cong \mathcal{D}^2$ one sees that
\begin{equation} \nonumber
\ber \msp = \ber \Omega^1_{\msp} = \ber \pi_*(\w^3(2\mathcal{F})).
\end{equation}
Noting that $R^1\pi_*\w^3(2\mathcal{F}) = 0$, we will write this as
\begin{equation}
\ber \msp = B(\w^3(2\mathcal{F})).
\end{equation}

Trivially we have the short exact sequence
$$
0 \longrightarrow \w^3 \longrightarrow \w^3(2\mathcal{F}) \longrightarrow \w^3(2\mathcal{F})/\w^3 \longrightarrow 0.
$$
By Corollary \ref{trivL} of the Appendix we have canonically the identification
\begin{equation}
B(\w^3(2\mathcal{F})) = B(\w^3) \cong \lambda_{3/2}.
\end{equation}

Now Serre duality identifies $\lambda_{3/2}$ with $\lambda_{-1}$, and thus the super Mumford form $\mu$ can in fact be thought of as a section of $\ber \msp$ valued in a certain line bundle
$$
\mu \in \ber \msp \otimes \lambda_{1/2}^{-5} \cong \stsh_{\msp}.
$$

In bosonic string theory (without punctures), the analogous argument would yield a form similar to $\mu$ such that its modulus squared could genuinely be regarded (in the sense that one had a natural pairing between the analogous factor $\lambda_{1/2}^{-5}$ and its conjugate) as a section of the smooth Berezinian (or simply the determinant in this case) of the moduli space of Riemann surfaces. The celebrated result of Belavin and Knizhnik \cite{belkniz} states that this procedure indeed yields the integrand of the bosonic string partition function, the so-called Polyakov measure.

In superstring theory the super Mumford form $\mu$ plays a similar role to its bosonic counterpart, in that it can be paired with something analogous to its complex conjugate to yield a genuine measure. However, the story is a bit more complicated. The interested reader can learn more in E. Witten's notes \cite{wit4}.
\chapter{The Neveu-Schwarz Puncture Case} \label{ns_case}
\label{chapter_template}


Suppose now we have a family $\pi: X \to S$ of SUSY curves of genus $g \geq 2$ with $n_{NS}$ Neveu-Schwarz punctures. We will reproduce the arguments of \cite{vormum} and \cite{RSVphy} to write down the explicit formula for the associated super Mumford form $\mu$. We then discuss how this form can be used to create a genuine measure on $\mathfrak{M}_{g;n_{NS}}$.

As in the Ramond case, we make local freeness assumptions on the higher direct images $R^i\pi_*\w^j$ and describe $\mu$ in terms of chosen local bases for these sheaves. Here specifically we work with $R^i\pi_*\w^j$ for $i=0,1, \, j=0, 1, 2, 3$. We then relate this to a section of $\ber \mathfrak{M}_{g;n_{NS}}$. As the first part of the following argument is identical to the one found in \cite{vormum} and \cite{RSVphy}, we quickly review the procedure to establish notation but omit some details.

We also assume that we are working over the component of $\mathfrak{M}_{g;n_{NS}}$ corresponding to an odd spin structure. That is, for the relative Berezinian sheaf $\w$ we assume that $\pi_*\w$ has rank $g | 1$. Thus on each fiber the reduction $\Pi \w_{\tred}$ gives an odd nondegenerate theta characteristic.

We choose an odd global section $\nu' \in \w$ and consider the short exact sequence

$$ 0 \longrightarrow \stsh_X \overset{\nu}{\longrightarrow} \Pi\w \longrightarrow (\Pi\w)|_{D} \longrightarrow 0 $$
where $\nu = \Pi \nu'$ and $D = \{ \nu = 0 \} = \{ \nu' = 0\}$. This short exact sequence and the two others obtained by twisting by $\w$ and $\w^2$ is what we focus on. Similar to the argument given in the Ramond case, using these three short exact sequences we can produce the super Mumford isomorphism $\lambda_{3/2} \cong \lambda_{1/2}^5$.

As $\nu'$ is an odd global section of $\w$ its divisor $D$ has the property that its reduction $D_{\tred}$ is a finite sum of $g-1$ points (which we assume to be distinct),
$$
D_{\tred} = \sum_{j=1}^{g-1} p_j.
$$
For each $j$ we choose local superconformal coordinates $z_j \, | \, \zeta_j$ centered at $p_j$.

\bigskip

\section{Bases}~

We choose specific local bases here of the sheaves $R^i\pi_*\w^j$ and $\pi_*\w|_D$ and analyze their compatibility to the above mentioned short exact sequences. The ranks of these various sheaves are easily computable using the same techniques as used in Section \ref{somerierochcalc} of the Ramond case.

\bigskip

\subsection*{Basis in $\pi_*\stsh_X$:}~

The rank of $\pi_*\stsh_X$ is $1 | 1$ and thus we take the local basis 
$$\mathcal{B}_{\stsh_X} := \{ 1 \, | \, \xi \},$$
where $\xi$ expands near each $p_k$ as
$$
\xi \sim ( \xi^{k, -} + \xi^{k, +}\zeta_k + O(z_j) )
$$
where $\xi^{k,\pm}$ are some even/odd functions from the base $S$.

\subsection*{Basis in $\pi_*\w$:}~

The rank of $\pi_*\w$ is $g  | 1$ and we take the local basis 
$$
\mathcal{B}_{\w} := \{ \varphi_1, \cdots, \varphi_{g-1}, \nu' \xi \, | \, \nu' \},
$$
expanding each $\varphi_j$ near $p_k$ as
$$
\varphi_j \sim ( \varphi_j^{k, +} + \varphi_j^{k, -}\zeta_k + O(z_j) ) [dz_j \, | \, d\zeta_j].
$$

\subsection*{Basis in $\pi_*\w^2$:}~

The rank of $\pi_*\w^2$ is $g  | 2g-2$. We take the local basis 
$$
\mathcal{B}_{\w^2} := \{ \nu'^2, \chi_1, \dots, \chi_{g-1} \, | \, \nu' \varphi_1, \cdots, \nu'\varphi_{g-1}, \nu'^2 \xi, \psi_1, \cdots, \psi_{g-2} \},
$$
expanding each $\chi_j$ and $\psi_j$ near $p_k$ as
$$
\chi_j \sim ( \chi_j^{k, +} + \chi_j^{k, -}\zeta_k + O(z_j) ) [dz_j \, | \, d\zeta_j]^2,
$$
$$
\psi_j \sim ( \psi_j^{k, -} + \psi_j^{k, +}\zeta_k + O(z_j) ) [dz_j \, | \, d\zeta_j]^2.
$$

\subsection*{Basis in $\pi_*\w^3$:}~

The rank of $\pi_*\w^3$ is $3g-3  | 2g-2 $. We take the local basis 
\begin{align} \nonumber
\mathcal{B}_{\w^3} := \{ \nu'^2 \varphi_1, \cdots, \nu'^2\varphi_{g-1}, \nu'^3\xi, &\nu'\psi_1, \cdots, \nu'\psi_{g-1}, \sigma_1, \cdots, \sigma_{g-1} \, \\ & | \, \nu'^3, \nu' \chi_1, \cdots, \nu' \chi_{g-1}, \rho_1, \cdots, \rho_{g-2}  \}, \nonumber
\end{align}
expanding each $\sigma_j$ and $\rho_j$ near $p_k$ as
$$
\sigma_j \sim ( \sigma_j^{k, +} + \sigma_j^{k, -}\zeta_k + O(z_j) ) [dz_j \, | \, d\zeta_j]^3,
$$
$$
\rho_j \sim ( \rho_j^{k, -} + \rho_j^{k, +}\zeta_k + O(z_j) ) [dz_j \, | \, d\zeta_j]^3.
$$

\subsection*{Basis in $\pi_*(\w^{-1})$:}~

The rank of $\pi_*(\w^{-1})$ is $1  |  0$ and we take the local basis 
$$
\mathcal{B}_{\w^{-1}} := \{ \xi / \nu'\}.
$$

\subsection*{Basis in $\pi_*\w^j|_{D}$:}~

With the aid of the specific chosen local coordinates $z_j \, | \, \zeta_j$ we take the local basis (for $j \geq 0$)
$$
\mathcal{B}_{\w^j|_D} := \{ [dz_1|d\zeta_1]^j, \cdots, [dz_{g-1}|d\zeta_{g-1}]^j \, | \, \zeta_1[dz_1|d\zeta_1]^j, \cdots, \zeta_{g-1}[dz_{g-1}|d\zeta_{g-1}]^j \}.
$$

Finally we utilize Serre duality, the canonical isomorphisms $R^1\pi_*\w^j \cong (\pi_*\w^{1-j})^*$, to construct local bases for $R^1\pi_*\stsh_X$, $R^1\pi_*\w$, $R^1\pi_*\w^2$, and $R^1\pi_*\w^3$ by taking the image of the corresponding dual bases of those already specified. We denote these bases by $\mathcal{B}_{\w^j}^*$.

We set
$$
\lambda_{j/2} := B(\w^j).
$$
Using these bases, we consider the following local generators of the various $\lambda_{j/2}$,
\begin{equation} \label{gens_2}
\begin{split}
d_0 & := \ber \mathcal{B}_{\stsh_X} \otimes \ber \mathcal{B}^*_{\w} \in \lambda_0 \\
d_{1/2} & : = d_0 \in \lambda_{1/2} \\
d_1 & := \ber \mathcal{B}_{\w^2} \otimes \ber \mathcal{B}^*_{\w^{-1}} \in \lambda_1 \\
d_{3/2} & := \ber \mathcal{B}_{\w^3} \in \lambda_{3/2} \\
\delta_{j/2} & := \ber \mathcal{B}_{\w^j|_D} \in B(\w^j|_D). \\
\end{split}
\end{equation}

\bigskip

\section{Relating the Chosen Bases}~ \label{relate_bases_ns}

The first short exact sequence is
$$ 0 \longrightarrow \stsh_X \overset{\nu}{\longrightarrow} \Pi\w \longrightarrow (\Pi\w)|_{D} \longrightarrow 0 $$
which gives the long exact sequence on cohomology (after using Serre duality)
$$
0 \longrightarrow \pi_*\stsh_X \overset{\nu}{\longrightarrow} \Pi\pi_*\w \longrightarrow \Pi\pi_*(\w)|_{D} \longrightarrow (\pi_*\w)^* \longrightarrow \Pi (\pi_*\stsh_X)^* \longrightarrow 0. 
$$

Following the work of \cite{vormum} or \cite{RSVphy} we conclude under the canonical isomorphism $\lambda_{1/2}^{-1} \cong \lambda_0 \otimes B(\w|_{D})^{-1}$ that
\begin{equation} \label{id1}
d_{1/2}^{-1} = \ber M_1 \, d_0 \delta_{1/2}^{-1}
\end{equation}
where $M_1$ is the block matrix
\begin{equation} \label{m1_ns}
M_1 = 
\begin{pmatrix}
   A_1 \\
   B^t_1
\end{pmatrix}
\end{equation}
where $A_1$ is the $(g-1) \times (2g-2)$ matrix
$$
A_1 = 
\left (
\begin{array}{ccc | ccc}
\varphi_1^{1,+} & \dots & \varphi_1^{g-1,+} & \varphi_1^{1,-} & \dots & \varphi_1^{g-1,-} \\
\vdots & & \vdots & \vdots & & \vdots \\
\varphi_{g-1}^{1,+} & \dots & \varphi_{g-1}^{g-1,+} & \varphi_{g-1}^{1,-} & \dots & \varphi_{g-1}^{g-1,-}
\end{array}
\right )
$$
and $B_1$ is any left inverse of $A_1$.

Similarly, the exact sequence
$$
0 \longrightarrow \Pi\w \overset{\nu}{\longrightarrow} \w^2 \longrightarrow (\w^2)|_{D} \longrightarrow 0,
$$
yields
$$
0 \longrightarrow \Pi \pi_*\w \overset{\nu}{\longrightarrow} \pi_*\w^2 \longrightarrow \pi_*(\w^2)|_{D} \longrightarrow \Pi (\pi_*\stsh_X)^* \longrightarrow (\pi_*\w^{-1})^* \longrightarrow 0.
$$
Arguing again as in \cite{vormum} and \cite{RSVphy} we conclude
\begin{equation} \label{id2}
    d_1 = \ber M_2 \, d_{1/2}^{-1} \delta_{1},
\end{equation}
where $M_2$ is the $(2g-2) \times (2g-2)$ square matrix

\begin{equation} \label{m2_ns}
M_2 = 
\left (
\begin{array}{ccc | ccc}
\chi_1^{1,+} & \dots & \chi_1^{g-1,+} & \chi_1^{1,-} & \dots & \chi_1^{g-1,-} \\
\vdots & & \vdots & \vdots & & \vdots \\
\chi_{g-1}^{1,+} & \dots & \chi_{g-1}^{g-1,+} & \chi_{g-1}^{1,-} & \dots & \chi_{g-1}^{g-1,-} \\
\phantom{asdf} \\
\hline \\
\psi_{1}^{1,-} & \dots & \psi_{1}^{g-1,-} & \psi_{1}^{1,+} & \dots & \psi_{1}^{g-1,+} \\
\vdots & & \vdots & \vdots & & \vdots \\
\psi_{g-2}^{1,-} & \dots & \psi_{g-2}^{g-1,-} & \psi_{g-2}^{1,+} & \dots & \psi_{g-2}^{g-1,+} \\
0 & \dots & 0 & 0 & \dots & 1
\end{array}
\right ).
\end{equation}
\medskip

Lastly, the final short exact sequence
$$
0 \longrightarrow \w^2 \overset{\nu}{\longrightarrow} \Pi\w^3 \longrightarrow (\Pi\w^3)|_{D} \longrightarrow 0
$$
gives
$$
0 \longrightarrow \pi_*\w^2 \overset{\nu}{\longrightarrow} \Pi\pi_*\w^3 \longrightarrow \Pi\pi_*(\w^3)|_{D} \longrightarrow (\pi_*\w^{-1})^* \longrightarrow 0.
$$
Thus, under $\lambda_{3/2}^{-1} \cong \lambda_1 B(\w^3|_D)^{-1}$ we conclude
\begin{equation} \label{id3}
    d_{3/2}^{-1} = \ber M_3 \, d_1 \delta_{3/2}^{-1},
\end{equation}
where $M_3$ is the $(2g-2) \times (2g-2)$ square matrix

\begin{equation} \label{m3_ns}
M_3 = 
\left (
\begin{array}{ccc | ccc}
\rho_1^{1,+} & \dots & \rho_1^{g-1,+} & \rho_1^{1,-} & \dots & \rho_1^{g-1,-} \\
\vdots & & \vdots & \vdots & & \vdots \\
\rho_{g-2}^{1,+} & \dots & \rho_{g-2}^{g-1,+} & \rho_{g-2}^{1,-} & \dots & \rho_{g-2}^{g-1,-} \\
\xi_1^{-1} & \dots & 0 & 0 & \dots & 0 \\
\hline \\
\sigma_{1}^{1,-} & \dots & \sigma_{1}^{g-1,-} & \sigma_{1}^{1,+} & \dots & \sigma_{1}^{g-1,+} \\
\vdots & & \vdots & \vdots & & \vdots \\
\sigma_{g-1}^{1,-} & \dots & \sigma_{g-1}^{g-1,-} & \sigma_{g-1}^{1,+} & \dots & \sigma_{g-1}^{g-1,+}
\end{array}
\right ).
\end{equation}

\medskip 

Under the canonical isomorphism guaranteed by Proposition \ref{can_iso_1_new} of the Appendix, we have that 
$$
\delta_{j/2}^{(-1)^{j-1}}=\delta_{1/2}.
$$
Combining this with the identifications (\ref{id1}), (\ref{id2}) and (\ref{id3}), we get the desired formula for the super Mumford form $\mu$ as in \cite{vormum} and \cite{RSVphy}.
\begin{thm} \label{mainthm2} \emph{(}\cite{vormum}, \cite{RSVphy}\emph{)}
Suppose $\pi:X \to S$ is a family of super Riemann surfaces of genus $g \geq 2$ such that:
\begin{enumerate}
    \item The sheaves $R^i\pi_*\w^j$ are locally free for $i=0,1$, $j=0,1,2, 3$.
    \item $\pi_*\w$ has rank $g|1$.
\end{enumerate}
Then the super Mumford form $\mu$ may be expressed via the sections chosen in (\ref{gens_2}) as
$$
\mu = d_{3/2} d_{1/2}^{-5} \frac{\ber M_3 \, \ber M_2}{(\ber M_1)^2},
$$
where $M_1, M_2$ and $M_3$ are given by (\ref{m1_ns}), (\ref{m2_ns}) and (\ref{m3_ns}) respectively.
\end{thm}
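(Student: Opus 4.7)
The plan is to combine the three identities (\ref{id1}), (\ref{id2}), (\ref{id3}) already derived in Section \ref{relate_bases_ns} into a single relation in $\lambda_{3/2}\lambda_{1/2}^{-5}$, and then read off the coefficient that trivializes $\mu$. All the hard work — producing the block matrices $M_1, M_2, M_3$ from the three short exact sequences via the connecting homomorphism and Serre duality — has already been carried out. What remains is bookkeeping.

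First, I would invoke Proposition \ref{can_iso_1_new} from the Appendix (as was used in the Ramond case, cf.\ Section \ref{mu_expression}) to obtain the canonical identifications
$$\delta_{j/2}^{(-1)^{j-1}} = \delta_{1/2},$$
so in particular $\delta_1 = \delta_{1/2}^{-1}$ and $\delta_{3/2} = \delta_{1/2}$. Next, Serre duality gives $\lambda_0 \cong \lambda_{1/2}$, and because $\mathcal{B}_{\stsh_X}^*$ and $\mathcal{B}_{\w}^*$ are the Serre-duals of $\mathcal{B}_{\w}$ and $\mathcal{B}_{\stsh_X}$ respectively, this identification matches $d_0$ with $d_{1/2}$ by construction of the generators in (\ref{gens_2}).

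Then the algebra runs as follows. Identity (\ref{id1}) becomes $\delta_{1/2} = \ber M_1 \cdot d_{1/2}^{2}$ after substituting $d_0 = d_{1/2}$. Substituting this and $\delta_1 = \delta_{1/2}^{-1}$ into (\ref{id2}) gives $d_1 = \ber M_2 \cdot d_{1/2}^{-1} \delta_{1/2}^{-1}$. Feeding this and $\delta_{3/2} = \delta_{1/2}$ into (\ref{id3}) and collapsing the powers of $\delta_{1/2}$ using $\delta_{1/2} = \ber M_1 \cdot d_{1/2}^{2}$ yields
$$d_{3/2}^{-1} \;=\; \frac{\ber M_3\,\ber M_2}{(\ber M_1)^2}\; d_{1/2}^{-5}.$$
Since $\mu$ is by definition the section of $\lambda_{3/2}\lambda_{1/2}^{-5}$ that trivializes this line bundle (i.e.\ corresponds to $1_S$ under the super Mumford isomorphism of Proposition \ref{smumiso}), the element $d_{3/2}d_{1/2}^{-5}$ must equal $\mu$ times the scalar inverse of the coefficient above, giving the stated formula.

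The only step that is not purely mechanical is the comparison of the $\delta_{j/2}$'s, which relies on the canonical isomorphism $B(\w^j|_D) \cong B(\stsh_X|_D)$ proved in the Appendix. This is the same technical point that drove the analogous argument in the Ramond case (Section \ref{mu_expression}); here it is slightly easier because $D$ is a divisor of relative dimension $0|1$ cut out by the odd section $\nu'$, so the third terms of all three short exact sequences are supported on $D$ and Proposition \ref{can_iso_1_new} applies directly to each.
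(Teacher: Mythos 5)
Your proposal is correct and follows essentially the same route as the paper: the paper likewise derives the formula by combining the identities (\ref{id1}), (\ref{id2}), (\ref{id3}) with $d_0 = d_{1/2}$ and the canonical identifications $\delta_{j/2}^{(-1)^{j-1}} = \delta_{1/2}$ from Proposition \ref{can_iso_1_new}, leaving exactly the bookkeeping you have written out. Your algebra (eliminating $\delta_{1/2}$ via $\delta_{1/2} = \ber M_1\, d_{1/2}^{2}$ to reach $d_{3/2}^{-1} = \frac{\ber M_3\,\ber M_2}{(\ber M_1)^2}\, d_{1/2}^{-5}$) is the correct and intended completion.
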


\bigskip

\section{Relation to $\ber \mathfrak{M}_{g;n_{NS}}$}~

In the situation without any punctures, such a formula for $\mu$ is of immediate interest as the Berezinian of $\mathfrak{M}_g$ \emph{is} simply $\lambda_{3/2}$, hence $\mu$ is interpreted as a section of the Berezinian of supermoduli space valued in a certain line bundle. However, when one considers punctures the story is a bit different, as it is no longer true that $\lambda_{3/2}$ is $\ber \mathfrak{M}_{g;n_{NS}}$. In Section \ref{nspuctures_sec} we saw that $\mathcal{T}_{\mathfrak{M}_{g;n_{NS}}} \cong R^1\pi_*\mathcal{W} \cong R^1\pi_*\mathcal{D}^2(-N)$ for $N = \sum_{k=1}^{n_{NS}} \text{div}(s_k)$ the Neveu-Schwarz divisor. Serre duality then gives that instead we have $\ber \mathfrak{M}_{g;n_{NS}} \cong B(\w^3(N)) = \ber \pi_*\w^3(N)$.

Let $N_{\tred} = \sum q_k$ be the reduction. Then each $q_k$ is a divisor in $X_{\tred}$ that is a single point in each fiber of $\pi$. We choose an even global section of $\w^3(N)$, call it $\tau$, that vanishes to exactly first order on each $q_k$. For each $k$ we choose local coordinates $x_k \, | \, \theta_k$ such that $\tau$ near each $q_k$ is
$$
\tau \sim x_k(a_k + b_k\tth_k + O(x_k)) [dx_k \, | \, d\tth_k].
$$

$\tau$ then induces a short exact sequence
$$
0 \longrightarrow \w^3 \overset{\tau}{\longrightarrow}\w^3(N) \longrightarrow \w^3(N)|_N \longrightarrow 0,
$$
which in fact gives the short exact sequence on cohomology
$$
0 \longrightarrow \pi_*\w^3 \overset{\tau}{\longrightarrow} \pi_*\w^3(N) \longrightarrow \pi_*\w^3(N)|_N \longrightarrow 0.
$$

The rank of $\pi_*\w^3(N)$ is $3g-3 + n_{NS} \, | \, 2g-2 + n_{NS}$, thus we construct a local basis for $\pi_*\w^3(N)$ in the following way: first we consider the image of $\mathcal{B}_{\w^3}$ under $\tau$ and complete it to a basis. Namely we construct
$$
\mathcal{B}_{\w^3(N)} := \tau \mathcal{B}_{\w^3} \cup \mathcal{B}'
$$
where $\mathcal{B}'$ is
$$
\mathcal{B}' = \{ \alpha_1, \cdots, \alpha_{n_{NS}} \, | \, \beta_1, \cdots, \beta_{n_{NS}} \}.
$$
We expand each $\alpha_j$ and $\beta_j$ near $q_k$ as
$$
\alpha_j \sim ( \alpha_j^{k, +} + \alpha_j^{k, -}\tth_k + O(x_j) ) [dx_j \, | \, d\tth_j]^3,
$$
$$
\beta_j \sim ( \beta_j^{k, -} + \beta_j^{k, +}\tth_k + O(x_j) ) [dx_j \, | \, d\tth_j]^3,
$$
and let
\begin{align} \nonumber
\mathcal{B}_{\w^3(N)|_N} := & \\ \{ [dx_1|d\tth_1]^3, \cdots, [dx_{n_{NS}}|d\tth_{n_{NS}}]^3 \, | \, &\tth_1[dx_1|d\tth_1]^3, \cdots, \tth_{n_{NS}}[dx_{n_{NS}}|d\tth_{n_{NS}}]^3 \}. \nonumber
\end{align}
Putting
\begin{equation} \label{gens_3}
    \begin{split}
        \delta^N_{3/2} & := \ber \mathcal{B}_{\w^3(N)|_N}, \\ 
        d_{3/2}^N & := \ber \mathcal{B}_{\w^3(N)},
    \end{split}
\end{equation}
we easily see that in the canonical identification 
$$ 
B(\w^3(N)) \cong \lambda_{3/2} \, B(\w^3(N)|_N)
$$
we have
$$
d_{3/2}^N = \ber M' \, d_{3/2} \delta^N_{3/2},
$$
where

\begin{equation} \label{m'}
M' = 
\left (
\begin{array}{ccc | ccc}
\alpha_1^{1,+} & \dots & \alpha_1^{n_{NS},+} & \alpha_1^{1,-} & \dots & \alpha_1^{n_{NS},-} \\
\vdots & & \vdots & \vdots & & \vdots \\
\alpha_{n_{NS}}^{1,+} & \dots & \alpha_{n_{NS}}^{n_{NS},+} & \alpha_{n_{NS}}^{1,-} & \dots & \alpha_{n_{NS}}^{n_{NS},-} \\
\phantom{asdf} \\
\hline \\
\beta_{1}^{1,-} & \dots & \beta_{1}^{n_{NS},-} & \beta_{1}^{1,+} & \dots & \beta_{1}^{n_{NS},+} \\
\vdots & & \vdots & \vdots & & \vdots \\
\beta_{n_{NS}}^{1,-} & \dots & \beta_{n_{NS}}^{n_{NS},-} & \beta_{n_{NS}}^{1,+} & \dots & \beta_{n_{NS}}^{n_{NS},+}
\end{array}
\right ).
\end{equation}
Combining this discussion with that of the Section \ref{relate_bases_ns} we obtain the following corollary.
\begin{cor} \label{maincor} Suppose $\pi:X \to S$ is a family of super Riemann surfaces of genus $g \geq 2$ with $n_{NS}$ Neveu-Schwarz punctures such that:
\begin{enumerate}
    \item The sheaves $R^i\pi_*\w^j$ are locally free for $i=0,1$, $j=0,1,2, 3$.
    \item $\pi_*\w$ has rank $g|1$.
\end{enumerate}
Then via the sections defined in (\ref{gens_2}), (\ref{gens_3}) and matrices in (\ref{m1_ns}), (\ref{m2_ns}), (\ref{m3_ns}), (\ref{m'}), the expression
$$
\mu^N := d^N_{3/2} (\delta^N_{3/2})^{-1} d_{1/2}^{-5} \frac{\ber M_3 \, \ber M_2}{(\ber M_1)^2 \, \ber M'} 
$$
gives a trivializing section of the line bundle
$$
 \ber \mathfrak{M}_{g;n_{NS}} \otimes B(\w^3(N)|_N)^{-1} \otimes \lambda_{1/2}^{-5}
$$
on $\mathfrak{M}_{g; n_{NS}}$.
\end{cor}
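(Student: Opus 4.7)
The plan is to combine the formula for $\mu$ already established in Theorem \ref{mainthm2} with the comparison between $\lambda_{3/2}$ and $\ber\pi_*\w^3(N)$ developed in the preceding paragraphs of this section. Specifically, Theorem \ref{mainthm2} provides
$$\mu = d_{3/2} d_{1/2}^{-5} \frac{\ber M_3 \, \ber M_2}{(\ber M_1)^2} \in \lambda_{3/2}\lambda_{1/2}^{-5},$$
while the short exact sequence $0 \to \w^3 \to \w^3(N) \to \w^3(N)|_N \to 0$ together with the chosen bases yields the identity $d_{3/2}^N = \ber M' \, d_{3/2} \, \delta^N_{3/2}$ under the canonical isomorphism $B(\w^3(N)) \cong \lambda_{3/2} \otimes B(\w^3(N)|_N)$ coming from the Berezinian of the associated long exact sequence (which is in fact short since $R^1\pi_*\w^3$ and $R^1\pi_*\w^3(N)$ vanish by the rank computations in the NS case).

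First I would solve the second relation for $d_{3/2}$, giving $d_{3/2} = (\ber M')^{-1} d_{3/2}^N (\delta^N_{3/2})^{-1}$, and substitute this into the expression for $\mu$. This directly produces
$$\mu = d_{3/2}^N (\delta^N_{3/2})^{-1} d_{1/2}^{-5} \frac{\ber M_3 \, \ber M_2}{(\ber M_1)^2 \, \ber M'},$$
which is precisely the expression $\mu^N$ given in the statement of the corollary. Since $\mu$ trivializes $\lambda_{3/2}\lambda_{1/2}^{-5}$, the new object $\mu^N$ trivializes $B(\w^3(N)) \otimes B(\w^3(N)|_N)^{-1} \otimes \lambda_{1/2}^{-5}$.

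Finally, I would invoke the identification $\ber\mathfrak{M}_{g;n_{NS}} \cong B(\w^3(N))$, which follows from the computation $\mathcal{T}_{\mathfrak{M}_{g;n_{NS}}} \cong R^1\pi_*\mathcal{W}' \cong R^1\pi_*\mathcal{D}^2(-N)$ done in Section \ref{nspuctures_sec}, together with Serre duality and the isomorphism $\w^{-2}(-N) \cong \mathcal{D}^2(-N)$. Under the local freeness assumption, $R^1\pi_*\w^3(N) = 0$, so $B(\w^3(N)) = \ber\pi_*\w^3(N)$ and the identification with $\ber\mathfrak{M}_{g;n_{NS}}$ is clean. Substituting gives
$$\mu^N \in \ber\mathfrak{M}_{g;n_{NS}} \otimes B(\w^3(N)|_N)^{-1} \otimes \lambda_{1/2}^{-5},$$
as desired.

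There is essentially no hard step here; the work was done in the preceding subsections. The only thing requiring care is the bookkeeping of signs and parities in manipulating Berezinians of block matrices (so that rearranging the cocycle $d_{3/2}^N = \ber M' \, d_{3/2} \, \delta^N_{3/2}$ indeed produces the factor $\ber M'$ in the denominator without sign surprises), and verifying once more that the local freeness hypotheses imply the vanishing of $R^1\pi_*\w^3(N)$, so that the passage from $B(\w^3(N))$ to $\ber\pi_*\w^3(N)$ introduces no extra factors.
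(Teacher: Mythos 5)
Your proposal is correct and follows essentially the same route as the paper: it combines the formula of Theorem \ref{mainthm2} with the relation $d_{3/2}^N = \ber M' \, d_{3/2}\, \delta^N_{3/2}$ coming from the short exact sequence $0 \to \w^3 \to \w^3(N) \to \w^3(N)|_N \to 0$, and then invokes the identification $\ber \mathfrak{M}_{g;n_{NS}} \cong B(\w^3(N)) = \ber \pi_*\w^3(N)$ obtained from $\mathcal{T}_{\mathfrak{M}_{g;n_{NS}}} \cong R^1\pi_*\mathcal{D}^2(-N)$ and Serre duality. This is exactly the bookkeeping the paper performs in the section relating $\lambda_{3/2}$ to $\ber \mathfrak{M}_{g;n_{NS}}$.
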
 

Thus the constructed object $\mu^N$ can be viewed as a section of the Berezinian of the moduli space $\mathfrak{M}_{g:n_{NS}}$ with values in a particular line bundle.

The utility of such a formula for $\mu^N$ is that it indeed can be used to construct a measure on $\mathfrak{M}_{g; n_{NS}}$. The process of constructing this measure depends on the particular type of superstring theory one is working in, heterotic or Type II for example. In \cite{witmeasure} such a process is described, however it assumes the object one starts with is a section of $\ber \mathfrak{M}_{g;n_{NS}} \otimes \lambda_{1/2}^{-5}$ rather than what is given in Corollary \ref{maincor}, in a section of $\ber \mathfrak{M}_{g;n_{NS}} \otimes B(\w^3(N)|_N)^{-1} \otimes \lambda_{1/2}^{-5}$. In calculating scattering amplitudes, one inserts so called vertex operators at each puncture. The collection of them can be thought of as sections of $B(\w^3(N)|_N)$. Hence, after multiplying with the form $\mu^N$ we indeed arrive at a section of $\ber \mathfrak{M}_{g;n_{NS}} \otimes \lambda_{1/2}^{-5}$. The details of this discussion can be found in \cite{wit4} and \cite{witmeasure}.

\include{chapters/conclusion}

\bibliography{thesis}

\begin{thebibliography}{10}

\bibitem{vormum}
A.~A. Voronov.
\newblock A formula for the {M}umford measure in superstring theory.
\newblock {\em Funktsional. Anal. i Prilozhen.}, 22(2):67--68, 1988.

\bibitem{DPhong}
E.~D'Hoker and D.~H. Phong.
\newblock Lectures on two-loop superstrings, 2002, arXiv:hep-th/0211111.

\bibitem{wit4}
E.~Witten.
\newblock Superstring perturbation theory revisited, 2012, arXiv:1209.5461.

\bibitem{donwit}
R.~Donagi and E.~Witten.
\newblock Supermoduli space is not projected.
\newblock In {\em String-{M}ath 2012}, volume~90 of {\em Proc. Sympos. Pure
  Math.}, pages 19--71. Amer. Math. Soc., Providence, RI, 2015.

\bibitem{man1}
Yu.~I. Manin.
\newblock {\em Gauge field theory and complex geometry}, volume 289 of {\em
  Grundlehren der Mathematischen Wissenschaften [Fundamental Principles of
  Mathematical Sciences]}.
\newblock Springer-Verlag, Berlin, 1988.
\newblock Translated from the Russian by N. Koblitz and J. R. King.

\bibitem{wit1}
E.~Witten.
\newblock Notes on super riemann surfaces and their moduli, 2012,
  arXiv:1209.2459.

\bibitem{dan_paper}
D.~J. Diroff.
\newblock On the super {M}umford form in the presence of {R}amond and
  {N}eveu--{S}chwarz punctures.
\newblock {\em J. Geom. Phys.}, 144:273--293, 2019.

\bibitem{qfst}
P.~Deligne, P.~Etingof, D.~S. Freed, L.~C. Jeffrey, D.~Kazhdan, J.~W. Morgan,
  D.~R. Morrison, and E.~Witten, editors.
\newblock {\em Quantum fields and strings: a course for mathematicians. {V}ol.
  1, 2}.
\newblock American Mathematical Society, Providence, RI; Institute for Advanced
  Study (IAS), Princeton, NJ, 1999.
\newblock Material from the Special Year on Quantum Field Theory held at the
  Institute for Advanced Study, Princeton, NJ, 1996--1997.

\bibitem{dan_r_beamer}
D.~Ruiperez.
\newblock Supercurves, supersymmetric curves and their moduli spaces - lecture.
\newblock \url{https://www.icmat.es/RT/2018/RPMS/hernandez-ruiperez.mos.pdf},
  November 2018.

\bibitem{var}
V.~S. Varadarajan.
\newblock {\em Supersymmetry for mathematicians: an introduction}, volume~11 of
  {\em Courant Lecture Notes in Mathematics}.
\newblock New York University, Courant Institute of Mathematical Sciences, New
  York; American Mathematical Society, Providence, RI, 2004.

\bibitem{smoothfromcomplex}
C.~Haske and R.~O. Wells, Jr.
\newblock Serre duality on complex supermanifolds.
\newblock {\em Duke Math. J.}, 54(2):493--500, 1987.

\bibitem{super_gaga}
P.~Topiwala and J.~M. Rabin.
\newblock The super {GAGA} principle and families of super {R}iemann surfaces.
\newblock {\em Proc. Amer. Math. Soc.}, 113(1):11--20, 1991.

\bibitem{vids}
P.~Deligne, E.~Witten, R.~Donagi, E.~D'Hoker, and D.H. Phong.
\newblock Supermoduli workshop video lectures.
\newblock \url{http://scgp.stonybrook.edu/archives/10356}, May 2015.

\bibitem{locmod}
A.~Rogers.
\newblock Graded manifolds, supermanifolds and infinite-dimensional {G}rassmann
  algebras.
\newblock {\em Comm. Math. Phys.}, 105(3):375--384, 1986.

\bibitem{modSRS}
C.~LeBrun and M.~Rothstein.
\newblock Moduli of super {R}iemann surfaces.
\newblock {\em Comm. Math. Phys.}, 117(1):159--176, 1988.

\bibitem{RSV}
A.~A. Rosly, A.~S. Schwarz, and A.~A. Voronov.
\newblock Geometry of superconformal manifolds.
\newblock {\em Comm. Math. Phys.}, 119(1):129--152, 1988.

\bibitem{elmsg}
A.~A. Voronov, Yu.~I. Manin, and I.~B. Penkov.
\newblock Elements of supergeometry.
\newblock In {\em Current problems in mathematics. {N}ewest results, {V}ol.\
  32}, Itogi Nauki i Tekhniki, pages 3--25. Akad. Nauk SSSR, Vsesoyuz. Inst.
  Nauchn. i Tekhn. Inform., Moscow, 1988.
\newblock Translated in J. Soviet Math. {{\bf{5}}1} (1990), no. 1, 2069--2083.

\bibitem{belman}
A.~A. Beilinson and Yu.~I. Manin.
\newblock The {M}umford form and the {P}olyakov measure in string theory.
\newblock {\em Comm. Math. Phys.}, 107(3):359--376, 1986.

\bibitem{RSVphy}
A.~A. Rosly, A.~S. Schwarz, and A.~A. Voronov.
\newblock Superconformal geometry and string theory.
\newblock {\em Comm. Math. Phys.}, 120(3):437--450, 1989.

\bibitem{bruzzo2}
U.~Bruzzo and J.~A. Dom\'inguez~P\'erez.
\newblock Line bundles over families of (super) {R}iemann surfaces. {II}. {T}he
  graded case.
\newblock {\em J. Geom. Phys.}, 10(3):269--286, 1993.

\bibitem{witmeasure}
E.~Witten.
\newblock Notes on holomorphic string and superstring theory measures of low
  genus, 2013, arXiv:1306.3621.

\bibitem{belkniz}
A.~A. Belavin and V.~G. Knizhnik.
\newblock Complex geometry and the theory of quantum strings.
\newblock {\em Zh. \`Eksper. Teoret. Fiz.}, 91(2):364--390, 1986.

\bibitem{wit3}
E.~Witten.
\newblock {The Super Period Matrix With Ramond Punctures}.
\newblock {\em J. Geom. Phys.}, 92:210--239, 2015, 1501.02499.

\end{thebibliography}

\appendix
\chapter{A Few Technical Results} \label{appendices}
\label{appendix}

\section{More on the Local Structure Near a Ramond Puncture}~

Here we develop a few technical statements used in the main arguments of the paper. These were motivated by \cite{witmeasure}.

\bigskip

Suppose we have a family of SUSY curves with $n_R$ Ramond punctures $\pi: X \to S$. Denote by $\w = \ber X/S$, the relative Berezinian sheaf. Let $\mathcal{F}$ denote the Ramond divisor and decompose it $\mathcal{F} = \sum_{k}^{n_R} \mathcal{F}_k$ into its $n_R$ minimal components. Recall that near a Ramond divisor $\mathcal{F}_k$ there are coordinates $x \, | \, \tth$ such that the divisor $\mathcal{F}_k$ is given by $\{ x = 0 \}$ and distribution $\mathcal{D}$ is generated by
$$
D^*_\tth := \frac{\partial}{\partial \tth} + x \tth \frac{\partial}{\partial x}.
$$
Here the distinguished subbundle of $\Omega^1_{X/S}$ is $\mathcal{D}^{-2}(-\mathcal{F})$ and admits a generator
$$
\varpi^*_\tth := dx - x\tth d\tth.
$$

Coordinates near a Ramond puncture for which $D^*_\tth$ (or $\varpi^*_\tth$) generate $\mathcal{D}$ (resp. $\mathcal{D}^{-2}(-\mathcal{F})$) are called superconformal. A \emph{superconformal} change of coordinates near a Ramond puncture is a change of coordinates $z \, | \, \zeta$ such that one still has $\mathcal{F}_k = \{z = 0 \}$ and $D^*_{\zeta}$ is a $\stsh_X$-multiple of $D^*_{\tth}$. One can also phrase this condition equivalently as the form $\varpi^*_{\zeta}$ is a multiple of $\varpi_{\tth}^*$.

It turns out that the possible choices of superconformal coordinates near a Ramond puncture is restricted. In fact, this is heavily exploited by E. Witten in \cite{wit3} to define the notion of odd periods of closed holomorphic one-forms on such a family of Ramond punctured SUSY curves. Witten phrases this constraint on coordinates as ``The odd coordinate $\tth$ is defined up to sign and a shift by an odd constant."
\begin{lem} \label{superconformallemma}
Let $x \, | \, \tth$ denote superconformal coordinates near a Ramond puncture $\mathcal{F}_k$. Any superconformal change of coordinates $z \, | \, \zeta$ can be expressed as
$$
z = f(x) + \lambda(x)\tth
$$
$$
\zeta = \psi(x) + g(x)\tth
$$
for even $f,g$ and odd $\psi, \lambda$. We then have 
\begin{enumerate}
    \item \label{g_square_1} $g(0)^2 = 1$, and
    \item \label{lambda_prime_psi_0} $\lambda'(0)\psi(0) = 0$.
\end{enumerate}
\end{lem}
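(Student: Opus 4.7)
The plan is to translate the superconformal condition $D^*_\zeta = h \cdot D^*_\theta$ (for some local even superfunction $h$) together with the Ramond-divisor condition $\{z=0\} = \{x=0\}$ into explicit algebraic relations on the coefficients $f, g, \lambda, \psi$. The divisor condition forces $z = u\,x$ for some even unit $u$, which unwinds to $f(0)=0$, $\lambda(0)=0$, and $f'(0) \ne 0$ (so that $u$ is actually invertible at the puncture). I then would apply the superconformal relation to both $z$ and $\zeta$, using the tautological identities $D^*_\zeta(\zeta) = 1$ and $D^*_\zeta(z) = z\zeta$. A direct computation (in which $\theta^2 = 0$ kills several terms) gives
\begin{equation*}
D^*_\theta(\zeta) \;=\; g + x\theta\,\psi', \qquad D^*_\theta(z) \;=\; \pm\lambda + x\theta\,f',
\end{equation*}
the sign being fixed by the chosen convention for $\partial_\theta$ on odd factors. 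The $\zeta$-equation $1 = h(g + x\theta\psi')$ then determines $h$: since $(x\theta\psi')^2 = 0$, it can be inverted to $h = g^{-1} - g^{-2} x\theta\,\psi'$, which in particular requires $g$ to be a unit, i.e.\ $g(0) \ne 0$.

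Next I would expand $z\zeta$ explicitly as $f\psi + (fg - \lambda\psi)\,\theta$, substitute the formula for $h$ into the $z$-equation, and match $\theta^0$ and $\theta^1$ components; the terms $\lambda\psi$, $\psi'\lambda$, etc.\ simplify dramatically because $\psi^2 = 0$ and $\theta^2 = 0$. The $\theta^0$-part produces a relation of the shape $\lambda = \pm fg\psi$, and the $\theta^1$-part collapses (after substituting this) to
\begin{equation*}
fg^{2} \;=\; x\bigl(f' - f\,\psi'\psi\bigr).
\end{equation*}
Writing $f(x) = x\,\tilde f(x)$ with $\tilde f(0) = f'(0)$, dividing through by $x$, and evaluating at $x=0$ gives $\tilde f(0)\,g(0)^2 = f'(0)$. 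Since $f'(0)$ is a unit, this forces $g(0)^2 = 1$, proving \eqref{g_square_1}.

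For \eqref{lambda_prime_psi_0}, I would differentiate the relation $\lambda = \pm f g \psi$ in $x$ and evaluate at $x=0$. Because $f(0) = 0$, only the term involving $f'$ survives, yielding $\lambda'(0) = \pm f'(0)\,g(0)\,\psi(0)$. Multiplying by $\psi(0)$ on the right then produces a factor $\psi(0)^2$, which vanishes automatically since $\psi(0)$ is odd. The main delicate step throughout will be keeping the Koszul signs straight when $\partial_\theta$ acts on products of odd coefficients such as $\lambda\theta$ and $g\theta$; but once the two core identities $\lambda = \pm fg\psi$ and $fg^2 = x(f' - f\psi'\psi)$ are in hand, both conclusions of the lemma are immediate consequences of the nilpotency $\psi(0)^2 = 0$ together with the invertibility of $f'(0)$.
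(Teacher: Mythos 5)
Your proposal is correct and follows essentially the same route as the paper: you extract the two coefficient relations $\lambda = \pm fg\psi$ and $fg^2 = x(f' - f\psi'\psi)$ from the superconformal condition and then conclude using $f(0)=0$, the invertibility of $f'(0)$, and $\psi(0)^2=0$, exactly as the paper does. The only cosmetic difference is that you impose the condition via the vector fields ($D^*_\zeta$ proportional to $D^*_\tth$) while the paper uses the equivalent dual statement that $\varpi^*_\zeta$ is proportional to $\varpi^*_\tth$; the resulting relations agree up to harmless signs in the nilpotent terms.
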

\begin{proof}
After some tedious calculations one finds that the condition $\varpi^*_{\zeta}$ is proportional to $\varpi^*_{\tth}$ is
$$
-\left( \frac{\partial z}{\partial x} - z \zeta \frac{\partial \zeta}{\partial x} \right ) x \tth = \left ( \frac{\partial z}{\partial \tth} - z \zeta \frac{\partial \zeta}{\partial \tth} \right ).
$$
In terms of the functions $f,g, \lambda$ and $\psi$ this condition is the pair of conditions
$$
\lambda(x) - f(x)g(x)\psi(x) = 0,
$$
and
$$
f(x)g(x)^2 + \lambda(x)\psi(x)g(x) = xf'(x) - xf(x)\psi(x) \psi'(x).
$$
The first of these two conditions says that $\lambda$ and $\psi$ are proportional, hence their product vanishes. Using this and dividing by $f(x)$ in the second equation gives (note $f \neq 0$ away from $x = 0$)
\begin{equation} \label{app1}
    g(x)^2 = \frac{x}{f(x)} f'(x) - x \psi(x) \psi'(x).
\end{equation}
As the change of coordinates was superconformal, the divisor $\mathcal{F}_k$ was given as both the zero locus of $x$ and $z$, hence in particular $f(0) = 0$. This implies that the ratio $x/f(x) \to 1/f'(0)$ as $x \to 0$. Hence, taking $x \to 0$ in (\ref{app1}) yields $g(0)^2 = 1$, giving (\ref{g_square_1}). Assertion (\ref{lambda_prime_psi_0}) follows immediately from $\lambda(x) = f(x)g(x)\psi(x)$, recalling that $f(0)=0$ and $\psi(x)^2=0$. 
\end{proof}

Lemma \ref{superconformallemma}, will allow us to trivialize $\ber \pi_*(\oo/\oo(-2\mathcal{F}))$ canonically over the base $S$. Combined with Proposition \ref{can_iso_1_new}, this will give a natural trivialization of $\ber \pi_*(\w^3(2\mathcal{F})/\w^3)$. This result proved significant in Section \ref{measure_mspr} as it allowed us to connect the Mumford form constructed in Theorem \ref{mainthm1} with sections of $\ber \msp$.

\begin{lem} \label{ber_trivial_2f}
The Berezinian of the locally free $\stsh_S$-module $\pi_*(\oo/\oo(-2\mathcal{F}))$ is canonically trivial,
$$ \ber \pi_*(\oo/\oo(-2\mathcal{F})) \cong \stsh_S. $$
\end{lem}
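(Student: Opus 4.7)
The plan is to produce the canonical trivializing section by constructing a local generator of $\ber \pi_*(\oo/\oo(-2\mathcal{F}))$ near each Ramond puncture and then verifying, via Lemma \ref{superconformallemma}, that the Berezinian of the transition between any two admissible local generators is identically $1$. Since the Ramond divisor decomposes as a disjoint union $\mathcal{F}=\sum_{k=1}^{\nr}\mathcal{F}_k$, the sheaf splits as $\oo/\oo(-2\mathcal{F})=\bigoplus_k \oo/\oo(-2\mathcal{F}_k)$ and hence
$$ \ber \pi_*\bigl(\oo/\oo(-2\mathcal{F})\bigr) \;\cong\; \bigotimes_{k=1}^{\nr} \ber \pi_*\bigl(\oo/\oo(-2\mathcal{F}_k)\bigr), $$
so it suffices to trivialize each factor canonically.

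Near $\mathcal{F}_k$, choose superconformal coordinates $x_k\,|\,\tth_k$ with $\mathcal{F}_k=\{x_k=0\}$; the $\oo_S$-module $\pi_*(\oo/\oo(-2\mathcal{F}_k))$ is then free on the ordered homogeneous basis $(1,\,x_k\,|\,\tth_k,\,x_k\tth_k)$, yielding the local section $s_k:=[1,\,x_k\,|\,\tth_k,\,x_k\tth_k]$. Given a second superconformal system $z_k\,|\,\zeta_k$, Lemma \ref{superconformallemma} writes $z_k=f(x_k)+\lambda(x_k)\tth_k$ and $\zeta_k=\psi(x_k)+g(x_k)\tth_k$ with $f(0)=0$, $\lambda(0)=f(0)g(0)\psi(0)=0$, $g(0)^2=1$, and $\lambda'(0)\psi(0)=0$. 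Expanding $z_k,\zeta_k,z_k\zeta_k$ modulo $x_k^2$, the transition matrix $M$ from the new basis $(1,\,z_k\,|\,\zeta_k,\,z_k\zeta_k)$ to the old one takes the block form
$$ M=\begin{pmatrix} A & B \\ C & D\end{pmatrix}, \quad A=\begin{pmatrix}1 & 0 \\ 0 & f'(0)\end{pmatrix}, \quad D=\begin{pmatrix}g(0) & 0 \\ g'(0) & f'(0)g(0)\end{pmatrix}, $$
with $C$ having $\lambda'(0)$ as its sole nonzero entry (at position $(2,2)$) and $B$ collecting the odd contributions from $\psi(0)$, $\psi'(0)$ and $f'(0)\psi(0)$. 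The constraint $g(0)^2=1$ gives $\det D=f'(0)$; a direct calculation then reduces $BD^{-1}C$ to the single entry $g(0)\psi(0)\lambda'(0)$, which vanishes by Lemma \ref{superconformallemma}(\ref{lambda_prime_psi_0}) since the even scalar $g(0)$ commutes freely past the odd factors. Hence
$$ \ber M=\det(A-BD^{-1}C)(\det D)^{-1}=\det(A)(\det D)^{-1}=f'(0)\cdot f'(0)^{-1}=1, $$
so $s_k$ depends only on $\mathcal{F}_k$ and not on the chosen superconformal chart, and $s:=\bigotimes_k s_k$ is the desired canonical global trivialization.

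The main obstacle will be the careful super-matrix bookkeeping in the verification that $\ber M=1$: one must simultaneously track even and odd entries modulo $x_k^2$ and exploit both constraints of Lemma \ref{superconformallemma} at once—$g(0)^2=1$ to pin down $\det D$, and $\lambda'(0)\psi(0)=0$ together with the derived $\lambda(0)=0$ to kill the Schur-complement correction $BD^{-1}C$. Once that identity is in hand, the patching to a global trivialization over $S$ is automatic from the disjointness of the $\mathcal{F}_k$ and the tensor-product decomposition above; no additional cohomological input is required.
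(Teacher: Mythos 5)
Your proposal is correct and follows essentially the same route as the paper's proof: decompose $\mathcal{F}$ into its irreducible components, take the local generator $[1,\,x\,|\,\tth,\,x\tth]$ of $\ber\pi_*(\oo/\oo(-2\mathcal{F}_k))$ in superconformal coordinates, and use Lemma \ref{superconformallemma} (both $g(0)^2=1$ and $\lambda'(0)\psi(0)=0$, together with $\lambda(0)=f(0)g(0)\psi(0)=0$) to check that the transition matrix between two such generators has Berezinian $1$. Your explicit Schur-complement computation of $\ber M$ simply fills in the ``quick calculation'' the paper leaves to the reader, and it checks out.
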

\begin{proof}
We work locally on $S$. First, decompose $\mathcal{F} = \sum_k^{n_R} \mathcal{F}_k$ into its irreducible components. Then $\ber \pi_*(\oo/\oo(-2\mathcal{F})) = \otimes_k \ber \pi_*(\oo/\oo(-2\mathcal{F}_k))$ and so it suffices to show the result for each $\mathcal{F}_k$. To simplify notation, for the remainder of the proof write $\mathcal{F}$ for some $\mathcal{F}_k$.

Choose superconformal coordinates $x \, | \, \tth$ near $\mathcal{F} = \{ x = 0 \}$. With these coordinates one can trivialize $\ber \pi_*(\oo/\oo(-2\mathcal{F}))$ by the element
\begin{equation} \label{nat_elt_ber}
    \sigma_{x|\tth} = [1, x \, | \, \tth, x \tth]
\end{equation}
where $1, x, \tth, x\tth$ in (\ref{nat_elt_ber}) are to be understood as their images in $\oo/\oo(-2\mathcal{F})$. We claim that the element $\sigma_{x|\tth}$ is in fact canonical, in the sense that if $z \, | \, \zeta$ is another choice of superconformal coordinates we have $\sigma_{x|\tth} = \sigma_{z|\zeta}$. Indeed, for such a change of coordinates, write as in Lemma \ref{superconformallemma}
$$
z = f(x) + \lambda(x)\tth,
$$
$$
\zeta = \psi(x) + g(x)\tth.
$$
Looking at their images in the quotient $\oo/\oo(-2\mathcal{F})$, we see that modulo $\oo(-2\mathcal{F})$
\begin{equation}
    \begin{split}
        z & = f'(0)x + \lambda'(0)x\tth, \\
        \zeta & = \psi(0) + \psi'(0)x + g(0)\tth + g'(0)x\tth, \\
        z \zeta & = f'(0)\psi(0)x + f'(0)g(0) x\tth.
    \end{split}
\end{equation}

Hence, in $\pi_*(\oo/\oo(-2\mathcal{F}))$, the change of basis matrix $A$ from $\{1, x \, | \, \tth, x\tth \}$ to $\{ 1, z, \, | \, \zeta, z \zeta \}$ is given by
$$
A = 
\begin{pmatrix}
1 & 0 & \psi(0) & 0 \\
0 & f'(0) & \psi'(0) & f'(0)\psi(0) \\
0 & 0 & g(0) & 0 \\
0 & \lambda'(0) & g'(0) & f'(0)g(0) 
\end{pmatrix}.
$$
Recalling from Lemma \ref{superconformallemma} that $g(0)^2=1$ and $\lambda'(0)\psi(0) = 0$, a quick calculation will show that $\ber A = 1$. Thus the element $\sigma = \sigma_{x|\tth} = \sigma_{z|\zeta}$ is independent of the choice of superconformal coordinates. 

This local argument glues to a global canonical isomorphism $\ber \pi_*(\oo/\oo(-2\mathcal{F})) \cong \stsh_S$.

\end{proof}

Now, with the aid of Proposition \ref{can_iso_1_new} we obtain

\begin{cor} \label{trivL} There is a canonical isomorphism
$$
\left( \ber \pi_*(\oo|_{\mathcal{F}}) \right)^{\otimes 2} \cong \oo_S.
$$
Hence, in particular for $\w$ the relative Berezinian sheaf, we get a natural identification
$$
\ber \pi_*(\w^3(2\mathcal{F})/\w^3) \cong \oo_S.
$$
\end{cor}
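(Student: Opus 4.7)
The plan is to combine Lemma \ref{ber_trivial_2f} with Proposition \ref{can_iso_1_new} via a two-step filtration argument. First I would establish the statement about $(\ber \pi_*(\oo|_{\mathcal{F}}))^{\otimes 2}$, and then bootstrap to the $\w^3(2\mathcal{F})/\w^3$ statement by exactly the same mechanism applied to a different filtration.

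For the first assertion, I would consider the natural filtration $\oo(-2\mathcal{F}) \subset \oo(-\mathcal{F}) \subset \oo$, which yields a short exact sequence
$$
0 \longrightarrow \oo(-\mathcal{F})/\oo(-2\mathcal{F}) \longrightarrow \oo/\oo(-2\mathcal{F}) \longrightarrow \oo/\oo(-\mathcal{F}) \longrightarrow 0.
$$
Every sheaf here is supported on the thickening of $\mathcal{F}$, so $R^i\pi_*$ vanishes for $i \geq 1$ and the pushforward sequence is short exact; taking Berezinians gives
$$
\ber \pi_*(\oo/\oo(-2\mathcal{F})) \cong \ber \pi_*(\oo(-\mathcal{F})/\oo(-2\mathcal{F})) \otimes \ber \pi_*(\oo/\oo(-\mathcal{F})).
$$
Next I would identify the outer terms as restrictions of invertible sheaves to $\mathcal{F}$: tensoring $0 \to \oo(-\mathcal{F}) \to \oo \to \oo|_\mathcal{F} \to 0$ with the invertible sheaf $\oo(-\mathcal{F})$ produces the natural isomorphism $\oo(-\mathcal{F})/\oo(-2\mathcal{F}) \cong \oo(-\mathcal{F})|_\mathcal{F}$, while $\oo/\oo(-\mathcal{F}) = \oo|_\mathcal{F}$ by definition. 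Now Proposition \ref{can_iso_1_new} supplies canonical isomorphisms $B(\oo(-\mathcal{F})|_\mathcal{F}) \cong B(\oo|_\mathcal{F})$, and since $\mathcal{F}$ has relative dimension $0|1$ both of these $B$'s coincide with $\ber \pi_*(\,\cdot\,)$. Stringing these together with Lemma \ref{ber_trivial_2f} yields the required
$$
\oo_S \;\cong\; \ber \pi_*(\oo/\oo(-2\mathcal{F})) \;\cong\; \bigl( \ber \pi_*(\oo|_\mathcal{F}) \bigr)^{\otimes 2}.
$$

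For the second assertion, I would run the exact same argument with $\w^3$ in place of $\oo(-2\mathcal{F})$, using the filtration $\w^3 \subset \w^3(\mathcal{F}) \subset \w^3(2\mathcal{F})$. The short exact sequence
$$
0 \longrightarrow \w^3(\mathcal{F})/\w^3 \longrightarrow \w^3(2\mathcal{F})/\w^3 \longrightarrow \w^3(2\mathcal{F})/\w^3(\mathcal{F}) \longrightarrow 0
$$
has successive quotients $\w^3(\mathcal{F})|_\mathcal{F}$ and $\w^3(2\mathcal{F})|_\mathcal{F}$, both of which are restrictions of invertible sheaves on $X$ to the $0|1$-dimensional divisor $\mathcal{F}$. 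Proposition \ref{can_iso_1_new} then identifies each of their $\ber \pi_*$'s canonically with $\ber \pi_*(\oo|_\mathcal{F})$, and multiplicativity of the Berezinian on short exact sequences gives
$$
\ber \pi_*(\w^3(2\mathcal{F})/\w^3) \;\cong\; \bigl(\ber \pi_*(\oo|_\mathcal{F})\bigr)^{\otimes 2} \;\cong\; \oo_S
$$
by the first part.

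The main subtlety I anticipate is bookkeeping the canonicity across the various sheaf identifications: the isomorphisms from Proposition \ref{can_iso_1_new} are natural, but one must make sure the multiplicativity isomorphism $\ber \pi_*\mathcal{F} \cong \ber \pi_*\mathcal{F}' \otimes \ber \pi_*\mathcal{F}''$ attached to the chosen filtration fits together coherently with the twisting isomorphisms $\w^3(k\mathcal{F})/\w^3((k-1)\mathcal{F}) \cong \w^3(k\mathcal{F})|_\mathcal{F}$, so that the final trivialization agrees with the one produced by Lemma \ref{ber_trivial_2f}. Everything else is a direct application of results already established in the excerpt.
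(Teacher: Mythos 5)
Your proposal is correct and follows essentially the same route as the paper: a two-step filtration of $\oo/\oo(-2\mathcal{F})$ (resp.\ $\w^3(2\mathcal{F})/\w^3$) into restrictions of invertible sheaves to $\mathcal{F}$, the canonical identification of their Berezinians of cohomology via Proposition \ref{can_iso_1_new}, and Lemma \ref{ber_trivial_2f} to conclude. The only wrinkle is that $\w^3(k\mathcal{F})|_{\mathcal{F}}$ has rank $0|1$, so Proposition \ref{can_iso_1_new} identifies each factor with $\left(\ber \pi_*(\oo|_{\mathcal{F}})\right)^{-1}$ and the paper arrives at $\left(\ber \pi_*(\oo|_{\mathcal{F}})\right)^{\otimes(-2)}$ rather than your $\left(\ber \pi_*(\oo|_{\mathcal{F}})\right)^{\otimes 2}$ — immaterial for the conclusion, since either tensor square is trivial exactly when the other is.
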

\begin{proof}
By Lemma \ref{ber_trivial_2f} $\ber \pi_*(\oo/\oo(-2\mathcal{F})) \cong \stsh_S$ is naturally trivial. On the other hand, by Proposition \ref{can_iso_1_new}
\begin{equation}
    \begin{split}
        \ber \pi_*(\oo/\oo(-2\mathcal{F})) & \cong \ber(\oo|_{\mathcal{F}}) \otimes \ber(\oo(-\mathcal{F})|_{\mathcal{F}}) \\ 
        & \cong \left( \ber \pi_*(\oo|_{\mathcal{F}}) \right)^{\otimes 2}.
    \end{split}
\end{equation}
From here, it follows that
\begin{equation}
    \begin{split}
        \ber \pi_*(\w^3(2\mathcal{F})/\w^3) & \cong \ber(\w^3(2\mathcal{F})|_{\mathcal{F}}) \otimes \ber(\w^3(\mathcal{F})|_{\mathcal{F}}) \\ 
        & \cong \left( \ber \pi_*(\oo|_{\mathcal{F}}) \right)^{\otimes (-2)} \\
        & \cong \oo_S.
    \end{split}
\end{equation}
\end{proof}

\section{A Proof of The Super Mumford Isomorphism} \label{smum_proof}

Here we attempt to explain in detail the canonical super Mumford isomorphism in the spirit of A. Voronov \cite{vormum} and P. Deligne \cite{vids}. We work in the algebro-geometric setting where the fundamental object of interest is a morphism $f: X \to S$ of complex superschemes which is proper and smooth of relative dimension $1|1$, i.e. a family of supercurves. Of course, an interesting case is one of that of a SUSY family.

Let $f:X \to S$ be any morphism of complex superschemes. If $\F$ is locally free on $X$, flat over $S$ then one considers $h(\F)$, the Berezinian of cohomology of $\F$ (above we denoted this by $B(\F)$), which if all higher direct images $R^if_*\F$ are locally free on $S$ is given by
$$
h(\F) = \bigotimes_i (R^if_*\F)^{(-1)^i}.
$$
Let $\w = \ber \Omega^1_{X/S}$ denote the relative Berezinian line bundle, we denote by $\lambda_{j/2} = h(\w^{\otimes j})$. When $f:X \to S$ is a family of supercurves, the super Mumford isomorphism is a canonical isomorphism
$$
\lambda_{3/2} \cong \lambda_{1/2}^{\otimes 5}.
$$
This will come from a study on the nature of the functor $h$.

\begin{prop} \label{can_iso_1_new}
Suppose $f: D \to S$ is a smooth proper morphism of complex superschemes of relative dimension $0|1$. For any line bundle $\Kk$ of rank $1|0$, flat over $S$, we have the canonical isomorphism
$$
h(\Kk) \otimes h(\oo_D)^{-1} \cong \oo_S.
$$
\end{prop}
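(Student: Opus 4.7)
The plan is to construct the canonical isomorphism from any \emph{local} trivialization of $\Kk$ as an $\oo_D$-module, and then to show the result is independent of that trivialization by a direct Berezinian computation.

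First, since $f$ is smooth proper of relative dimension $0|1$, every fiber is artinian, so $R^i f_*\F = 0$ for all $i \ge 1$ and any quasi-coherent $\F$. Consequently $h(\Kk) = \ber f_*\Kk$ and $h(\oo_D) = \ber f_*\oo_D$ are genuine line bundles on $S$ (both formed from rank $1|1$ pushforwards), and it suffices to exhibit a canonical nowhere-vanishing section of $h(\Kk) \otimes h(\oo_D)^{-1}$.

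Next, cover $S$ (equivalently $D$, since the two have the same underlying topological space) by opens $\{V_i\}$ on which $\Kk$ is trivial, and pick an even $\oo_D$-linear isomorphism $\psi_i : \Kk|_{V_i} \xrightarrow{\sim} \oo_D|_{V_i}$. Each $\psi_i$ induces an isomorphism $h(\psi_i) : h(\Kk)|_{V_i} \xrightarrow{\sim} h(\oo_D)|_{V_i}$ and hence a trivializing section $\sigma_i$ of $h(\Kk) \otimes h(\oo_D)^{-1}$ over $V_i$. Any two trivializations on $V_i$ differ by $\psi_i' = u\cdot\psi_i$ for some even $u \in \Gamma(V_i, \oo_D^\times)$, so $h(\psi_i') = \ber(f_* m_u) \cdot h(\psi_i)$, where $m_u$ denotes multiplication by $u$. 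Thus the section $\sigma_i$ is canonical as soon as one can show $\ber(f_* m_u) = 1$ for every even unit $u$, from which it also follows that the $\sigma_i$ automatically agree on overlaps and glue to a global trivializing section.

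The main obstacle, and really the only nontrivial step, is this Berezinian identity. In étale-local coordinates reflecting the smoothness of relative dimension $0|1$, $\oo_D = \oo_S[\theta]/(\theta^2)$ with $\theta$ odd, and $f_*\oo_D$ has homogeneous basis $\{1 \mid \theta\}$. Writing the even unit $u$ as $u = u_0 + u_1\theta$ with $u_0 \in \oo_{S,0}^\times$ and $u_1 \in \oo_{S,1}$, the relation $\theta^2 = 0$ gives $u \cdot 1 = u_0 \cdot 1 + u_1 \cdot \theta$ and $u\cdot\theta = u_0\cdot\theta$, so the super-matrix of $m_u$ in this basis is the lower-triangular block matrix
$$
[m_u] = \begin{pmatrix} u_0 & 0 \\ u_1 & u_0 \end{pmatrix},
$$
whose Berezinian is $u_0 \cdot u_0^{-1} = 1$. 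This computation, which is precisely where the hypothesis on relative dimension is used, completes the construction and yields the desired canonical isomorphism $h(\Kk) \otimes h(\oo_D)^{-1} \cong \oo_S$.
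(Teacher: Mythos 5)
Your proof is correct and is essentially the paper's argument: both rest on the single computation that multiplication by an even unit $u=u_0+u_1\theta$ on $f_*\oo_D$ is lower-triangular with $u_0$ on the diagonal and hence has Berezinian $1$ (the paper packages this as the triviality of a norm map $\mathrm{Pic}\,D\to\mathrm{Pic}\,S$ together with the identification $h(\Kk)\cong h(\oo_D)\otimes \mathrm{Norm}_{D/S}(\Kk)$, while you phrase it as independence of local trivializations followed by gluing). One small correction: $f$ is finite of some degree $d$ that need not be $1$, so $|D|\neq|S|$ in general and $f_*\oo_D$ has rank $d|d$ rather than $1|1$; your key computation still applies verbatim after decomposing $f^{-1}(V)$ \'etale-locally into $d$ disjoint copies of $V\times\Cx^{0|1}$ and taking the product of the resulting Berezinians, which is exactly how the paper's lemma handles it.
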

\begin{proof}
As the morphism $f$ is smooth and proper of relative dimension $0|1$, it follows that $f$ is a finite morphism of degree say $d$, and that $f_*\oo_D$ is locally free on $S$ of rank $d|d$. As such there is a canonical norm map, the morphism of sheaves of groups
\begin{equation} \label{ber_norm}
    \nn_{D/S} : f_*\oo_{D,0}^* \longrightarrow \oo_{S,0}^*
\end{equation}
defined completely analogously as in the classical case. That is, for an open set $U \subset S$ on which $f_*\oo_D$ is trivial, we have for each element $g \in f_*\oo_{D,0}^*$ the $\oo_S$-automorphism $m_g: f_*\oo_D \to f_*\oo_D$ given by multiplication by $g$. The map in (\ref{ber_norm}) is then $g \mapsto \ber m_g$. This definition glues nicely by the properties of the Berezinian.

The morphism (\ref{ber_norm}) induces a morphism
\begin{equation} \label{ber_norm_h1}
    \nn_{D/S} : H^1(S, f_*\oo_{D,0}^*) \longrightarrow H^1(S, \oo_{S,0}^*)
\end{equation}
which should be thought of as a group homomorphism between the group of invertible $f_*\oo_D$-modules to invertible $\oo_S$-modules.

Now as the map $f$ is finite, it follows that the natural map
\begin{equation} \label{nat_1}
     H^1(S, f_*\oo_{D,0}^*) \longrightarrow H^1(D, \oo_{D,0}^*)
\end{equation}
is an isomorphism. In other words, $f_*\Kk$ is invertible as an $f_*\oo_D$-module if and only if $\Kk$ is an invertible $\oo_D$-module. 

Hence, composing the inverse of (\ref{nat_1}) with (\ref{ber_norm_h1}) we get a group homomorphism, which we still denote by $\nn_{D/S}$
\begin{equation} \label{pic_map}
    \nn_{D/S}: \text{Pic}\, D \longrightarrow \text{Pic}\, S.
\end{equation}

Let us elaborate on the map (\ref{pic_map}). To compute $\nn_{D/S}(\Kk)$ one finds an open cover and trivializations $\{ U_j \subset S, \beta_j: f_*\Kk |_{U_j} \to f_*\oo_D|_{U_j} \}$ and then considers the line bundle on $S$ defined by the cocycle $\{ U_i \cap U_j,  \nn_{D/S}(\beta_i \circ \beta_j^{-1}) \}$.

The rest of the proof will go as follows: we will show that there is a natural isomorphism
\begin{equation} \label{norm_iso}
    h(\Kk) \cong h(\oo_D) \otimes \nn_{D/S}(\Kk)
\end{equation}
and then show that in fact, the map of (\ref{pic_map}) is trivial. To see (\ref{norm_iso}) we find and open cover $\{ U_j \}$ of $D$ so that simultaneously $f_*\Kk$ and $f_*\oo_D$ are trivialized as $f_*\oo_D$ and $\oo_S$ modules respectively. Denote these trivializations by
$$
\beta_j: f_*\Kk |_{U_j} \to f_*\oo_D|_{U_j}
$$
$$
g_j: f_*\oo_D |_{U_j} \to \oo_S^{d|d}|_{U_j}.
$$
Then $\{ U_i \cap U_j , \, g_i \circ (\beta_i \circ \beta_j^{-1}) \circ g_j^{-1} \}$ is a cocycle representing $\Kk$ as a locally free $\oo_S$-module. Thus, $\{ \ber \left( g_i \circ (\beta_i \circ \beta_j^{-1}) \circ g_j^{-1} \right ) \} $ are the transition functions for $h(\Kk)$ with respect to this cover. The maps $\beta_i \circ \beta_j^{-1}$ are each automorphisms of $f_*\oo_D|_{U_i \cap U_j}$, and as such they are multiplication by some element $\beta_{i,j}$. Thus the transition functions for $h(\Kk)$ can be written as
\begin{equation}
    \begin{split}
        \{ \ber \left( g_i \circ m_{\beta_{i,j}} \circ g_j^{-1} \right ) \} & = \{ \ber \left( g_i \circ g_j^{-1} \right ) \ber(m_{\beta_{i,j}})  \} \\
        & = \{ \ber \left( g_i \circ g_j^{-1} \right ) \nn_{D/S} (\beta_{i,j}) \} 
    \end{split}
\end{equation}
which is visibly a set of transition functions for the bundle $h(\oo_D) \otimes \nn_{D/S}(\Kk)$. The triviality of the bundle $\nn_{D/S}(\Kk)$ will be shown in a lemma given below.
\end{proof}

\begin{lem}
Suppose $f: D \to S$ is proper and smooth of relative dimension $0|1$. Then the natural norm map
$$
\nn_{D/S} : f_*\oo_{D,0}^* \longrightarrow \oo_{S,0}^*
$$
is the trivial morphism.
\end{lem}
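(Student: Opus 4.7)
The plan is to verify the identity $\nn_{D/S}(g) = 1$ locally on $S$ after producing a convenient local splitting of $A := f_*\oo_D$, and then to exploit multiplicativity of the Berezinian in short exact sequences together with the behavior of the Berezinian under parity reversal.

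First I would pin down the local structure of $A$. Since $f$ is proper of relative dimension $0|1$, it is finite and $A$ is locally free of rank $d|d$ as an $\oo_S$-supermodule; by smoothness of relative dimension $0|1$, after passing to a suitable étale cover of $S$ there is a decomposition of $\oo_S$-supermodules $A = B \oplus B\theta$, where $B \subset A$ is a finite étale $\oo_S$-subalgebra of ordinary rank $d$ and $\theta$ is a local odd section satisfying $\theta^2 = 0$. Under the identification $e_i\theta \leftrightarrow \Pi e_i$, the second summand is identified with $\Pi B$ as an $\oo_S$-supermodule, and the ideal $J$ of $A$ generated by odd sections satisfies $J^2 = 0$.

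Next I would carry out the Berezinian calculation. For any even $g = \beta + \gamma\theta \in A_0^*$ (with $\beta \in B$ and $\gamma\theta \in B\theta$), the relation $\theta^2 = 0$ gives $(\beta + \gamma\theta)(b\theta) = \beta b\theta$, so $m_g$ preserves the subsheaf $B\theta \subset A$ and fits into a morphism of short exact sequences
$$
0 \longrightarrow B\theta \longrightarrow A \longrightarrow B \longrightarrow 0,
$$
whose induced maps on both the subsheaf and the quotient are multiplication by $\beta$. Multiplicativity of the Berezinian then gives $\ber(m_g) = \ber(m_\beta|_B)\cdot\ber(m_\beta|_{B\theta})$, and the parity-reversal identity $\ber(T: \Pi M \to \Pi M) = \ber(T: M \to M)^{-1}$ for even $T$, combined with $B\theta \cong \Pi B$, makes the two factors cancel, yielding $\ber(m_g) = 1$.

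This establishes $\nn_{D/S}(g) = 1$ sectionwise on an étale cover of $S$; since $\oo_{S,0}^*$ satisfies the sheaf condition, the norm map is then globally the trivial homomorphism. The main obstacle is the first step, namely producing the local splitting $A = B \oplus B\theta$ with $B$ étale and $\theta^2 = 0$, which requires invoking the structure theorem for smooth superscheme morphisms of relative dimension $0|1$; once this is in place the remainder is a formal two-line manipulation with Berezinians. A minor parity subtlety worth flagging is that the even subsheaf $A_0$ does not decompose as $B_0 \oplus (B\theta)_0$ in any graded-subalgebra sense when $\oo_S$ itself carries odd sections, but the argument above only uses the $\oo_S$-module direct-sum decomposition of $A$ and the $\oo_S$-module splitting of the short exact sequence.
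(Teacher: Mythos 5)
Your proof is correct and follows essentially the same route as the paper's: both rest on the \'etale-local structure of a smooth proper morphism of relative dimension $0|1$ and on the observation that multiplication by an even unit is block lower-triangular with identical even and odd diagonal blocks, so its Berezinian is $\det$ of the even block times $\det^{-1}$ of the odd block, which cancels to $1$. The only cosmetic difference is that the paper splits $D$ further, \'etale-locally, into a disjoint union of $d$ copies of $V\times\mathbb{C}^{0|1}$ and checks each resulting $2\times 2$ Berezinian separately, whereas you keep the degree-$d$ \'etale algebra $B$ intact and cancel the two classical norms $\det(m_\beta|_B)$ and $\det(m_\beta|_{B\theta})$ directly.
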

\begin{proof}
 For sufficiently small $V \subset S$, the preimage $\pi^{-1}(V) \subset D$ is isomorphic to a finite product of copies of $V \times \Cx^{0|1} =: V[\alpha]$, $\alpha$ odd (more precisely this is true in the \'etale topology).
$$
\pi^{-1}(V) \cong \prod_{k=1}^n V \times \Cx^{0|1} = \prod_{k=1}^n V[\alpha_k].
$$
Hence, locally any $g \in f_*\oo_{D,0}^*$ is a direct sum $g = \oplus g_k$ of even invertible functions 
$$
g_k = g^0_k + g^1_k \in \oo_V[\alpha_k] = \oo_V \oplus \oo_V\alpha_k
$$
and so for $\ber m_g$ we see that
$$
\ber m_g = \prod_{k=1}^n \ber m_{g_k}.
$$
The matrix of the endomorphism $m_{g_k}$ with respect to the basis $1, \alpha_k$ is 
$$
m_{g_k} = 
\begin{pmatrix}
    g^0_k & 0 \\
    g^1_k & g^0_k
\end{pmatrix},
$$
with $g^0_k$ invertible. Hence, in fact $\ber m_{g_k} = 1_S$ and the claim follows.
\end{proof}

Before we can move on to our main result, we need a few preliminary facts regarding Berezinians.

\begin{lem} \label{basic_props_ber}
Let $f:X \to S$ be a morphism of superschemes, $\F, \mathcal{G}$ locally free $\oo_X$-modules of ranks $m|n$ and $r|s$ respectively, and $\Kk$ an invertible $\oo_S$-module of rank $1|0$. Then
\begin{enumerate}
    \item $\ber_{\oo_X}(\F \otimes \mathcal{G}) \cong \ber_{\oo_X}(\F)^{r-s} \otimes \ber_{\oo_X}(\mathcal{G})^{m-n}$
    \item \label{euler_pullback} $h(\F \otimes f^*\Kk) \cong h(\mathcal{F}) \otimes \Kk^{s\chi(\F)}$, where $s\chi(\F)$ is the super euler characteristic of $\F$.
\end{enumerate}
\end{lem}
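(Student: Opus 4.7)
My approach is to verify each assertion by a local computation, with the key algebraic input being the behavior of the Berezinian under tensor products of modules over a supercommutative ring.

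For part (1), I would work locally on $X$ where $\F$ and $\mathcal{G}$ are simultaneously trivialized as $\oo_X^{m|n}$ and $\oo_X^{r|s}$. A change of local trivialization for $\F$ alone corresponds to an automorphism $T$ of $\oo_X^{m|n}$, and the induced automorphism of $\F \otimes \mathcal{G}$ is $T \otimes \mathrm{id}_{\mathcal{G}}$. The central algebraic identity needed is
\begin{equation*}
\ber(T \otimes \mathrm{id}_{\mathcal{G}}) = (\ber T)^{\mathrm{sdim}\,\mathcal{G}}.
\end{equation*}
Using the factorization (8) from the excerpt to write $T$ as a product of a block-diagonal factor $\mathrm{diag}(A - BD^{-1}C, D)$ and two super-unipotent matrices (whose Berezinians equal $1$, and whose tensor products with $\mathrm{id}_{\mathcal{G}}$ remain super-unipotent after reordering the basis appropriately), the identity reduces to the block-diagonal case, which is a direct matrix computation in terms of ordinary determinants. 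An identical analysis applied to changes of trivialization on $\mathcal{G}$ gives the exponent $\mathrm{sdim}\,\F = m-n$ on $\ber \mathcal{G}$. Matching the resulting cocycles of transition functions for the two sides yields the asserted isomorphism.

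For part (2), the projection formula, valid since $\Kk$ is locally free of finite rank on $S$, provides canonical isomorphisms
\begin{equation*}
R^i f_*(\F \otimes f^*\Kk) \cong R^i f_*\F \otimes \Kk
\end{equation*}
for each $i$. Applying part (1) with $\mathcal{G} = \Kk$, and using that $\Kk$ has rank $1|0$ so that $\mathrm{sdim}\,\Kk = 1$ and $\ber \Kk = \Kk$ canonically, yields
\begin{equation*}
\ber\bigl(R^i f_*\F \otimes \Kk\bigr) \cong \ber(R^i f_*\F) \otimes \Kk^{\,\mathrm{sdim}\, R^i f_*\F}.
\end{equation*}
Taking the alternating tensor product over $i$ and recalling $h(\F) = \bigotimes_i (\ber R^i f_*\F)^{(-1)^i}$ gives
\begin{equation*}
h(\F \otimes f^*\Kk) \cong h(\F) \otimes \Kk^{\sum_i (-1)^i \mathrm{sdim}\, R^i f_*\F} = h(\F) \otimes \Kk^{s\chi(\F)},
\end{equation*}
as claimed.

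The main obstacle is the subroutine $\ber(T \otimes \mathrm{id}_{\mathcal{G}}) = (\ber T)^{\mathrm{sdim}\,\mathcal{G}}$ underlying part (1): expanding $T \otimes \mathrm{id}_{\mathcal{G}}$ in a homogeneous basis of $\F \otimes \mathcal{G}$ requires careful bookkeeping of the sign rule, particularly because odd generators of $\mathcal{G}$ swap the roles of the even and odd components of $\F$ and thus permute the diagonal blocks of the resulting matrix. Modulo this standard calculation, which is also recorded in Manin's book \cite{man1}, the remainder of the proof is formal manipulation of Berezinians together with the projection formula.
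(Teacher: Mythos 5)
Your proof is correct and takes essentially the same route as the paper's: part (1) is the super analogue of the classical determinant-of-a-tensor-product formula, which the paper simply asserts as an ``easy extension'' of the classical identity while you supply the local verification via the factorization into unipotent and block-diagonal factors, and part (2) is exactly the paper's combination of part (1) with the projection formula. The details you flag as the main obstacle, namely $\ber(T \otimes \mathrm{id}_{\mathcal{G}}) = (\ber T)^{\mathrm{sdim}\,\mathcal{G}}$, check out (the unipotent factors contribute Berezinian $1$ after tensoring, and the block-diagonal factor gives $\det(P)^{r-s}\det(D)^{s-r}$), so nothing further is needed.
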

\begin{proof}
Property (1) is an easy extension from the classical formula
$$
\text{det}\,_{\oo_X}(\mathcal{V} \otimes \mathcal{W}) \cong \text{det}\,_{\oo_X}(\mathcal{V})^{\text{rk}\,\mathcal{W}} \otimes \text{det}\,_{\oo_X}(\mathcal{W})^{\text{rk}\, \mathcal{V}}.
$$
Then (2) follows from (1) and the projection formula $R^if_*(\F \otimes f^*\Kk) \cong R^if_*\F \otimes \Kk$.
\end{proof}

\begin{prop} \label{ber_coho_linear_prop}
Let $f:X \to S$ be a family of supercurves. Given line bundles $\M$ and $\Ll$ of rank $1|0$, flat over $S$, there is a canonical isomorphism
$$
h(\M \otimes \Ll) \cong h(\M) \otimes h(\Ll) \otimes h(\oo_X)^{-1}.
$$
\end{prop}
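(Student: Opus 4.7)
The plan is to follow the super-geometric analogue of the classical Deligne pairing construction, building the isomorphism from Koszul resolutions produced by local sections of $\Ll$ and then applying the triviality result of Proposition \ref{can_iso_1_new}.

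\textbf{Construction.} Working locally on $S$, I would first arrange that $\Ll$ admits a global section $s$ whose vanishing scheme $D = \{s = 0\} \subset X$ is smooth and proper over $S$ of relative dimension $0 | 1$. If $\Ll$ does not directly admit such a section, one may twist by a sufficiently positive relative line bundle and write $\Ll$ as the difference of two line bundles each admitting such a section, then reduce case-by-case using the multiplicative properties in Lemma \ref{basic_props_ber}. The Koszul sequence associated to $s$,
$$
0 \longrightarrow \oo_X \overset{s}{\longrightarrow} \Ll \longrightarrow \Ll|_D \longrightarrow 0,
$$
tensored with $\M$, yields
$$
0 \longrightarrow \M \longrightarrow \M \otimes \Ll \longrightarrow (\M \otimes \Ll)|_D \longrightarrow 0.
$$
The functor $h$ is multiplicative on short exact sequences of $S$-flat locally free sheaves, so it produces canonical identifications
$$
h(\Ll) \cong h(\oo_X) \otimes h(\Ll|_D), \qquad h(\M \otimes \Ll) \cong h(\M) \otimes h((\M \otimes \Ll)|_D).
$$
Since $D \to S$ is smooth and proper of relative dimension $0 | 1$, Proposition \ref{can_iso_1_new} supplies canonical isomorphisms $h(\Ll|_D) \cong h(\oo_D)$ and $h((\M \otimes \Ll)|_D) \cong h(\oo_D)$. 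Canceling the common factor $h(\oo_D)$ between the two displayed identities yields the desired isomorphism $h(\M \otimes \Ll) \cong h(\M) \otimes h(\Ll) \otimes h(\oo_X)^{-1}$.

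\textbf{Main obstacle.} The real difficulty is verifying that the isomorphism so constructed is genuinely canonical, that is, independent of the choice of section $s$ and of the local presentation of $\Ll$. I would deal with this by comparing the isomorphisms arising from two sections $s_0$ and $s_1$ through the linear pencil $s_t = (1-t)s_0 + t s_1$ on the base-changed family $X \times \mathbb{A}^1 \to S \times \mathbb{A}^1$. Over a dense open of $\mathbb{A}^1$ the pencil produces a relative divisor smooth of relative dimension $0 | 1$, and the corresponding construction of $h$ assembles into a line bundle on $S \times \mathbb{A}^1$ whose fibers at $t = 0$ and $t = 1$ recover the two candidate isomorphisms; arguing that this line bundle is pulled back from $S$ (using that line bundles on $S \times \mathbb{A}^1$ trivial on a dense open of $\mathbb{A}^1$ descend) forces the two isomorphisms to coincide. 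A standard descent and gluing argument then produces a global canonical isomorphism on all of $S$. Symmetry in $\M$ and $\Ll$ can be checked afterwards as a consistency verification but is not required for the statement.
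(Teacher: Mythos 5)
Your construction of the isomorphism is essentially the paper's: both cut out a relative divisor $D$ of dimension $0|1$ by a section, compare the two short exact sequences obtained by restricting to $D$ before and after tensoring, and cancel the restricted terms using Proposition \ref{can_iso_1_new}. (The paper phrases this with a section of $\M \otimes f^*\Kk^{-1}$ for a line bundle $\Kk$ on $S$, which costs it an extra appeal to the projection formula and the constancy of $s\chi$; with your choice $\Kk = \oo_S$ that step is vacuous. The paper also only treats the case where a global section exists, so your hand-waving about twisting is no worse than the original.)

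The gap is in your canonicity argument. The discrepancy between the isomorphisms built from $s_0$ and $s_1$ is an invertible \emph{function} on $S \times U$, where $U \subset \mathbb{A}^1$ is the dense open over which the pencil's divisor stays smooth of relative dimension $0|1$ --- it is not a question of a line bundle descending. Invertible functions on $S \times U$ need not be constant in $t$ when $U \subsetneq \mathbb{A}^1$, and, worse, even on all of $S \times \mathbb{A}^1$ they need not be pulled back from $S$, because $\oo_S$ always has nilpotents in the super setting: $1 + \eta_1\eta_2 t$ is a unit for odd functions $\eta_1, \eta_2$ on $S$. So "trivial on a dense open of $\mathbb{A}^1$, hence constant" fails at both steps. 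The paper avoids this by parametrizing \emph{all} choices at once by the proper $S$-scheme $\Pp_S(f_*\M)$: the discrepancy becomes a morphism $\Lambda: \Pp_S(f_*\M) \to \mathbb{G}_{m,S}$, i.e.\ a global unit on the projective bundle, and $\pi_*\oo_{\Pp_S(f_*\M)} = \oo_S$ forces it to be pulled back from $S$; evaluating along the distinguished section given by $t$ shows it equals $1$. To repair your version you would need to compactify the pencil to a $\mathbb{P}^1$ over $S$ and show the comparison extends as a unit over the locus where the divisor degenerates, at which point properness of $\mathbb{P}^1$ gives constancy --- but that extension is exactly the nontrivial content, and as written your argument does not supply it.
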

\begin{proof}
We carry out this proof in the special case that $f_*\M$ is locally free on $S$, and $\M$ admits global sections on $X$. 

Consider the projective bundle $\Pp_S(f_*\M) = \underline{\text{Proj}}_S(\text{Sym}^{\bullet}((f_*\M)^{\vee}))$. Recall that a $T/S$-point of $\Pp_S(f_*\M)$, where $g: T \to S$ is an $S$-scheme
$$
\begin{tikzcd}
                                  & \Pp_S(f_*\M) \arrow[d, "\pi"] \\
T \arrow[r, "g"] \arrow[ru, "g'"] & S                            
\end{tikzcd}
$$
is precisely the data of an invertible sheaf $\Kk^{\vee}$ on $T$ of rank $1|0$ along with a surjection 
$$
g^*(f_*\M)^{\vee} \longrightarrow \Kk^{\vee} \longrightarrow 0.
$$
Equivalently, it is the data of rank $1|0$ invertible sheaf $\Kk$ along with a short exact sequence of vector bundles
$$
0 \longrightarrow \Kk \longrightarrow g^*f_*\M \longrightarrow \mathcal{Q} \longrightarrow 0,
$$
this is the viewpoint we adopt. If $g: T \to S$ is an $S$-scheme, denote by $X', f', \M', \Ll'$ the corresponding objects pulled back to the family $f': X \times_S T \to T$. 

We will show that given a $T/S$-point $g:T \to S$ of $\Pp_S(f_*\M)$, given by the injection $\Kk \to g^*f_*\M$, one can produce an isomorphism
$$
h(\M' \otimes \Ll') \otimes h(\Ll')^{-1} \cong h(\M') \otimes h(\oo_{X'})^{-1}
$$
We will denote this corresponding isomorphism by $\alpha_{T}(\Kk \to g^*f_*\M)$.

The desired isomorphism for the lemma will come by choosing any $S$-point. We will then show that the resulting isomorphism was in fact independent of this choice.  The argument is as follows: the base change diagram is
$$
\begin{tikzcd}
X' \arrow[d, "f'"] \arrow[r, "g'"] & X \arrow[d, "f"] \\
T \arrow[r, "g"]                   & S               
\end{tikzcd}
$$
and we have a natural isomorphism $g^*f_*\M \cong f'_*(g')^*\M = f'_*\M'$ and furthermore the map $\Kk \to f'_*\M'$ corresponds to a map $(f')^*\Kk \to \M$. To avoid unpleasant notation, let us simply drop the prime and write $f = f', X = X'$ etc. Let $D$ be the divisor of this section of $\M \otimes f^*\Kk^{-1}$, then it is an effective relative Cartier divisor and we obtain a short exact sequence
$$
0 \longrightarrow f^*\mathcal{\Kk} \longrightarrow \M \longrightarrow \M|_D \longrightarrow 0 
$$
and a similar one by tensoring with $\Ll$
$$
0 \longrightarrow \Ll \otimes f^*\mathcal{\Kk} \longrightarrow \Ll \otimes \M \longrightarrow \Ll \otimes \M|_D \longrightarrow 0.
$$
These imply 
$$
h(\M) \otimes h(f^*\Kk)^{-1} \cong h(\M|_D),
$$
$$
h(\Ll \otimes \M) \otimes h(\Ll \otimes f^*\Kk)^{-1} \cong h(\Ll \otimes \M|_D).
$$
Now by Proposition \ref{can_iso_1_new}, in fact the right hand sides are canonically identified and thus by property \ref{euler_pullback} of Lemma \ref{basic_props_ber} we conclude
$$
h(\M) \otimes h(\oo_X)^{-1} \otimes \Kk^{-s\chi(\oo_X)} \cong h(\Ll \otimes \M) \otimes h(\Ll)^{-1} \otimes \Kk^{-s\chi(\Ll_X)}.
$$
Finally we recall that the super euler characteristic is in fact constant for a family of supercurves (Theorem \ref{RieRoch1} above), so $s\chi(\oo_X) = s\chi(\Ll_X)$ and we obtain the isomorphism $\alpha_{T}(\Kk \to g^*f_*\M)$.

The construction outline above constructs a map from the $T/S$-points of $\Pp_S(f_*\M)$ to the set
$$
\text{Isom}_{\oo_T}(h(\Ll' \otimes \M') \otimes h(\Ll')^{-1} , h(\M') \otimes h(\oo_{X'})^{-1}).
$$

By assumption, we can find a global section $t \in \Gamma(X, \M)$, which we view as a nowhere zero morphism $t: \oo_S \to f_*\M$. For any base change $g: T \to S$, this gives rise to natural section $t' : \oo_{T} \to f'_*\M' = g^*f_*\M$ and hence a natural $T/S$ point of $\Pp_S(f_*\M)$
$$
0 \longrightarrow \oo_T \overset{t'}{\longrightarrow} g^*f_*\M \longrightarrow \mathcal{Q} \longrightarrow 0.
$$
Thus for each $S$-scheme $g:T \to S$, the section $t$ gives a distinguished element $\alpha_T^0 : = \alpha_{T}(\oo_T \overset{t'}{\to} g^*f_*\M)$. As any two isomorphisms of line bundles differ by a global invertible function, we can view our construction as a map from the $T/S$-points of $\Pp_S(f_*\M)$ to the set $\Gamma(T, \oo_T^*)$, associating the $T/S$-point $g:T \to S$, $\Kk \to g^*f_*\M$ to the unique global invertible function $\lambda$ on $T$ so that
$$
\alpha_{T}(\Kk \to g^*f_*\M) = \lambda \alpha_T^0.
$$
We will denote this $\lambda$ by $\lambda_{T}(\Kk \to g^*f_*\M)$. One can easily check that this assignment
$$
\{ g:T \to S, \,\, \Kk \to g^*f_*\M \} \longmapsto \lambda_{T}(\Kk \to g^*f_*\M)
$$
is functorial in $T$, and hence by Yoneda's lemma, this is equivalent to a $S$-morphism
$$
\Lambda: \Pp_S(f_*\M) \longrightarrow \mathbb{G}_{m,S}
$$
which in turn is equivalent to a choice of global section $\gamma \in \Gamma(\Pp_S(f_*\M), \oo_{\Pp_S(f_*\M)})$. We claim $\gamma = 1$. Indeed, let $\sigma$ be the section associated to the distinguished $S/S$-point given by $t:\oo_S \to f_*\M$. We then have the commutative diagram
$$
\begin{tikzcd}
                                                & \Pp_S(f_*\M) \arrow[d, "\pi"] \arrow[r, "\Lambda"] & {\mathbb{G}_{m,S}} \arrow[ld] \\
S \arrow[r, "\text{id}_S"] \arrow[ru, "\sigma"] & S                                                  &                              
\end{tikzcd}
$$
which on global sections reads
$$
\begin{tikzcd}
                                     & {\Gamma(\Pp_S(f_*\M), \oo_{\Pp_S(f_*\M)}))} \arrow[ld, "\sigma^{\#}"'] & {\Gamma(S,\oo_S^*)[x,x^{-1}]} \arrow[l, "\Lambda^{\#}"] \\
{\Gamma(S,\oo_S^*)}  & {\Gamma(S,\oo_S^*)} \arrow[l, "1_S"] \arrow[ru] \arrow[u, "\pi^{\#}"]                   &                                                        .
\end{tikzcd}
$$
Note that the morphism $\pi^{\#}: \Gamma(S, \oo_S^*) \to \Gamma(\Pp_S(f_*\M), \oo_{\Pp_S(f_*\M)}))$ is an isomorphism by basic properties of projective space, and hence $\sigma^{\#}$ is its inverse. 

Hence $\gamma = \Lambda^{\#}(x) = 1$ if and only if $\sigma^{\#}(\gamma) = 1_S$, but this is trivial since by construction and functoriality we have that
\begin{equation}
    \begin{split}
        \sigma^{\#}(\gamma) & = (\sigma^{\#} \circ \Lambda^{\#}) (x) \\
        & = \lambda_{S}(\oo_S \overset{t}{\to} f_*\M) \\
        & = 1_S.
    \end{split}
\end{equation}

The knowledge that $\gamma = 1$, then implies the desired isomorphism is independent of the $S$-point chosen. Indeed for any $S/S$-point $\text{id}: S \to S$ of $\Pp_S(f_*\M)$, given by $\Kk \to f_*\M$, we get a corresponding section $\sigma'$ of $\pi$. By construction we then have $(\sigma')^{\#}(\gamma) = \lambda_{S}(\Kk \to f_*\M) = 1_S$. This completes the proof.

\end{proof}

This immediately gives a proof the super Mumford isomorphism (Proposition \ref{smumiso}).

\begin{proof}(Proof of Proposition \ref{smumiso}) We apply Proposition \ref{ber_coho_linear_prop} to $\Ll = \Pi \w$ and $\mathcal{M} = \Pi^{j-1} \w^{j-1}$ and obtain utilizing Serre duality
$$
\lambda_{j/2}^{(-1)^j} = \lambda_{1/2}^{-2}\lambda_{(j-1)/2}^{(-1)^{j-1}}.
$$
Inducting on $j$ yields the result. In particular, $\lambda_{3/2} \cong \lambda_{1/2}^5$.

\end{proof}
\include{chapters/app_glossary}

\end{document}